\def\final{1}
\colorlet{myblue}{blue!80!black}
\colorlet{mygreen}{green!40!black}
\journal
\newcommand{\R}{\mathbb{R}}
\newcommand{\Rplus}{\R_{\geq 0}}
\newcommand{\Rplusplus}{\R_{> 0}}
\newcommand*\diff{\mathop{}\!\mathrm{d}}
\newcommand{\abs}[1]{\left\lvert #1 \right\rvert}
\newcommand{\OPT}{\ensuremath{\mathsf{OPT}}\xspace}
\renewcommand{\epsilon}{\varepsilon}
\newtheorem{theorem}{Theorem}[section]
\newtheorem{corollary}[theorem]{Corollary}
\theoremstyle{definition}
\newaliascnt{lemma}{theorem}
\newtheorem{lemma}[lemma]{Lemma}
\newaliascnt{claim}{theorem}
\newtheorem{claim}[claim]{Claim}
\newaliascnt{definition}{theorem}
\newtheorem{definition}[definition]{Definition}
\newenvironment{nestedproof}{\begin{proof}}{\end{proof}}
\newcommand{\neil}[1]{}
\newcommand{\nnote}[1]{}
\newcommand{\laura}[1]{}
\newcommand{\leon}[1]{}
\newcommand{\todo}[1]{}
\newcommand{\TODO}[1]{}
\let\TODO\todo
\renewcommand{\todo}[1]{{\color{red!50!black}\em \small [TODO: #1]}}
\definecolor{lightblue}{rgb}{0.38,0.82,0.90}
\colorlet{lightyellow}{yellow!60!white}
\newcommand{\neil}[1]{\TODO[color=lightyellow,inline]{\textbf{Neil:} #1}}
\newcommand{\nnote}[1]{{\color{yellow!50!black}\em \small [Neil: #1]}}
\newcommand{\laura}[1]{{\color{lightblue!70!black}\em \small [Laura: #1]}}
\newcommand{\leon}[1]{{\color{blue!50!black}\em \small [Leon: #1]}}
\newcommand{\answer}[1]{{\color{black}#1}}
\renewcommand*\env@matrix[1][*\c@MaxMatrixCols c]{%
	\hskip -\arraycolsep
	\let\@ifnextchar\new@ifnextchar
	\array{#1}}
\newcommand{\fin}{f^+}
\newcommand{\fout}{f^-}
\newcommand{\Fin}{F^+}
\newcommand{\Fout}{F^-}
\newcommand{\Gloc}{\ensuremath{\hat G}}
\newcommand{\Omegaloc}{\ensuremath{\hat \Omega}}
\newcommand{\Eloc}{\ensuremath{\hat E}}
\newcommand{\linit}{l^\circ}% initial labels
\newcommand{\hatlinit}{\answer{\hat l}}% initial labels
\newcommand{\Einf}{E^{\infty}}
\newcommand{\Epot}{\tilde{E}}
\newcommand{\LFuncs}{\mathcal{L}}
\newcommand{\leps}{\l^{(\epsilon)}}
\newcommand{\Phinom}{\Phi^{\circ}}
\newcommand{\nnorm}[1]{\|#1\|}
\newcommand{\ldmax}{\kappa}
\newcommand{\Traj}{\mathcal{X}}
\newcommand{\ldmin}{\kappa_{\min}}
\newcommand{\nuinit}{\nu^{\circ}}
\newcommand{\tauinit}{{\tau^{\circ}}}
\newcommand{\Iss}{I_{ss}}
\newcommand{\Cpot}{C_{\Phi}}
\newcommand{\CCL}{\cite{cominetti2015existence}}
\newcommand{\CCO}{\cite{cominetti2021long}}
\crefname{claim}{Claim}{Claims}
\crefname{lemma}{Lemma}{Lemmas}
\crefname{theorem}{Theorem}{Theorems}
\renewcommand{\l}{\ell} 
\journal
\title{Continuity, Uniqueness and Long-Term Behavior of Nash Flows Over Time}
\author{ }
\date{ }
\title{Continuity, Uniqueness and Long-Term Behavior of Nash Flows Over Time
	\thanks{N.O.\ is partially supported by NWO Vidi grant 016.Vidi.189.087.}}
\author[1]{Neil Olver}
\author[2]{Leon Sering}
\author[3]{Laura Vargas Koch}
\affil[1]{Department of Mathematics, London School of Economics and Political Science}
\affil[2]{Department of Mathematics, ETH Z\"urich}
\affil[3]{School of Business and Economics, RWTH Aachen University}
\date{ }
\begin{document}
	\ifdefined\journal
	\onehalfspacing
	\fi
	
	\maketitle
	
	\thispagestyle{empty}
	\ifdefined\journal
	\vspace{-2cm}
	\fi
	
	\begin{abstract}
		We consider a dynamic model of traffic that has received a lot of attention in the past few years.
Users control infinitesimal flow particles aiming to travel from an origin to a destination as quickly as possible.
Flow patterns vary over time, and congestion effects are modeled via queues, which form whenever the inflow into a link exceeds its capacity.
Despite lots of interest, some very basic questions remain open in this model.
We resolve a number of them \answer{in the single-commodity setting}:
\begin{itemize}
    \item We show \emph{uniqueness} of journey times in equilibria. %Previously, existence of equilibria but not uniqueness was known.
    \item We show \emph{continuity} of equilibria: small perturbations to the instance or to the traffic situation at some moment cannot lead to wildly different equilibrium evolutions.
    \item We demonstrate that, assuming constant inflow into the network at the source, equilibria always settle down into a ``steady state'' in which the behavior extends forever in a linear fashion.
\end{itemize}
One of our main conceptual contributions is to show that the answer to the first two questions, on uniqueness and continuity, are intimately connected to the third.
%
%Our result also strongly suggest that uniqueness and continuity, despite initial appearances, cannot be resolved by analytic techniques; rather, they are related to very combinatorial aspects of the model.
%\nnote{I'm maybe a little less confident in this statement than I used to be. The question is whether one could hope to prove uniqueness/continuity without simultaneously demonstrating the long-term behaviour result, for which I think the current proof is probably about the only one. I thought our connection went both ways, but there is an escape route. Namely, that we only need to show the long-term behaviour result for acyclic networks. Hmm, I guess I still think there's no other proof with this restriction. 
%I toned down the statement a bit anyway, and in the text we can get into it more.}
To resolve the third question, we substantially extend the approach of \cite{cominetti2021long}, who show a steady-state result in the regime where the input flow rate is smaller than the network capacity. 

%We give the first proof that equilibria are \emph{unique} in this model: given a fixed network and initial conditions, queue lengths in an equilibrium are fully determined.
%This turns out to be closely related to another crucial question, which we also resolve: \emph{continuity} of equilibria. 
%If an equilibrium is disturbed in some way, does the disturbed equilibrium remain close in some sense to the undisturbed equilibrium, or 
%Such a property is \emph{crucial} for the model to have any bearing on reality. 
%Clearly, real traffic situations will not \emph{exactly} match up with equilibria in the model, even under the most optimistic assumptions.
%
%
%
%Surprisingly, we do this by demonstrating a connection to a seemingly very different question, namely: \emph{does a dynamic equilibrium eventually reach a ``steady state'' where all queues are fixed or grow at constant rate?}
%This question was previously answered for the case where the inflow rate is not larger than the network capacity~\cite{CCO}, 
%but the more general result needed here requires substantial new technical ideas.
%The proof is intrinsically combinatorial, 
%Our results not only demonstrate continuity and uniqueness of equilibria, but show
%that these properties are not ``merely analytic'' (as they may first appear), 

	\end{abstract}

	\pagenumbering{arabic}
	
	\section{Introduction}\label{sec:intro}

Motivated especially by congestion in transportation networks and communication networks, 
the study of routing games has received a huge amount of attention.
Most of this work concerns \emph{static} models; that is, the model posits a constant, unchanging demand, and a solution is represented by some kind of flow. 
Congestion effects are modeled via a relationship between the amount of traffic using a particular link, and the resulting delay experienced.
Many variants have been considered: nonatomic games (where each individual player controls an infinitesimal amount of flow), atomic games with a finite number of players, multi-commodity and single-commodity settings, different choices of congestion functions, and much more.
Much is understood about equilibrium behavior: conditions for existence and uniqueness; bounds on the \emph{price of anarchy} and various related notions; the phenomenon of Braess's paradox; tolling to improve equilibrium efficiency; and so on (see~\cite{RoughgardenBook} for a survey of the area).

While these static models can be a good approximation in many situations, 
this is not always the case.
More recently, there has been a lot of interest in models that are explicitly \emph{dynamic}\,---\,that is, time-varying.
A canonical situation to consider is morning rush-hour traffic; clearly there are substantial variations of traffic behavior and congestion over time. 
Another motivating setting is the routing of packets in communication networks, which traverse through a network of limited bandwidth over time, and are processed in queues at the nodes of the network. 

In the static case, models typically allow for some relationship between traffic density and delay to be posited (for example, a linear relationship, or something more refined based on empirical data).
This is substantially more difficult to do in the dynamic setting.
It is rather crucial to maintain a \emph{first-in-first-out} property for flow on an arc; overtaking is questionable from a modeling perspective, and also introduces various pathologies.
This property is unfortunately easy to violate; for example, if one attempts to specify the delay a user experiences as a function of the inflow rate into a link at the moment of entry, a sharp decrease in inflow will lead to overtaking.
Well-behaved models in this generality require detailed modeling of traffic along links (as opposed to describing the traffic on a link via a single time-varying value); an example of a model taking this approach is the LWR, or kinematic wave, model \cite{lighthill1955kinematic,richards1956shock}.
These models are very challenging to analyze, even for a single link.
The survey by Friesz and Han~\cite{friesz2019mathematical} discusses approaches towards models with quite general congestion behavior.

Another direction, with a distinct line of literature, concerns discrete packet models.
In these models, each user controls an indivisible packet that must be routed through the network; see, e.g.,~\cite{caoatomic,Ismaili2017RoutingGamesOverTimeFIFO,KulkarniM15, scarsini2018dynamic, tauer2021fifo}. 
Congestion in these models is modeled through queues.
In this paper we will be concerned with essentially the continuous analog of these packet models.
This is the \emph{fluid queueing model}, also known as the \emph{deterministic queueing model} or the \emph{Vickrey bottleneck model} \cite{vickrey1969congestion}. 

In the fluid queueing model, each link has a \emph{capacity} and a \emph{transit time}.
If the inflow rate into the link always remains below its capacity, then the time taken to traverse the link is constant, as given by the transit time.
However, if the inflow rate exceeds the link capacity for some period, a queue grows on the entrance of the link.
The delay experienced by a user is then equal to the transit time, plus whatever time is spent waiting in the queue.
As long as there is a queue present it will empty at rate given by the link capacity; depending on whether the inflow rate is smaller or larger than the capacity, this queue will decrease or increase size. 
(See \Cref{fig:link}; the notation will be fully described in \Cref{sec:model}.)
Note that the queues are considered to be \emph{vertical}, meaning they can hold an unlimited amount of flow (\emph{horizontal} or \emph{spatial} queues have been considered in~\cite{sering2019spillback,sering2020diss}). 
\begin{figure}[tbh]
	\centering
	\includegraphics{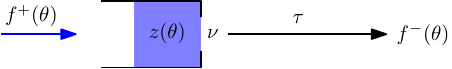}
	\caption{The dynamic of an arc at snapshot time $\theta$. The inflow $f^+(\theta)$ and outflow $f^-(\theta)$ describe the flow entering or leaving the arc at time $\theta$. The amount of flow in the queue $z_e(\theta)$ can leave the queue with rate $\nu$ and afterwards traverses the arc, which takes $\tau$ units of time.}
	\label{fig:link}
\end{figure}

Consider a network of such links, and suppose that all traffic in the network has the same origin and destination.
Clearly this is a restrictive assumption, but already this restricted setting is very challenging (moreover, we will argue in \Cref{sec:single_com} that our results extend to the more general single-commodity setting where multiple sources and sinks are allowed).
%Clearly this is a restrictive assumption, but the single-commodity setting is already very challenging.
%
Starting at time $0$, users are released into the system at the origin at some constant rate;
each user controls an infinitesimal amount of flow. 
Their joint choices yield a \emph{flow over time}, as first introduced by Ford and Fulkerson in the fifties~\cite{ford_fulkerson_1956}.
But here, each user is self-interested and wants to find a \emph{quickest} route to the destination.
This will depend on the choices of the other users, since these will impact the queue lengths.
A Nash equilibrium in this model is then a joint choice of routes for all users, such that all users are satisfied in hindsight with their choices: no user can switch to an alternative route and arrive at the destination at a strictly earlier time.
Note that this means that when making a decision, it is not the queueing delays \emph{now} that matter, but the queues that the user would actually experience upon reaching an arc, which may be different.
An alternative notion of ``instantaneous dynamic equilibria''~\cite{graf2020dynamic} has also been considered, where users base decisions on current queue lengths rather than what they will actually experience, and more recent work by Graf, Harks, Kollias and Merkl~\cite{Graf_Harks_Kollias_Markl_2022} considers a much more general information framework where users use predictions of future congestion patterns.
This paper concerns only Nash equilibria.

There has been some substantial progress in understanding equilibria in this model.
Koch and Skutella \cite{koch2011nash} (see also \cite{koch2012phd}) were the first to study it using tools from combinatorial optimization.
They uncovered an intricate structure in the time derivatives of equilibria, which they could precisely describe as a solution to a certain nonlinear system (which they called \emph{thin flow with resetting}, and which we will encounter later).
Assuming solutions to this system always exist, they showed how an equilibrium could be constructed by essentially integrating; 
the resulting equilibrium has the property that all queue lengths are piecewise linear.
They call a maximal interval of departure times on which queues are affine a \emph{phase}.
(Note that there is no guarantee that there are a finite number of phases, even within a compact interval; 
that is, the algorithm is not known to be finite).
A similar integration approach can be found in the transportation literature; see Kuwahara and Akamatsu~\cite{kuwahara1993dynamic} and the survey by Iryo~\cite{Iryo13}.

Existence of equilibria is a crucial question. Zhu and Marcotte~\cite{ZM00} showed existence of equilibria for a somewhat general model where congestion satisfies a ``strong FIFO'' property; however, the fluid queing model does not have this property.
Mounce~\cite{M06} showed existence in the fluid queueing model if each possible route a user may take contains at most one bottleneck arc of finite capacity. A general existence result, without any such restrictions, was shown by Cominetti, Correa and Larr\'e~\CCL{}. 
Results for models with departure time choice and more general congestion behavior were obtained by Meunier and Wagner~\cite{MW10} and 
    Han, Friesz and Yao~\cite{HFY13}.
% (henceforth, CCL). 
%\laura{This is also just for the single-source-single-sink setting. Can we maybe say already here that this is the case which we want to focus on in this paper?}
%\nnote{We do say this a few paragraphs above.}
Here we focus our discussion on the approach taken by \CCL, as it will be most relevant to our approach.
They showed that the nonlinear system of Koch and Skutella always has a solution (in fact a unique solution), and hence that the integration procedure always succeeds in constructing an equilibrium.
Surprisingly, it is an open question as to whether the thin flow equations can be efficiently solved\footnote{The only known positive result is an algorithm on series-parallel graphs \cite{kaiser2020computation}.}; in other words, whether there is a polynomial-time algorithm to compute the next phase, given that the behavior for all previous phases has been computed. 

A good upper bound on the price of anarchy (suitably defined\,---\,it is necessary to consider average arrival time rather than average journey time as the cost function to have any hope of positive results) also remains an open question.
A constant $\frac{e}{e-1}$ bound is conjectured, and some partial progress has been made~\cite{correa2019price,bhaskar2015stackelberg}.
Cominetti, Correa and Olver~\CCO{} answered a perhaps even more basic question about equilibrium efficiency: 
if the inflow at the source is not larger than the network capacity (that is, the minimum capacity of a cut separating $s$ and $t$), 
do queue lengths and hence journey times remain bounded in an equilibrium? 
They give an answer in the affirmative; this gives at least some sense in which equilibria are well-behaved and at least not disastrously inefficient. 

In this paper, we answer some significant open problems in the model.

\paragraph{Uniqueness.}
Are equilibria unique in this model? After existence, this is among the first questions to ask about equilibria in any model.

Some care is needed in phrasing the question. 
Imagine an inflow rate of 1 at the source, and two parallel links between $s$ and $t$, both of capacity 1 or larger.
Then flow can be split arbitrarily (and in an arbitrarily time-varying way) between the two links, and the result is an equilibrium.
So there is no uniqueness at the level of flows.
Instead, the right question concerns the \emph{journey times} of users in equilibrium.
All flow particles leaving at time $\theta$ incur the same (smallest possible) journey time in an equilibrium. 
The correct uniqueness question is not about the routes chosen in equilibrium, but the costs (journey times) experienced. 
Is this unique?

This question has been discussed extensively in the literature.
    While we are focused here on the setting of a single origin and destination, Iryo~\cite{iryo2011multiple} showed that it can fail to hold if there are multiple origins and destination (though Iryo and Smith give positive results for ``unidirectional'' networks~\cite{iryo2018uniqueness}).
    Uniqueness has been demonstrated in the single-origin single-destination setting under additional assumptions of various strengths, e.g.,~\cite{kuwahara1993dynamic, MounceSmith, cominetti2015existence}.
%Uniqueness results have been shown under various additional assumptions, or for similar settings that do not include the bottleneck model \todo{...}
	%Smith The Stability of a Dynamic Model of Traffic Assignment – An Application of a Method of Lyapunov, not sure}.
But to the best of our knowledge, none of these works fully address the question for the bottleneck model without any additional assumptions.
A somewhat subtle\,---\,but as we will see, very important\,---\,aspect of this is explained clearly in \CCL.
They give a proof of uniqueness\,---\,\emph{but only if one restricts to right-differentiable solutions}.
All other works based on extending an equilibrium over time, to the best of our knowledge, explicitly or implicitly require this right-differentiability assumption.
%\CCL{} give a partial answer to this question. 
%They show that if one restricts to ``right differentiable'' solutions (the precise meaning of this we postpone until later), then equilibria are unique with respect to journey times.
%\answer{Uniqueness results of a similar type have also been obtained by \cite{kuwahara1993dynamic} under some restrictions on the travel times and by \cite{iryo2017uniqueness} even in multi-commodity instances of so-called unidirected networks (which generalizes the single-commodity case).}\todo{I will add a few more references.}
First impressions might be that such a restriction is of a technical nature, and that this \emph{essentially} shows uniqueness, but this is misleading.
The precise reason for this requires more technical preliminaries, and we postpone this until \Cref{sec:overview}.
For now, we remark that there is no a priori reason why equilibria not respecting this right-differentiability requirement could not exist, nor anything ``unphysical'' about them.

\answer{We prove that equilibria are indeed unique in the single-origin single-destination setting with constant inflow, without such assumptions.} %\todo{Update to be clear we're only talking about our single-commodity constant-inflow setting.}
To further motivate why this uniqueness result is important, we will shortly discuss how it relates to the next question we resolve.

\paragraph{Continuity.}
If an equilibrium is disturbed in some way, does the disturbed equilibrium remain close in some sense to the undisturbed equilibrium, or could it veer off wildly in another direction?
Such a property is \emph{crucial} for the model to have any bearing on reality. 
Clearly, real traffic situations will not \emph{exactly} match up with equilibria in the model, even under the most optimistic assumptions.
Individual cars or packets are not really infinitesimally small; this is an approximation.
Given two routes that have slightly different but almost-equal journey times, a user may not notice or be sufficiently concerned, and pick the slightly longer one.
Actual travel delays on a link, or link capacities, could vary slightly over time due to all sorts of factors.
All of this is of course obvious\,---\,there is no expectation that the abstract, simplified model would capture all these real-world aspects.
Nonetheless, the hope is that the model is a good one, in the sense that it captures qualitatively important aspects of the real situation, \emph{and} that the closer some more complex (artificial or real-world) situation matches the conditions of the model, the closer the behavior of the model would match the more complex setting.
If slightly perturbing the situation at some moment can lead to completely different equilibrium behavior, then any conclusions drawn from the model must be treated with extreme skepticism. 
Does the model really have anything to tell us, in this case?

It might seem that continuity should be a straightforward property to show; or at least, that it should be a matter of proficiency with analytic techniques. 
Similarly to the uniqueness question, this turns out to not be the case at all, and for very similar reasons.
Again, we hold the explanation of this to the technical overview.

We demonstrate continuity of equilibria, in various senses.
We show that the equilibrium depends continuously on the initial conditions (in terms of initial queue lengths).
Alternatively, if the equilibrium is ``bumped'' at some moment in time, the equilibrium does not change much.
We also demonstrate continuity with respect to changes in the instance; if we change the capacity and/or transit times of some arcs in the network by very small amounts, the resulting equilibrium will not change much.

We should say that our results do not suffice to show the strongest forms of stability one might like, and that were hinted at in the motivation above. 
Our results only apply to settings where except for a single exceptional moment, or some finite number of moments, behavior is precisely as in a dynamic equilibrium.
They do not imply that a version of the model with small but indivisible agents behaves similarly to the nonatomic model, or that an approximate equilibrium where agents take only approximate shortest paths is similar to an exact equilibrium.
We discuss in detail such questions, as well as followup work addressing them, in the conclusion (\Cref{sec:conclusion}).

%\modified{We are able to show continuity of equilibria with respect to multiple natural perturbations. More concrete, if we change the capacity and/or transit times of some arcs in the network by very small amounts, the resulting equilibrium will not change too much. The same is true if we perturb the initial conditions by a bit.}

%We have already mentioned that the fluid queueing model is an approximation of reality where ``small'' users are replaced by infinitesimal ones.
%It is assumed that this replacement does not make much difference. 
%To actually justify this, one would need to show some form of convergence of equilibria for discrete packet model to dynamic equilibria in the fluid queueing model.
%No such rigorous justification has been made so far, however:
%empirical evidence has been provided~\cite{ZiemkeEtAl2020FlowsOverTimeAsLimitOfMATSim}, and rigorous results have been obtained for the \emph{static} model~\cite{cominetti2020convergence}.
%As some first steps, the convergence of the underlying packet model to the fluid queueing model has been proven (without considering equilibrium behavior)~\cite{sering2021convergence}. 
%We anticipate our work being a main ingredient in proving convergence of equilibria; it should be clear that without continuity, there is little hope for such a convergence result. 
%We leave this for followup work.

\medskip

It should seem natural, given the above discussion, that uniqueness and continuity are related.
The following question, however, appears entirely unrelated. 
Surprisingly, this is not the case: one of our main conceptual contributions is to show how uniqueness and continuity follow from resolving it.

\paragraph{Long-term behavior.}
As already mentioned, \CCO{} prove that if inflow at the source is constant (starting from some initial time) and not larger than the capacity of the network, as measured by the minimum capacity of an $s$-$t$-cut, then queues remain bounded in an equilibrium.
Their proof actually shows something stronger:
Namely, they show that as long as this inflow condition is satisfied, after some (instance dependent) time the equilibrium will reach a ``steady state'', in which all queues remain constant from this time forward.

This raises a question. What happens if the inflow rate is larger than the minimum $s$-$t$-cut capacity? 
Clearly, queues can no longer stay bounded.
But one may still ask whether the evolution eventually reaches a steady state\,---\,which no longer means a situation where no queues change (clearly impossible), but a situation where queues change linearly, forever into the future.

Just like the \CCO{} result, this can be viewed as a positive statement about the efficiency of equilibria.
It turns out that there is only one possible rate vector at which queues can grow in a steady state.
So if steady state is reached, it means that over a sufficiently long time horizon, the equilibrium does use the network in the most efficient way possible; queues only grow because they must, due to bottlenecks in the network.
Further, this queue behavior at steady state can be efficiently characterized.

While this result is already interesting in its own right, a very slight generalization of this also turns out to be the main technical ingredient in our proof of uniqueness and continuity.
Our connection between these two questions shows in fact that continuity is \emph{equivalent} to convergence to steady-state in an appropriately defined subnetwork of the instance.
Our proof of convergence to steady state is, like the one of \CCO{}, based on a non-obvious potential related to a primal-dual program that characterizes possible steady-state situations.
In this sense our result on continuity, despite initial appearances, is fundamentally combinatorial rather than analytic.

\paragraph{Piecewise-constant inflows.}
While we have described the model and our results in the context of constant inflow, it is rather straightforward to see that continuity and uniqueness of equilibria with piecewise-constant inflows is an immediate consequence. 

Of course, our result on reaching steady state does not hold for piecewise-constant inflows in general\,---\,if the inflow is varying forever, we cannot expect queues to behave linearly.

\paragraph{Single-commodity instances.}
All our results can be extended to the more general single-commodity setting. % with several origins and destinations.
In particular, they apply if all traffic has a common destination, but comes from multiple sources, each with a given inflow rate.
This is of course very far from the generality where users have distinct origins and destinations, but is quite relevant for modeling morning peak traffic entering the central business district of a city.

	\section{Model and preliminaries} \label{sec:model}

An instance is described by a directed graph $G = (V, E)$ with a source $s$ and sink $t$, where each arc is equipped with a transit time $\tau_e \geq 0$ and a capacity $\nu_e > 0$. At $s$ we have a constant \emph{network inflow rate} of $u_0 > 0$, which begins at time $0$.
We may assume that every node in $G$ is both reachable from $s$, and can reach $t$.
For technical convenience, we will follow previous works and assume that $G$ has no directed cycle consisting of arcs with transit time $0$.

We use the notation $\delta^-(v)$ and $\delta^+(v)$ to denote the set of incoming and outgoing arcs at $v$, respectively, and similarly $\delta^-(S)$ and $\delta^+(S)$ for arcs entering or leaving a set $S$. Moreover, let $E[S]$ denote arcs with both end points in a set $S$.
We define $[z]^+ \coloneqq \max\{0, z\}$.

\subsection{Flows over time}

A \emph{flow over time} is given by a family of locally-integrable 
functions $\fin_e : \Rplus \to \Rplus$ and $\fout_e: \Rplus \to \Rplus$ that 
describe the inflow and outflow rate at each arc $e \in E$ at every point in time.
The \emph{cumulative} inflows and outflows are given by the absolutely continuous functions $\Fin_e(\xi) \coloneqq \int_0^\xi \fin_e(\xi')\diff\xi'$ and
$\Fout_e(\xi) \coloneqq \int_0^\xi \fout_e(\xi')\diff\xi'$, respectively.
A flow particle in the queue of link $e$ at time $\xi$ must have entered the link at time $\xi$ or before, and will not leave the link until after time $\xi+\tau_e$.
Thus the \emph{queue volume} on a link $e$ at time $\xi$, denoted by $z_e(\xi)$, is 
given by
$z_e(\xi) \coloneqq \Fin_e(\xi) - \Fout_e(\xi+\tau_e)$.
Since queues always empty at maximum rate, the amount of time spent waiting in a queue is simply the queue volume upon arrival divided by the arc capacity.
%The \emph{cumulative} flow rates are denoted by $F_e^+$ and $F_e^-$ and the \emph{queue sizes} are given by $z_e(\theta) \coloneqq F_e^+(\theta) - F_e^-(\theta + \tau_e)$.
%
We call a flow over time \emph{feasible} if it satisfies flow conservation at every node $v \in V$ for every point in time $\xi$:
\[\sum_{e \in \delta^+(v)} \fin_e(\xi) - \sum_{e \in \delta^-(v)} \fout_e(\xi) \begin{cases} = 0 & \text{ for } v \in V \setminus \set{s, t},\\
	= u_0 & \text{ for } v = s,\\
	\leq 0 & \text{ for } v = t,
\end{cases}\]
and if the queues empty at a rate given by the capacity:
\[
\fout_e(\xi+\tau_e) = \begin{cases} \nu_e & \text{ if } z_e(\xi) > 0,\\
	\min\set{\nu_e, \fin_e(\xi)} & \text{ otherwise.} 
\end{cases}
\]
We will start at time $0$ with an empty network, requiring that in a feasible flow $f_e^-(\theta)=0$ for all $\theta<\tau_e$ and all $e \in E$.
The \emph{earliest arrival time} at $w$ of a particle starting at $s$ at time $\theta \in \Rplus$ is given by
\[\l_w(\theta) \coloneqq \begin{cases}
	\theta & \text{if } w=s,\\
	\min\limits_{e = vw \in \delta^-(w)} \l_v(\theta) + \frac{z_e(\l_v(\theta))}{\nu_e} + \tau_e& \text{otherwise.}
\end{cases}
\]
We will often refer to the collection of earliest arrival times for a fixed $\theta$ as a \emph{labeling}. 
Given earliest arrival times $\l$, we define the associated \emph{queueing delays} (or \emph{queueing delay functions}) by $q_e(\theta) \coloneqq z_e(\l_v(\theta)) / \nu_e$ for each $e=vw \in E$.
Note that $q_e(\theta)$ is the queueing delay on arc $e$ for a flow particle departing $s$ at time $\theta$ and taking a shortest path to $e$.
For any labeling $\l(\theta)$, we define the \emph{active} arcs $E'_{\l(\theta)}$ and the \emph{resetting} arcs $E^*_{\l(\theta)}$ as %for some time~$\theta$ as
\[ E'_{\l(\theta)} \coloneqq \set{e = vw \in E | \l_w(\theta) = \l_v(\theta) + q_e(\theta) + \tau_e}
\quad \text{and} \quad 
E^*_{\l(\theta)} \coloneqq \set{e = vw \in E | q_e(\theta) > 0} \]
respectively.
We will write $E'_\theta$ and $E^*_\theta$ as shorthand, if the choice of $\l$ is clear.
So $E'_\theta$ consists of all arcs that lie on a shortest path from $s$ to some node in the network, from the perspective of a user that departs at time~$\theta$; and $E^*_\theta$ is the set of arcs where such a user would find a queue if they enter the arc as early as possible.

\subsection{Equilibria}

An \emph{equilibrium} (also referred to as a \emph{dynamic equilibrium} or a \emph{Nash flow over time}) is a feasible flow over time in which almost all flow particles travel along a shortest path from $s$ to $t$, i.e., only along active arcs.
In this case, resetting arcs are always active and the active and resetting arcs are characterized as follows (see \cite[Proposition 2]{cominetti2015existence}):
\begin{equation}\label{eq:actres}
	E'_{\l(\theta)} = \set{e = vw \in E | \l_w(\theta) \geq \l_v(\theta) + \tau_e} \hspace{0.5em}%\quad 
	\text{and} %\quad 
	\hspace{0.5em}
	\, E^*_{\l(\theta)} = \set{e=vw \in E | \l_w(\theta) > \l_v(\theta) + \tau_e}.
\end{equation}
For an arc $e=vw \in E^*_\theta$, the delay experienced by such a user departing at time $\theta$ is given by $q_e(\theta) = \l_w(\theta) - \l_v(\theta) - \tau_e$.

As proven in \CCL, a feasible flow over time is an equilibrium if and only if $\Fin_e(\l_v(\theta)) = \Fout_e(\l_w(\theta))$ for all arcs $e = vw$ and all departure times $\theta$. 
Define $x_e(\theta) \coloneqq \Fin_e(\l_v(\theta))$ for all $e$ and $\theta$. 
For an equilibrium flow, the derivative of $x_e$ at time $\theta$, which exists almost everywhere, can be interpreted as a flow describing what proportion of flow particles departing at time $\theta$ use arc $e$; $x'$ is an $s$-$t$-flow of value $u_0$ almost everywhere.
It has been shown~\cite{koch2011nash} that for an equilibrium, the resulting pair $(\l ,x)$ satisfies the following for almost every $\theta$,
setting $x' = x'(\theta)$, $\l' = \l'(\theta)$, $E' = E'_\theta$ and $E^* = E^*_\theta$.

\begin{subequations}
	\makeatletter\def\@currentlabel{TF}
	\makeatother
	\label{eq:tf-group}
	\renewcommand{\theequation}{TF-\arabic{equation}}
	\begin{alignat}{2}
		x' &\text{ is a static $s$-$t$-flow of value $u_0$ on $E'$}, \label{eq:tf-flow}\\
		\l'_{s} &= 1, && \label{eq:l'_s:base}\\ 
                \l'_w &\leq \rho_e(\l'_v, x'_e) \quad &&  \text{ for all } e=vw \in E',
		\label{eq:l'_v_min:base}\\ 
            \l'_w &= \rho_e(\l'_v, x'_e) && \text{ for all } e = vw \in E' \text{ with } x'_e > 0, % \answer{ \text{ and}  \text{ for all } w \in V \setminus \set{s}}
                \label{eq:l'_v_tight:base}
	\end{alignat}
\end{subequations}
\addtocounter{equation}{-1} 
\[\text{ where } \qquad \rho_e(\l'_v, x'_e) \coloneqq \begin{cases}
	\frac{x'_e}{\nu_e} & \text{ if } e = vw \in E^*,\\
	\max\bigl\{\l'_v, \frac{x'_e}{\nu_e}\bigr\} & \text{ if } e = vw \in E'\backslash E^*.  
\end{cases}\]
These are called the \emph{thin flow} conditions (sometimes \emph{thin flows with resetting}) for the configuration $(E', E^*)$.
\cite{cominetti2011existence} replaced \eqref{eq:l'_v_min:base} with the stronger requirement
\begin{equation}\label{eq:ntf}\tag{TF-3$'$}
    \l'_w = \min_{e=vw \in E'} \rho_e(\l'_v, x'_e) \qquad \text{ for all } w \in V \setminus \{s\}, 
\end{equation}
and showed that the pair $(\l,x)$ corresponding to a flow over time is an equilibrium if and only if these strengthened \emph{normalized thin flow conditions} for configuration $(E'_\theta, E^*_\theta)$ are satisfied by $(\l'(\theta), x'(\theta))$ for almost every $\theta$. 
In particular, $\l'(\theta)$ and $x'(\theta)$ exist for almost every $\theta$; we call such values of $\theta$ \emph{points of differentiability}. 
%As we will discuss shortly, the normalized thin flow equations uniquely determine $\l'$.

%Given a pair $(\l, x)$, we say that $\theta$ is a \emph{point of differentiability} if both $x'(\theta)$ and $\ell'(\theta)$ exist. 
%For any equilibrium, almost every $\theta$ is a point of differentiability~\CCL.

We will take the viewpoint throughout that an equilibrium \emph{is} a pair $(\l, x)$ whose derivative satisfies the normalized thin flow equations almost everywhere. %$(\l'(\theta), x'(\theta))$ is a solution to the normalized thin flow equations for the configuration $(E'_\theta, E^*_\theta)$ for almost every $\theta \in \Rplus$.
Such a pair fully determines the flow over time $(\fin, \fout)$ of the equilibrium; 
for example, the queue volume is determined by $\tfrac1{\nu_e}z_e(\ell_v(\theta)) = \max\{0, \ell_w(\theta) - \ell_v(\theta) - \tau_e\}$ for all $e=vw$ and $\theta \in \Rplus$.
Note that from this perspective, given $(\l, x)$ satisfying the normalized thin flow conditions for all times $\theta \in [0, T]$, extending the equilibrium to later departure times only requires knowing $\l(T)$; nothing about the earlier (with respect to departure time) behavior of the equilibrium is needed.
Just $\l$ alone captures all the truly important information; one could easily verify whether a given curve $\l$ can be extended to an equilibrium, since given $\l'(\theta)$, it is easy to determine whether a matching $x'(\theta)$ exists so that $(\l'(\theta), x'(\theta))$ satisfy the normalized thin flow conditions.
%We call such an $\l$ an \emph{equilibrium trajectory}.
%\todo{So for this definition we want to be explicit that an arbitrary starting point is allowed; however we will need then to have already defined $\Omega$. One possibility is to define for the moment an ``equilibrium trajectory starting from the empty network''.}
%\todo{As the referee said, there is a question of existence of an equilibrium trajectory from an arbitrary starting point $l \in \Omega$. How does this work?}
%\laura{Maybe we can delete this here and define equilibirum trajectory at the beginning of Section 3. There is already a short paragraph with a descriptive definition.}

%If $E'$ contains cycles any thin flow with resetting $(\l',x')$, where $\l'$ is pointwise maximal fulfills $\l'_w = \min_{e=vw \in E'} \rho_e(\l'_v, x'_e) $, however not for every normalized thin flow $(\l',x')$, $\l'$ is also pointwise maximal.
%	Before considering an example illustrating this, we want to define which configurations $(E',E^*)$ %(that fully determine \eqref{eq:tf-group}) 
%	are valid. This means for these configurations thin flows exist %(and are unique) 
%	and thus, trajectories behave in a well-defined way. }

The thin flow conditions, and the normalized thin flow conditions, are fully determined by the configuration $(E', E^*)$.
However not all choices of $(E', E^*)$ are meaningful.
\begin{definition}
We call a configuration $(E', E^*)$ with $E^* \subseteq E' \subseteq E$ a \emph{valid configuration} if 
	\begin{compactenum}[(i)] 
		\item for every node $v \in V$ there is an $s$-$v$-path in $E'$, \label{valid:i}
		\item every arc $e \in E^*$ lies on an $s$-$t$-path in $E'$, and \label{valid:ii}
		%\item $E'$ is acyclic. \label{valid:iii}
		\item no arc of $E^*$ lies on a directed cycle in $(V,E')$. \label{valid:iii}
\end{compactenum}
%\answer{If in addition $E'$ is acyclic, which strengthens (\ref{valid:iii}), we call $(E', E^*)$ a \emph{valid acyclic configuration}.}
%\medskip
%\todo{Ideally I don't want to define feasible labeling here, but only in the context of a generalized network. Similarly for defining $\Omega$. Can this be delayed? Of course the discussion of validitiy and thin flow uniqueness should stay here, so might not be.}
A vector $l \in \R^V$ is called \emph{feasible} if $(E'_l, E^*_l)$ is a valid configuration.
\end{definition}
%Let $\Omega \subseteq \R^V$ denote the subset of feasible labelings.
If $(\l,x)$ is an equilibrium, then $\l(\theta)$ is a feasible label for every $\theta$.
Condition (\ref{valid:i}) is simply insisting that $\l(\theta)$ does correspond to earliest arrival times from $s$, for some choice of nonnegative queue lengths on the arcs. 
Condition (\ref{valid:ii}) follows because by the definition of dynamic equilibrium, a flow particle at the back of a queue travels on an active $s$-$t$-path. 
The sum of transit times along a directed cycle is assumed to be strictly positive, meaning that $E'_{\l(\theta)}$ is acyclic and implying (\ref{valid:iii}).
%So $(E'_{\l(\theta)}, E^*_\l(\theta))$ is always a valid acyclic configuration. } % for every $\theta$, implying condition (\ref{valid:iii} in a stronger form.}
%\todo{Removed footnote, which related to the weaker assumption.}
%\footnote{We do not really need that the sum of transit times along a cycle is strictly positive; to satisfy condition (\ref{valid:iii}), it suffices to observe that it takes strictly positive time to traverse an arc with a queue.}

\CCL{} showed that for any valid configuration $(E', E^*)$ with $E'$ acyclic, the normalized thin flow conditions have a solution;
    from this they could deduce existence of equilibria.
    They also showed that the solution is \emph{label-unique}: any other solution must have the same values for $\l'_v$ for all $v$.
    It follows that in an equilibrium, $\l$ is a piecewise linear function; each linear segment is referred to as a \emph{phase} of the equilibrium~\cite{koch2011nash,cominetti2015existence}.
    \nnote{Removed finite number of phases problem mention, to avoid messing with the flow; we discuss anyway in the conclusion.}
    %\footnote{However, it is not known whether the number of phases is necessarily finite, even within a compact interval.}
    %As they explain, and as we will discuss below, this is not enough to show uniqueness of equilibria trajectories.

    We will require a slightly stronger result however: we will need existence and label-uniqueness for \emph{any} valid configuration, without the restriction that $E'$ be acyclic. 
    \answer{(This is not strictly needed for our uniqueness and continuity result, but it will be important for our result on long-term behavior for general networks, as we will later see.)}
    Existence was shown by Koch~\cite[Theorem 6.61]{koch2012phd}. 
    Label-uniqueness fails to hold, however, even with the inclusion of the normalization condition \eqref{eq:ntf}.%
\footnote{
As an explicit example, consider a network $G$ with arcs $\{ sv, vt, vw, wy, yz, zw, zs\}$ and capacities $\answer{\nu}_{sv} = \tfrac12$, $\answer{\nu}_e = 1$ for all other arcs $e$, with an inflow of $1$. It is easy to see that assigning $\l'_s = 1$, $\l'_v = \l'_t = 2$ and $\l'_w = \l'_y=\l'_z = \alpha$ is a solution to the normalized thin flow equations for configuration $(E,\emptyset)$ for all $\alpha \in [1,2]$.}

    Instead, we will use a slightly different strengthening of the thin flow conditions, and one that is completely equivalent if $E'$ is acyclic.
We will always consider a \emph{maximal} solution to the thin flow equations \eqref{eq:tf-group}, by which we mean a solution $(\l', x')$ for which $\l'$ is pointwise maximal.
Observe that this implies \eqref{eq:ntf}: if $(\l', x')$ is any solution to \eqref{eq:tf-group}, and $w$ a node for which
$\l'_w < \rho_e(\l'_v, x'_e) $ for all $e=vw \in E'$, then none of these arcs is flow carrying (by \eqref{eq:l'_v_tight:base}), and so increasing $\l'_w$ by a sufficiently small amount yields a new valid thin flow solution, showing that $(\l', x')$ was not maximal.
If $E'$ is acyclic, the two notions are in fact identical, given that the normalized thin flow equations have a label-unique solution.
But we now have label-uniqueness of maximal thin flow solutions for \emph{any} valid configuration.
The proof follows exactly the same lines as the proof in \CCL{} for the case where $E'$ is acyclic, with minor modifications.

%    Existence of thin flows with resetting was shown by~\cite{cominetti2011existence} in case of $E'$ being acyclic and by~\cite[Theorem 6.61]{koch2012phd} for any valid configuration $(E',E^*)$.

%While for $E'$ acyclic, in~\cite{cominetti2011existence} uniqueness of normalized thin flows with resetting was shown, this does not translate when allowing all valid configuration. In~\cite{koch2012phd}, for this setting uniqueness was proven on all flow carrying vertices. However, the following example shows, that on the non-flow carrying vertices uniqueness does not hold any longer. 

%
%We now show uniqueness of  $\l'$ for a thin flows with resetting, where $\l'$ is pointwise maximal. The proof goes along the same lines as the uniqueness proof of ~\cite{cominetti2011existence} but then exploits pointwise maximality of the solution in a straight forward way.
\begin{lemma}
 For any valid configuration $(E',E^*)$, maximal solutions to \eqref{eq:tf-group} are label-unique. 
 %for every solution $(\l',x')$ to \eqref{eq:tf-group} with $\l'$ pointwise maximal, $\l'$ is unique.
\end{lemma}

\begin{proof}
Suppose for the sake of contradiction there are two maximal solutions to \eqref{eq:tf-group}, namely $(\l',x')$ and $(h',y')$ with $\l'\neq h'$.
Let $S\coloneqq \{ v \in V \mid \l_v' > h'_v\}$; this set is non-empty as otherwise $\l'$ would not be pointwise maximal. 

\begin{claim}
	\label{claim:no_flow_across_S}
    $x_e'=y_e'=0$ for all $e \in \delta^+(S) \cup \delta^-(S)$. %\nnote{Did we define $\delta(S)$? I don't think so, so I adjusted.}
\end{claim}

\begin{nestedproof}
%We first show that $x_e' \leq y_e'$ for $e \in \delta^+(S)$. 
% Assume not, i.e.,
 Suppose first that $x_e'>y_e'\geq 0$ for some $e \in \delta^+(S)$. Then $\l_w'=\rho_e(\l_v',x_e')>\rho_e(h_v',y_e')\geq h_w'$, a contradiction as $w\not\in S$.
 Next, suppose 
 %we observe that  $x_e'  \geq y_e'$ for $e \in \delta^-(S)$. 
 $x_e' < y_e'$ and for some $e \in \delta^-(S)$. Then $h_w'=\rho_e(h_v',y_e')\geq\rho_e(\l_v',x_e')\geq l_w'$ (using $y'_e > 0$ in the first equality), again a contradiction as $w\in S$.
 
 The above yields that $x_e'=y_e'$ for all $e \in \delta^+(S) \cup \delta^-(S)$ as both $x'$ and $y'$ are static flows of the same flow value. Now, observe that $y_e'>0$ for $e \in \delta^-(S)$ would still yield the contradiction $h'_w \geq \l'_w$. 
 Thus $x'_e = y'_e = 0$ for all $e \in \delta^-(S)$, and as $s \not\in S$ this yields $x'_e = y'_e = 0$  for all $e \in \delta^+(S)$ as well.
 %As moreover $s \notin S$, we must have that $x_e'=y_e'=0$ for all $e \in \delta^+(S) \cup \delta^-(S)$.
\end{nestedproof}
 
% Next, we show that the pointwise maximum of $h'$ and $\l'$ together with a suitable flow is a thin flow with resetting. This contradicts the fact that $\l'$ and $h'$ were chosen as pointwise maximal solutions of \eqref{eq:tf-group} and thus finishes the proof of the lemma. 
 
 Let 
 \[
 k'_v = \begin{cases}
 	\l_v' \quad &\text{if } v \in S\\
 	h_v' \quad &\text{otherwise,}
 \end{cases}
 \qquad  \qquad 
  \bar x'_e = \begin{cases}
 	x_e' \quad &\text{if } e \in E[S]\\
 	y_e' \quad &\text{otherwise.}
 \end{cases}
 \]
We show that $(k',\bar x')$ is a thin flow for configuration $(E', E^*)$. 
By \Cref{claim:no_flow_across_S}, $\bar x'$ is an $s$-$t$-flow of value $u_0$ on $E'$. 
\eqref{eq:l'_s:base} and \eqref{eq:l'_v_tight:base} clearly remain valid, so we just have to make sure that \eqref{eq:l'_v_min:base} is not violated by any edges crossing $S$. 
Thus, consider some entering edge $e=vw \in \delta^-(S)$. \eqref{eq:l'_v_min:base} requires that $k_w' \leq \rho_e(k_v',0)=k_v'$.  This follows as $k_w' =\l_w' \leq \rho(\l_v',0) =\l_v'\leq h_v'=k_v'$.
Similarly, if we consider some leaving edge $e=vw \in \delta^+(S)$, \eqref{eq:l'_v_min:base} requires that $k_w' \leq \rho_e(k_v',0)=k_v'$.  This follows as $k_w' =h_w' \leq \rho\answer{_e}(h_v',0) =h_v' < l_v'=k_v'$. Thus, $(k', \bar x')$ is indeed a thin flow.
Since $k' \geq \l'$ and $k' \geq h'$, we obtain a contradiction to the assumption that $(\l',x')$ and $(h',y')$ are distinct maximal solutions.
\end{proof}
    
From now on, we only consider maximal solutions to the thin flow equations. If $(\l', x')$ is a maximal solution, we will refer to $\l'$, which is unique, as the \emph{thin flow direction}.
%with resetting $(\l',x')$, where $\l'$ is pointwise maximal. We denote $\l'$ as the thin flow direction. This name is due to the following view on Nash flows over time.

%\answer{If $(E',E^*)$ is a valid configuration, then the thin flow conditions always have a solution~\cite{koch2011nash}, and $\l'$ is uniquely determined~\CCL.}
%\todo{If we want to define thin flow direction, probably put it here.} %We will sometimes refer to this unique $\l'$ as the ``thin flow direction''.}
%, and $\l'$ is uniquely determined~\cite{koch2012phd}. 
%(Uniqueness in the more restrictive case that $E'$ is acyclic was shown by~\CCL; for most purposes this already suffices.)

%\nnote{Slightly reworked to be for a thin flow solution rather than a point on a trajectory.}
%\todo{We don't use acyclicity of $E'$ in this proof do we? My restatement is slightly more general.}\laura{I think we don't and it is correct.}
\answer{
The following simple fact will be useful.
\begin{lemma}\label{lem:labelsinc}
	% Given an equilibrium $(x,\l)$, then 
	For any valid configuration $(E', E^*)$ and corresponding maximal thin flow solution $(\l', x')$, $\l'_v > 0$ for all $v \in V$. % for all $v \in V$ whenever the derivative exists. 
       Hence  $x'_e > 0$ for every $e \in E^*$.
\end{lemma}
\begin{proof}
	%Let $\l' = \l'(\theta)$ and $x' = x'(\theta)$. 
        Define $V_0 \coloneqq \set{ v \in V | \l'_v = 0}$.
        Note that $\sum_{e \in \delta^-(v)} x'_e = 0$ for all $v \in V_0$ by \eqref{eq:l'_v_tight:base}, which immediately implies  $t \notin V_0$ as $\sum_{e \in \delta^-(t)} x'_e = u_0>0$. 
        Moreover, as $\l_s'=1$ and thus $s\not\in V_0$, flow conservation yields that $x_e'=0$ for all $e \in \delta^+(V_0)$.
	%Due to the thin flow definitions the only possibility for a label to be $0$ is that there exists a resetting arcs $e = vw$ with $x'_e = 0$.
	
	Assume for contradiction that $V_0 \neq \emptyset$.
	%and let $v_0$ be the first node in $V_0$ according to a topological ordering with respect to $E'_\theta$. 
	%{Note that such a topological ordering exists as either $\Einf=\emptyset$ or $(V,E)$ is acyclic.}\todo{This seems to use the old definition of valid configuration. we still have to adapt this.}
	%\nnote{This would need to be the original $E'_\theta$, not the redefined one...}
		As we choose a maximal thin flow solution, there has to be a resetting arc $f = yz$ with $x'_{f} = 0$.
            Otherwise, $\l' $ is not pointwise maximal: with $\epsilon$ equal to the smallest nonzero label of $\l'$, $\bar\l'$ defined by $\bar\l'_v = \max\{ \l'_v, \epsilon\}$ for all $v$ is easily seen to still be a solution to the thin flow equations.
                %as we can choose $\epsilon$ as the minimal nonzero label of $\l'$ and define a new valid thin flow (this is easy to check) which is strictly larger
		%\[
		%\bar \l' = \begin{cases}
		%	\l_v' \quad & \text{if }\l_v'\neq 0\\
	%		\epsilon& \text{else.}
%		\end{cases}
%		\]}
		
	%	Clearly, $v \neq s$ as $s \not\in V_0$.}
	%Due to \eqref{eq:l'_v_min:base} there has to be an incoming resetting arc $e_0 = uv$ (with $x'_{e_0} = 0$).
	Note that $z\in V_0$. 
        As $(E', E^*)$ is a valid configuration, there has to be an active $z$-$t$-path. 
        In particular there has to be an active arc $e = vw \in E'$ with $v \in V_0$ and $w \notin V_0$. But since $x'_e = 0$, % (flow conservation at $v$) 
	this is a contradiction to \eqref{eq:l'_v_min:base}.

        The second part of the claim is an immediate consequence of \eqref{eq:tf-group}; $x'_e = 0$ for $e=vw \in E^*$ would imply $\l'_w = 0$.
\end{proof}
}

%\nnote{Feels like this is a bit marooned. Should we make this a lemma statement or something so it's not lost? Should it get a proof?}
%Note that for any valid configuration $(E', E^*)$, the thin flow direction satisfies $\l'_v \leq \ldmax$, where $\ldmax \coloneqq u_0 / \min_{e \in E} \nu_e$. 
%Thus any equilibrium trajectory is $\ldmax$-Lipschitz.
%\laura{If we define equilibrium trajectory at the beginning of Section 3, we should move this lemma.}

\subsection{Equilibrium trajectories and generalized networks}

%\todo{Wrote this without looking to see how much of this is repetitive. I wanted to make sure that the trajectory ``view'' was somewhat explicit before generalizing.}
\answer{
    The exposition so far demonstrates the primacy of the earliest arrival labels. 
    If $X(l)$ is the thin flow direction for configuration $(E'_l, E^*_l)$ wherever this is valid, 
    then an equilibrium $(x,\ell)$ satisfies that $\l'(\theta) = X(\l(\theta))$ for almost all $\theta \in \Rplus$.
    This is the primary perspective on equilibria that we will take going forward. 
    However, we will need to generalize it in two ways, for the purposes of our later arguments.

    Firstly, we will allow for an arbitrary starting point, not necessarily the labeling corresponding to the empty network.
    Secondly, we allow for a specified set of ``free arcs'' $\Einf \subseteq E$ to be given which  
            will \emph{always} be resetting (and hence also always active). In some sense, these represent arcs where the queue is allowed to be negative. 
            The reason that this will be useful will become clearer later; for now, we focus on the precise definitions.

    \begin{definition}
        A \emph{generalized network} is a network $G=(V,E)$, along with a set $\Einf \subseteq E$ of \emph{free arcs} such that $(E, \Einf)$ is a valid configuration.
    \end{definition}
    In particular, $\Einf = \emptyset$ is always a valid choice. 
    Fix a generalized network $G=(V,E)$ and $\Einf$ for the remainder of this section.

    We redefine $E'_l$ and $E^*_l$ for $l \in \R^V$ by
    \begin{equation}\label{eq:gen-actres}
    E'_l = \Einf \cup \{ e=vw \in E \mid l_w - l_v \geq \tau_e\} \qquad\text{and}\qquad E^*_l = \Einf \cup \{ e=vw \in E \mid l_w - l_v > \tau_e \}. 
\end{equation}
    As before, $l$ is a feasible labeling if $(E'_l, E^*_l)$ is a valid configuration.
    Let $\Omega$ denote the set of feasible labelings.

    \todo{I moved this here, since we need it also in Lemma 2.8. It seems not worth writing a full lemma for, is it ok like this?}\laura{perfect}
    We remark that $E'_l$ is acyclic for any $l \in \Omega$, just as was the case for normal networks. 
    This follows as the set $\{e=vw \in E \mid l_w-l_v-\tau_e \geq 0\}$ is acyclic by our assumption that sum of transit times along a directed cycle is positive, 
    and moreover, no arc $\Einf$ lies on a cycle of active arcs as $(E_l', E_l^*)$ is a valid configuration.

}

%\todo{Considering the following. This becomes ``Equilibrium trajectories in generalized networks'' or similar. 
%We define (with a little bit of motivation) the idea of the generalized network, at the same time as defining an equilibrium trajectory, and clarify that if $\Einf = \emptyset$ and $\l(0)$ is the labeling of an empty network, then this corresponds exactly to an equilibrium; and more generally than that, maybe some intuition for what these ``mean'' but emphasise that it's primarily a tool for us to analyse actual equilibria.}

%\answer{\paragraph{Equilibrium trajectories.}
\answer{
    The following lemma shows that a trajectory that follows the vector field $X$ will not leave $\Omega$.
(The existence results in the literature certainly imply this in the case that $\Einf = \emptyset$ and $\l(0)$ is the labeling of an empty network.) 
\begin{lemma}\label{lem:X_points_inside}
    For every $l \in \Omega$, $X(l) \in \{ d \in \R^V : l + \epsilon d \in \Omega \text{ for some } \epsilon > 0\}$.
\end{lemma}
\begin{proof}
%    \nnote{I think this is fine as is even with free arcs, everything is captured by the fact that $(E'_l, E^*_{l})$ is a valid configuration?}\laura{agree. i thought it could be helpful to argue that $E_l'$ is still acyclic (even with additional free arcs), but as you remoed it, I assume you think it is too obvious.}
    Fix $l \in \Omega$, and let $l' = X(l)$.
    Let $\epsilon > 0$ be small enough that with $\bar{l} = l + \epsilon l'$, for any $e=vw$ with $l_w - l_v - \tau_e \neq 0$, $\bar{l}_w - \bar{l}_v - \tau_e \neq 0$ and has the same sign.
    We need to show that properties \eqref{valid:i}--\eqref{valid:iii} for a valid configuration hold for $(E'_{\bar{l}}, E^*_{\bar{l}})$.
    Note that $E^*_{\bar{l}} \supseteq E^*_l$ and $E'_{\bar{l}} \subseteq E'_l$. 
%    Moreover, observe that $E'_l$ is acyclic. This follows as the set $\{e=vw \mid l_w-l_v-\tau_e \geq 0\}$ is acyclic by our assumption that sum of transit times along a directed cycle is positive. Moreover, no arc $\Einf$ lies on a cycle of active arcs as $(E_l', E_l^*)$ is a valid configuration, yielding that $E_l'$ is acyclic. 

    \begin{enumerate}[(i)]
        \item Suppose for a contradiction that there is no $s$-$v$ path for some $v \in V$, chosen minimally in the topological order described by $E'_{\bar{l}}$. 
            Then no arc entering $v$ is in $E'_{\bar{l}}$, and hence in $E^*_{l}$.
            By \eqref{eq:l'_v_min:base} (which holds since $E'_{l}$ is acyclic), and as no arcs entering $v$ are in $E^*_l$, there exists some $uv \in E'_l$ with $l'_v = l'_u$.
            But then $\bar{l}_v - \bar{l}_u - \tau_e = l_v - l_u - \tau_e \geq 0$, %\laura{the first one is equality, right?};
            contradicting out assumption that $uv \notin E'_{\bar{l}}$.
            %\nnote{It should just be this, right?}\laura{yes}
%            combining an  $s$-$u$-path in $E'_{\bar{l}}$ with $uv$ yields a contradiction. \laura{Maybe stupid, but do we know that $u$ is earlier in the topological ordering than $v$? the arc that eneters is active at $l$ but maybe not at $\bar l$? or why do we have the existence of an $s$-$u$-path?}\nnote{Right, let me think about this.}

        \item Let $e=vw \in E^*_{\bar{l}}$.
            Take $(l', x')$ to be any complete solution to the thin flow equations.
            Either $l'_w > l'_v$, or $e \in E^*_l$. Either way (with the help of \Cref{lem:labelsinc} in the latter case), $x'_e > 0$. 

            Let $P$ be an $s$-$t$-path including $e$ in the support of $x'$.
            The thin flow equations imply that each $f=yz \in P$ is either in $E^*_l$ (and hence $E^*_{\bar{l}}$), or is active and has $l'_z \geq l'_y$; so this path is in $E'_{\bar{l}}$.

        \item
%            \todo{This isn't needed I guess, we just use that $E'_{\bar{l}}$ is acyclic?}
            Since $E'_{\bar{l}} \subseteq E'_l$ and $E'_l$ is acylic, this is immediate. %This implies that $E'_{\bar{l}}$ is acyclic, since $E'_l$ is acyclic. % ($E'_l$ is acyclic as the set $\{ e=vw \in E : l_w - l_v \geq \tau_e\}$ is acyclic by definition and $(E_l', E_l^*)$ is a valid configuration). 
 %           Let $e=vw \in E^*_{\bar{l}}$. If $e \in E^*_l$, then the claim follows because $(E'_l, E^*_l)$ is valid, and $E'_{\bar{l}} \subseteq E'_l$. If $e=vw \in E^*_{\bar{l}}\setminus E^*_l$, then $\bar l_w >  \bar l_v$. Assume for contradiction $e$ lies on a directed cycle of active arcs and let $P$ be the $w$-$v$-path of active arcs. Then, there is some arc $f=yz \in P$ with $\bar l_y > \bar l_z$. This means $f \in E^*_{\bar l}$ and also $f \in E^*_l$. However, as $E'_{\bar{l}} \subseteq E'_l$, this implies that $f$ lies on a directed cycle of arcs in $E_l'$, which contradicts that $(E_l', E_l^*)$ is a valid configuration.
    \end{enumerate}
\end{proof}
%\answer{
%    Let $X: \Omega \to \R^V$ be the vector field for which $X(\linit)$ is the thin flow direction for $(E'_{\linit}, E^*_{\linit})$, for all $\linit \in \Omega$. 
%As we have discussed, if $(\l, x)$ is an equilibrium, then $\l'(\theta) = X(\l(\theta))$ for almost every $\theta \in \Rplus$.
%This motivates the following definition.}

\answer{\begin{definition}\label{def:equilibrium_trajectory}
        An \emph{equilibrium trajectory} $\l$ is a vector $(\l_v)_{v \in V}$, with $\l_v: \Rplus \to \R^V$ differentiable almost everywhere, such that $\l(\theta) \in \Omega$ and $\l'(\theta) = X(\l(\theta))$ for all $\theta \in \Rplus$.
        The \emph{starting point} of $\l$ is the labeling $\l(0)$.
%
%	Given $\l$ as a vector of $V$ functions, which are almost everywhere differentiable. We denote $\ell$ as equilibrium trajectory, if $\l'(\theta) = X(\l(\theta))$ for almost every $\theta$ and $\ell(0)\in \Omega$.
\end{definition}
The previous lemma implies that for any $\linit \in \Omega$, an equilibrium trajectory starting from $\linit$ can be constructed, by solving the differential equation $\l(0) = \linit$, $\l'(\theta) = X(\l(\theta))$ for $\theta \in \Rplus$, exactly as was done for equilibria where $\linit$ is the labeling for the empty network.}
}

Notice that since $X(\linit)$ depends only on $E'_{l}$ and $E^*_{l}$, we can deduce that $X$ is piecewise constant.
The regions on which $X$ is constant also have a very simple structure, given by \eqref{eq:gen-actres}: %\todo{What is these referring to?} 
each arc $e=vw$ divides $\Omega$ into two open halfspaces separated by the hyperplane $\{ l \in \Omega: l_w - l_v = \tau_e \}$; a region is determined by the choice of sign (positive, negative or zero) for each link (not all combinations necessarily yield a region); see \Cref{fig:vector_field}.

%\todo{Not totally sure if this is helpful or not.}\laura{Maybe we should just say that it is subtle and just a tool for our analysis? What irritates me about the current description is, that it sounds as if only the particles starting at time 0 see the queues. However, actually it is like a real queue which matters to whoever arrives in time to see it, the only difference is that the particles vanish at the end node of the arc (and the queue appears at time $\ell_v(0)$.}
%\answer{Intuitively, an equilibrium trajectory starting from $\linit$ corresponds to an equilibrium, but when there are some initial queues. 
%This is a bit subtle however; for an edge $e=vw$ with $q_e(0) = \linit_w - \linit_v - \tau_e > 0$, $q_e(0)$ does not represent a queue length at time $0$, but a queue length apparent to a particle departing the source at time $0$ and arriving at $v$ along a shortest path. 
%In any case, no such interpretation is needed; equilibrium trajectories with initial conditions not corresponding to an empty network, or with free arcs, are tools for our analysis.}
%\todo{Alternative suggestion:}
\answer{
Intuitively, an equilibrium trajectory starting from $\linit$ corresponds to an equilibrium, but with some initial conditions defined by $\linit$. 
However, interpreting this in terms of  artificial initial queues involves some subtleties. 
We emphasize that no such interpretation is needed; equilibrium trajectories with initial conditions not corresponding to an empty network, or with free arcs, are tools for our analysis.
}

\begin{lemma}\label{lem:lipschitz}
    Any equilibrium trajectory is $\ldmax$-Lipschitz, where
    \[ \ldmax \coloneqq \max \Bigl\{ 1, u_0 / \min_{e \in E} \nu_e\Bigr\}. \]
\end{lemma}
\begin{proof}
    \answer{
    Consider any equilibrium trajectory $\l$.
    It suffices to show that for any point of differentiability $\theta$, $\l'_v(\theta) \leq \ldmax$ for all $v \in V$.
    Let $(\l', x')$ be a complete maximal thin flow solution for the configuration $(E'_{\l(\theta)}, E^*_{\l(\theta)})$ (so $\l' = \l'(\theta)$).
    Then for any $e=vw \in E'_{\l(\theta)}$, 
    \[ \l'_w \leq \rho_e(\l'_v, x'_e) \leq \max\{\l'_v, x'_e / \nu_e \} \leq \max\{ \l'_v, u_0 / \min_{e' \in E} \nu_{e'}\}, \]
    where the last inequality uses that $x'_e \leq u_0$ by the acyclicity of $E'_{\l(\theta)}$.
    Considering any active $s$-$v$-path immediately demonstrates the claim.
    }
\end{proof}

	\section{Technical overview}\label{sec:overview}

\begin{figure}[t]
	\begin{minipage}[t]{.32\textwidth}
		\centering
		\includegraphics{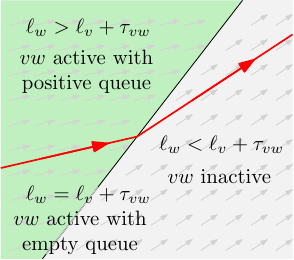}
		\caption{Dynamic equilibria can be seen as trajectories in $\R^V$ that follow a piecewise-constant vector field.}
		\label{fig:vector_field}
	\end{minipage} \hfill
	\begin{minipage}[t]{.32\textwidth}
		\centering
		\includegraphics{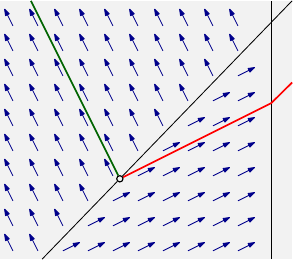}
		\caption{The simplest situation which would produce non-uniqueness (if it were possible).}
		\label{fig:diverging_vector_field}
	\end{minipage} \hfill
	\begin{minipage}[t]{0.32\textwidth}
		\centering
		\includegraphics{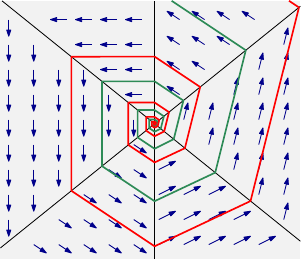}
		\caption{A more subtle potential situation which would produce non-uniqueness; we show that this cannot occur.}
		\label{fig:spiraling}
	\end{minipage}
\end{figure}

\paragraph{Uniqueness and continuity of equilibrium trajectories.}
We begin our discussion by considering uniqueness of equilibrium trajectories.

To understand why uniqueness is a strong property, and nontrivial to prove, let us consider some other piecewise constant vector fields, restricting ourselves to two dimensions.
First, consider the situation shown in \Cref{fig:diverging_vector_field}.
Non-uniqueness of trajectories is quite obvious in this case, as we have two regions with a common boundary and the vector fields in each region pointing away from this boundary. Previous work does in fact rule this out for the vector field of interest to us\,---\,it can be shown to be a consequence of the uniqueness of solutions to the thin flow conditions. 

A more difficult situation is shown in \Cref{fig:spiraling}, where the trajectory (informally) spirals outwards.
Again, non-uniqueness is rather clear; two possible trajectories starting from the origin are shown.
This possibility is \emph{not} excluded by previous work.
In particular, if a trajectory of this type existed, it would provide an example of an equilibrium $(\l, x)$ for which $\l$ is not right-differentiable.
By contrast, the procedure of Cominetti, Correa and Larr\'e~\CCL{} and Koch and Skutella~\cite{koch2011nash} produces an equilibrium whose labeling is right-differentiable, and so this would immediately be an example of non-uniqueness.

%\begin{theorem}\label{thm:uniqueness}
%	For any $\linit \in \Omega$, there is a unique equilibrium trajectory $\l$ with $\l(0) = \linit$.
%\end{theorem}
\todo{Updated to state for generalized networks, and to include the instance.}
\answer{
\begin{theorem}
	\label{thm:uniqueness}
	Given a generalized network $G=(V,E)$ with free arcs $\Einf \subseteq E$ and a feasible labeling $\linit$, there is a unique equilibrium trajectory $\ell$ starting from $\linit$.
\end{theorem}
This theorem with $\Einf=\emptyset$ and $\linit$ the labeling of the empty network gives label-uniqueness of equilibria. }

Before discussing how we prove this result, we first discuss continuity, which turns out to be intimately related.
For now, we are referring to continuity of the equilibrium trajectory, as a function of the initial feasible labeling.
This is very natural from the perspective of equilibrium trajectories, 
but care is needed in interpreting this;
for example, slightly modifying a queue or queues at some moment in the evolution of an equilibrium does not immediately correspond to a change in the labeling. 
We will discuss other more ``interpretable'' continuity results at the end of this section.

Let $\LFuncs$ denote the extended Banach space of $\ldmax$-Lipschitz functions from $[0, \infty)$ to $\R$ imbued with the {extended}\footnote{{An extended norm is allowed to take on infinite values, and otherwise behaves as a norm.}} norm $\|g\|_\infty = \sup_{\theta \geq 0} g(\theta)$.
{(Recall that $\ldmax$ is a Lipschitz constant for all equilibrium trajectories  by \Cref{lem:lipschitz}.)}
By $\Traj$ we denote the space $\LFuncs^V$ imbued with the {extended} norm $\|\l\| \coloneqq \max_{v \in V} \|\l_v\|_\infty$.
Our continuity result is the following.
\begin{theorem}\label{thm:label-continuity}
    \answer{Consider a generalized network $G=(V,E)$ with free arcs $\Einf \subseteq E$ and a feasible labeling $\linit$.}
	Let $\Psi: \Omega \to \Traj$ be the map that takes $\linit \in \Omega$ to the unique equilibrium trajectory $\l$ starting from $\linit$.
	%    Then imbuing $\LFuncs$ with the sup norm, 
	Then $\Psi$ is a continuous map.
\end{theorem}
We remark that the continuity of $\Psi$ is a statement over the entire evolution of a trajectory\,---\,a rather strong form of continuity, and in particular, stronger than convergence of the trajectory at any fixed time $\theta$.
Our result says that if we look at the equilibrium trajectory $\l$ starting from $\linit$, and
then look at the equilibrium trajectory $\tilde{\l}$ starting from a slight perturbation of $\linit$, then
$\l$ and $\tilde{\l}$ stay close forever, rather than possibly drifting apart very slowly.

\paragraph{The connection to long-term behavior.}
To understand the connection between long-term behavior and uniqueness and continuity, we consider the ``local'' situation in a sufficiently small neighborhood of some $\linit \in \Omega$.
\answer{This is where the notion of a generalized network will become useful. 
All hyperplanes that do not go through this point are considered to be very far away, and hence are ignored. 
%Define the \emph{local network} at $\linit$ to be the generalized network $\Gloc=(V, \Eloc)$ with free arcs $\Einf$, where 
%$\Eloc = E'_{\linit}$ and $\Einf = E^*_{\linit}$. 
Call an arc $e=vw$ \emph{tight} (with respect to $\linit$) if $\linit_w - \linit_v = \tau_e$; so the hyperplane of arc $e$ passes through $\linit$ precisely if it is tight.
All non-tight arcs either have a queue or are inactive. 
Define the \emph{local network} at $\linit$ to be the generalized network $\Gloc=(V, \Eloc)$ with free arcs $\Einf$, where 
$\Eloc = E'_{\linit}$ and $\Einf = E^*_{\linit}$. 
In other words, all arcs with a queue become free arcs, and all inactive arcs are removed.
%All the arcs with a queue become free arcs $\Einf$, and all inactive arcs are deleted. In other words, $\Eloc=E_{\linit}'$ and $\Einf = E^*_{\linit}$. 
Only arcs of $\Eloc \setminus \Einf$ have an associated hyperplane; all these hyperplanes pass through $\linit$.
Thus the vector field $\hat{X}$ of the local network has a conic self-similarity: $\hat{X}(\linit + \alpha d) = \hat{X}(\linit + d)$ for any $d \in \R^V$ with $\linit+d \in \Omegaloc$ and $\alpha > 0$,
where $\Omegaloc$ denotes the set of feasible labelings for the local network.}
\todo{I changed $X$ to $\hat{X}$, seemed more consistent. I don't think it's used anywhere else?}

We are now ready to make the connection between uniqueness/continuity and long-term behavior of equilibrium trajectories in this conic setting.
Let $(y, \lambda)$ be the solution of the thin flow problem at $\linit$; so $\lambda = X(\linit)$.
Suppose that we could show that for some given initial condition $\hatlinit$, any equilibrium trajectory $\l$ starting from $\hatlinit$ eventually, after some finite amount of time $T$, satisfies $\l'(\theta) = \lambda$ for all $\theta \geq T$. 
Here, $T$ may certainly depend on the choice of $\hatlinit$.
But now we can exploit the conic symmetry.
Consider an equilibrium trajectory $\leps$ starting from $\linit + \epsilon(\hatlinit - \linit)$.
Then it is quite easy to see that the trajectory $\l$ defined by $\l(\theta) \coloneqq (\leps(\epsilon \theta) - \linit) / \epsilon + \linit$ is an equilibrium trajectory starting from $\hatlinit$.
Thus $\leps(\theta)$ will reach steady state at time $\epsilon T$.
So as we move the initial condition closer and closer to $\linit$, any equilibrium trajectory looks more and more like the trajectory 
$\l^*$ given by $\l^*(\theta) = \linit + \theta \lambda$.
The maximum distance between $\leps(\epsilon)$ and $\l^*(\epsilon)$ can be controlled by the distance between the initial condition and $\linit$ and by exploiting Lipschitz continuity of the trajectories{, which holds by \Cref{lem:lipschitz}}.
This shows uniqueness at $\linit$.
Let $\l$ be any equilibrium trajectory starting from $\linit$.
For any $\epsilon > 0$, we can think of $\l$ on $[\epsilon, \infty)$ as an equilibrium trajectory starting from $\l(\epsilon)$.
Since $\l(\epsilon) \to \linit$ as $\epsilon \to 0$, it follows from the above that $\sup_{\theta \geq 0}[\l(\theta + \epsilon) - \l^*(\theta)]$ converges to $0$ as $\epsilon \to 0$.
So $\l = \l^*$.

Note that while this conic view gives some nice intuition, our formal proof does not rely on this explicitly.
Instead, we show directly that not only does every trajectory reach steady state in some finite time, but that the time required to reach steady state scales with the distance between the initial point and $\linit$.

\medskip

This also shows continuity locally around $\linit$: equilibrium trajectories starting from small perturbations of $\linit$ remain close to the equilibrium trajectory starting from $\linit$.
Essentially, this shows continuity of the trajectories over a single phase, and it is not too difficult to deduce continuity of the entire equilibrium trajectory from this. 

\paragraph{Equilibria reach steady state.}
To prove \Cref{thm:uniqueness} and \Cref{thm:label-continuity}, the main remaining ingredient is to show that equilibria do always reach steady state. 
We prove the following theorem.

\begin{theorem}\label{thm:longterm}
	Consider a generalized network $G=(V,E)$ with free arcs $\Einf \subseteq E$.
      Let $\lambda$ be the thin flow direction for configuration $(E, \Einf)$, and let $\linit$ be any feasible labeling. 
        Then there is a value $T$ such that
        %\todo{Difficulty now given the issues with defining the thin flow direction when $E$ is not acyclic. Do we want to put the definition of thin flow direction above here, with the extra faff for the non-acyclic case? Or somehow avoid this issue for now, by not saying what $\lambda$ is?}
	%Then 
        for any equilibrium trajectory $\l$ starting from $\linit$, $\l'_v(\theta) = \lambda_v$ for all $v \in V$ and $\theta \geq T$.
\end{theorem}

One can give explicit bounds on $T$ in terms of the instance and $\linit$; we do so in \Cref{sec:longterm}.
Note that in the theorem we show that after time $T$ \emph{all} labels change as {per the thin flow direction} $\lambda$. This is a stronger notion of steady state than asking only that queues change with the right rates as described in \Cref{sec:intro}. We will use the stronger notion as the definition of steady state. {We will also refer to $\lambda$ as \emph{the steady state direction}.} % and provide more details on explicit differences in \Cref{sec:longterm}.

For uniqueness and continuity, the theorem is applied {to the current shortest path network}, with $\Einf = E^*_{\linit}$.
The choice $\Einf = \emptyset$ is of independent interest:
it corresponds to a real equilibrium trajectory on a given instance with no artificial free arcs.
As mentioned earlier, \CCO{} proved the following:
\begin{theorem}[\CCO]\label{thm:CCO}
	Consider an instance satisfying the following \emph{inflow condition}: the inflow rate $u_0$ does not exceed the minimum capacity of an $s$-$t$-cut in the network $G=(V,E)$.
	Then for any feasible initial condition $\linit$, after some finite time all queues remain constant; that is, there exists a $T$ such that $\l'_v(\theta) = 1$ for all $v \in V$ and $\theta \geq T$.
\end{theorem}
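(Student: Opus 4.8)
The cleanest route I would take is to observe that this statement is exactly the specialization of \Cref{thm:longterm} to the choice $\Einf = \emptyset$. With $\Einf = \emptyset$ the hypotheses of \Cref{thm:longterm} hold for every feasible $\linit$ (the condition $E^*_{\linit} \supseteq \Einf$ is vacuous, and $(E,\emptyset)$ is a valid configuration since every node is reachable from $s$), so it suffices to check that, under the inflow condition, the thin-flow solution $\lambda$ for the configuration $(E,\emptyset)$ is the all-ones vector. By max-flow--min-cut, $u_0 \le$ (min $s$-$t$-cut capacity) means there is a static $s$-$t$ flow $x$ of value $u_0$ with $0 \le x_e \le \nu_e$ for all $e$; then for the pair $(x,\mathbf 1)$ one has $\rho_e(1,x_e) = \max\{1, x_e/\nu_e\} = 1$ on every arc, so $(x,\mathbf 1)$ satisfies \eqref{eq:tf-flow}--\eqref{eq:l'_v_tight:base}, and by uniqueness of the thin-flow direction~\CCL{} we get $\lambda = \mathbf 1$. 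Plugging into \Cref{thm:longterm} gives the claim.

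If instead one wants a proof that does not route through \Cref{thm:longterm} (which is how \CCO{} originally proceeded, and is the same blueprint one uses for \Cref{thm:longterm} itself), the plan is a Lyapunov argument. The first step is to characterize the steady states: if $\l'(\theta) = \mathbf 1$ for all $\theta \ge T$, then, writing $\pi_v \coloneqq \l_v(T) - T$, the thin-flow conditions force the associated static flow $x'$ to satisfy $0 \le x'_e \le \nu_e$, with $x'_e = \nu_e$ on every arc $e=vw$ having $\pi_w > \pi_v + \tau_e$, $\pi_w \le \pi_v + \tau_e$ whenever $x'_e < \nu_e$, and $\pi_w \ge \pi_v + \tau_e$ whenever $x'_e > 0$. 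These are precisely the complementary-slackness conditions for $(x',\pi)$ to be an optimal primal--dual pair of the capacitated min-cost flow LP $\min\{\sum_e \tau_e x_e : x \text{ is an } s\text{-}t \text{ flow of value } u_0,\ 0 \le x_e \le \nu_e\}$ (feasible precisely because $u_0 \le$ min-cut), and conversely each such pair yields a valid configuration with thin-flow direction $\mathbf 1$. I would fix an optimal dual $\pi^*$, giving the affine ``target trajectory'' $\theta \mapsto \theta\mathbf 1 + \pi^*$.

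One then builds a potential $\Phi$ --- a function of $(\l(\theta),\theta)$ assembled from $\pi^*$ and an optimal primal flow, measuring the discrepancy between $\l(\theta)$ and the target, aggregated across the network so that LP-duality cancellations occur --- with the properties: (i) $\Phi$ is bounded below along the equilibrium trajectory; (ii) $\Phi$ is absolutely continuous in $\theta$, and on each maximal interval on which the configuration $(E'_\theta,E^*_\theta)$ is constant, $\l'(\theta)$ is fixed and $\tfrac{d}{d\theta}\Phi(\theta)$ is a nonpositive constant depending only on that configuration; (iii) this constant is zero exactly when the configuration's thin-flow direction is $\mathbf 1$. Since there are only finitely many configurations, (iii) yields a uniform $c>0$ with $\tfrac{d}{d\theta}\Phi(\theta) \le -c$ whenever the equilibrium is not in a steady-state configuration; combined with (i), the total time spent outside steady-state configurations is at most $(\Phi(0)-\inf\Phi)/c$, so after some finite $T$ (explicitly bounded by bounding $\Phi(0)$ and $\inf\Phi$ in terms of the instance and $\linit$) the configuration is always steady. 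Once in such a configuration, $\l'(\theta)=\mathbf 1$ and hence $\l(\theta)=\l(T)+(\theta-T)\mathbf 1$, whose differences $\l_w(\theta)-\l_v(\theta)$ are frozen; so the configuration never changes again and $\l'_v(\theta)=1$ for all $v$ and all $\theta\ge T$.

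The main obstacle is the construction and analysis of $\Phi$, specifically (ii)--(iii): producing a potential whose $\theta$-derivative is simultaneously determined by the combinatorial configuration, nonpositive, and zero exactly on the steady-state configurations. This is where the combinatorial structure is unavoidable --- one must combine complementary slackness of the capacitated min-cost flow LP with the thin-flow equations to show that any configuration with thin-flow direction $\ne\mathbf 1$ routes flow ``wastefully'' relative to the LP optimum and hence strictly decreases $\Phi$; a naive linear functional of the labels does not work, and finding the correct weighting (as in \CCO{}) is delicate. The other substantive point is (i), boundedness of $\Phi$, which is essentially the a priori queue-length bound and again uses $u_0 \le$ min-cut; ideally a single well-chosen $\Phi$ yields both (i) and the convergence at once, as in \CCO{}.
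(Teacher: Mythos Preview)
Your proposal is correct. The paper does not give an independent proof of \Cref{thm:CCO}; it is cited from \CCO{}, and the paper's contribution is precisely the generalization \Cref{thm:longterm}. Your first route---specializing \Cref{thm:longterm} to $\Einf=\emptyset$ and checking that the inflow condition forces $\lambda=\mathbf 1$ via a feasible capacitated flow and uniqueness of the thin-flow direction---is exactly how the paper positions \Cref{thm:CCO} as a corollary (see the sentence ``Our result for $\Einf=\emptyset$ thus removes the inflow condition\ldots''). One small citation nit: for the configuration $(E,\emptyset)$ with $E$ possibly containing cycles, uniqueness of the thin-flow direction is attributed in the paper to \cite{koch2012phd} rather than \CCL{} (the latter assumes $E'$ acyclic).

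Your second route matches the paper's summary of the original \CCO{} argument in \Cref{sec:overview}. Two points of comparison: the paper makes the potential explicit, namely $\Phi(\theta)=u_0[\l_t(\theta)-\l_s(\theta)]-\sum_{e}\nu_e q_e(\theta)$, obtained as the dual objective of \eqref{eq:CCOpd} evaluated at the current labeling, and works with the convention $\Phi'\ge 0$, bounded above by $\OPT$ (your sketch uses the opposite sign convention, which is harmless). Also, in the special case $\lambda=\mathbf 1$ your final step---that once a steady-state configuration is reached it is frozen forever because all label differences are constant---is correct and simpler than the general case, where the paper needs the additional ``final phases'' argument of \Cref{lem:post_phase} to pass from $q'(\theta)=\ssqd$ to $\l'(\theta)=\lambda$.
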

(Their theorem was stated for $\linit$ corresponding to the empty network, but their approach extends directly to arbitrary initial conditions.)
Our result for $\Einf = \emptyset$ thus removes the inflow condition, while modifying the notion of ``steady state'' appropriately. 

Proving \Cref{thm:longterm} constitutes the bulk of the technical work in this paper.
Our approach has its genesis in the proof of \Cref{thm:CCO} by \CCO, but generalizing their result is by no means straightforward.
It is not the introduction of free arcs that causes difficulty, but rather the violation of the inflow condition.
We will now give a high-level sketch of our approach, highlighting the main new difficulties and novelties compared to \CCO.  
The detailed proof can be found in \Cref{sec:longterm}.

\medskip

Let us first summarize the approach taken by \CCO{} under the inflow condition and $\Einf = \emptyset$.
They first pose and answer the following question (paraphrased):
\emph{which choices of initial condition $\linit$ have the property that $\l(\theta) = \linit + \lambda \theta$ is immediately an equilibrium trajectory?}
(In their case, the steady-state direction $\lambda$ is the all-ones vector.) 
They show that the answer is provided in full by considering the following primal-dual LP:
\begin{equation}\label{eq:CCOpd}
	\begin{aligned}
		\text{minimize} \quad &\sum_{e \in E} \tau_e f_e\\
		\text{s.t.} \qquad f\; &\text{ is an $s$-$t$-flow of value $u_0$}\\
		f_e &\leq \nu_e \phantom{0} \quad\text{for all } e \in E \\
		f_e &\geq 0 \phantom{\nu_e} \quad \text{for all } e \in E
	\end{aligned}\qquad \quad
	\begin{aligned}
		\text{maximize} \quad &u_0(d_t - d_s) - \sum_{e \in E} \nu_e p_e\\
		\text{s.t.} \qquad d_w -d_v -p_e &\leq \tau_e \phantom{0} \quad \text{for all } e=vw \in E\\
		p_e &\geq 0 \phantom{\tau_e} \quad \text{for all } e \in E
	\end{aligned}
\end{equation}
Dual optima $(d^*, p^*)$ are in one-to-one correspondence with steady-state initial conditions: $d^*$ represent a possible initial labeling,
%(with no requirement that $d^*_s = 0$\,---\,$d^*$ can be shifted arbitrarily), 
and $p^*_e$ the queue length on arc $e$.
Primal optima, on the other hand, are in one-to-one correspondence with equilibria departure flows.
In other words: $(\l, x)$ in which $\l_v(\theta) = d^*_v + \theta$ for some dual optimum $d^*$, and $x'(\theta)$ is a primal optimum for almost every $\theta$, is an equilibrium; and all equilibria that are in steady state from the very beginning, are of this form.

It is by no means immediately apparent why answering this question regarding the characterization of steady states is helpful in proving convergence to steady state. %\todo{this sentence feels a bit out of place.}
The key novelty in \CCO{} is that the dual LP provides us with the correct potential function.
Namely, they define, given an equilibrium $(\l, x)$ with corresponding queueing delays $q$, the potential function
\begin{equation}\label{eq:ccopot}
	\Phi(\theta) \coloneqq u_0[\l_t(\theta) - \l_s(\theta)] - \sum_{e=vw \in E} \nu_e q_e(\theta).
\end{equation}
Then $\Phi(\theta)$ is the objective value of the feasible dual solution given by $d_v = \l_v(\theta)$ for all $v \in V$, and $p_e = q_e(\theta)$ for all $e \in E$ (feasibility being a consequence of feasibility of the labeling $\l(\theta)$).
The inflow condition ensures that the primal LP is feasible, and hence that the dual optimum has finite value; thus $\Phi(\theta)$ is bounded.
Moreover, $\Phi(\theta)$ turns out to be monotone\,---\,in fact, strictly monotone with slope bounded away from zero, until the point that steady state is reached.
The proof involves rewriting the derivative of $\Phi$, namely (for almost every $\theta$)
\begin{equation*}\label{eq:CCOderiv}
	\Phi'(\theta) = u_0[\l'_t(\theta) - \l'_s(\theta)] - \sum_{e \in E^*_\theta} \nu_e q'_e(\theta),
\end{equation*}
as an integral over a family of cuts, after which the inequality $\Phi'(\theta) \geq 0$ follows from the thin flow equations.
This shows convergence to steady state in finite time.

To generalize this result, we follow the same basic plan, but each stage presents new (and in some cases significant) additional challenges.
We will only consider $\Einf = \emptyset$ in this discussion, since as mentioned, this is not the major difficulty.
Characterizing steady state solutions is not much more difficult; replacing $\nu_e$ with $\hat{\nu}_e \coloneqq \nu_e\lambda_w$ for all $e=vw \in E$ in \eqref{eq:CCOpd} (where $\lambda$ is as defined in \Cref{thm:longterm}) does the job, in fact.
One can then attempt to define a potential based on the dual objective value in the same way, which would yield
\[ 
\Phi(\theta) = u_0[\l_t(\theta) - \l_s(\theta)] - \sum_{e = vw \in E} \lambda_w \nu_e q_e(\theta).
\]
This candidate potential is bounded, by feasibility of the primal (which is not difficult to show).
But unfortunately, it is \emph{not} monotone; an explicit counterexample can be found in~\cite{DarioThesis}.

It turns out that while a primal-dual LP characterizing the steady state is still the key to producing the correct potential, the situation is much more subtle.
The obvious generalization of \eqref{eq:CCOpd}, with $\nu_e$ replaced by $\hat{\nu}_e$ and no other changes, is not the correct one.
Rather, one must observe that there is a larger class of candidate LPs, from which a choice must be carefully made.
Let $y \in \Rplus^{\answer{E}}$ be such that $(y,\lambda)$ is a thin flow for configuration $(E, \emptyset)$. 
%\nnote{Changed from $(E, \emptyset)$; we did fix $\Einf = \emptysset$ for this part of the discussion, so admittedly it's the same.}
A first observation is that for arcs $e=vw$ with $\lambda_w > \lambda_v$, we may enforce the constraint $f_e = \hat{\nu}_e$,
and for arcs $e=vw$ with $\lambda_w < \lambda_v$, we may enforce the constraint $f_e = 0$, without changing the feasible set.
This comes from observing that if one looks at any set $S$ of the form $S = \{ v : \lambda_v \leq t\}$, for some $t \in \Rplus$, and this set is nontrivial (neither empty nor $V$), then $y_e = \lambda_w \nu_e$ for all $e=vw \in \delta^+(S)$ and $y_e = 0$ for all $e \in \delta^-(S)$ by the thin flow conditions. 

A followup observation, and the more crucial one, is that we have substantial flexibility in the primal objective function.
First, the coefficient in the objective for arcs $e=vw$ with $\lambda_w \neq \lambda_v$ has no impact on the set of optimal solutions, since the flows on these arcs are fixed.
Further, we can arbitrarily rescale (by positive values) the coefficients within each level set of $\lambda$; this again has no impact, because the flows entering and leaving the level set are determined.
In particular, we may replace the objective with $\sum_{e \in E} \hat{\tau}_e f_e$, where $\hat{\tau}_e \coloneqq \tau_e / \lambda_w$, which turns out to be the correct choice.
The mapping from $\l(\theta)$ to a feasible dual assignment also becomes more complicated, and is no longer based purely on the queue lengths.

%\nnote{Killed some stuff to shorten this a bit.}
%
%Once the precisely correct potential is in place, a cut-based argument along the lines of the \CCO{} proof, with some additional ingredients, shows monotonicity of the potential.
%Boundedness is straightforward, by comparison with the objective value of the primal-dual program, which is guaranteed to be feasible.
%One new technical complication is that 
%even once the potential reaches its final value (given by the optimal objective value of the primal-dual program), the
%label derivatives $\l'(\theta)$ need not be constant.
%Rather, the rate of change of \emph{queue lengths} remains constant; some ``unimportant'' phases involving nodes which are no longer used can still occur.

\paragraph{Continuity revisited.}
We have so far discussed a notion of continuity that is very natural from our perspective of equilibria as trajectories $\l$.
But from the perspective of Nash flows over time, this notion perhaps does not lead itself to immediate interpretation.
There are in fact rather a large number of different questions about continuity that one could ask.

We will not attempt to exhaustively consider all possible or interesting notions of continuity in this work. %extended abstract \todo{adapt for full version}.
Instead, we will discuss a few results that demonstrate that our ``primordial'' continuity result, with minimal effort, implies other forms of continuity.

Consider some instance, and the corresponding equilibrium trajectory $\l$ starting (for simplicity) from the empty network at time $0$.
Now suppose we perturb the transit times slightly, 
leading to a new equilibrium trajectory $\hat{\l}$ on the perturbed instance.
We show that $\hat{\l}$ remains similar to $\l$.
\begin{theorem}\label{thm:changetau}
	Let $G=(V,E)$ be a given network, with fixed capacities $\nu_e$ and inflow rate $u_0$, but variable transit times $\tau_e$.
	Let $\l^{(\tau)} \in \Traj$ be the trajectory corresponding to transit time vector $\tau$, starting from an empty network.
	Then $\tau \mapsto \l^{(\tau)}$ is continuous.
\end{theorem}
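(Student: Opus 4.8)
\textbf{Proof proposal for \Cref{thm:changetau}.}
My plan is to turn ``varying the transit times'' into a special case of ``varying the initial condition'' and then rerun the machinery behind \Cref{thm:label-continuity}. (A naive reduction to a \emph{fixed} network, subdividing each arc and encoding $\tau_e$ as a preloaded queue on an auxiliary arc, fails: a queue on a finite-capacity arc drains as soon as its inflow drops below capacity, so $\tau_e$ cannot be faithfully represented as a queue.) Instead I would enlarge the state space. The key point is that the equilibrium vector field $X^{(\tau)}$ of the instance with transit times $\tau$ depends on $\tau$ \emph{only} through which configuration $(E'_\l,E^*_\l)$ holds at $\l$, and, by \eqref{eq:actres}, that configuration depends on $\tau$ only through the signs of the affine quantities $\l_w-\l_v-\tau_e$, $e=vw\in E$; the thin flow system \eqref{eq:tf-flow}--\eqref{eq:l'_v_tight:base} itself never mentions $\tau$. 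So I would define on $\R^V\times\R^E$ the extended vector field $\tilde X(\l,\tau)\coloneqq(X^{(\tau)}(\l),0)$ on the extended feasible set $\tilde\Omega\coloneqq\{(\l,\tau):\l\in\Omega^{(\tau)}\}$, where $\Omega^{(\tau)}$ is the feasible set of the instance with transit times $\tau$. Since the $\tau$-component of $\tilde X$ vanishes, every trajectory of $\tilde X$ stays in a single slice $\{\tau=c\}$, on which it is exactly an equilibrium trajectory of the instance with transit times $c$. Hence $\tilde\Omega$ is invariant, trajectories are unique (\Cref{thm:uniqueness} applied on the slice), and, by \Cref{thm:longterm} applied on the slice (with the choice of $\Einf$ as in \Cref{sec:overview}: $\Einf=\emptyset$ at the empty-network start, $\Einf=E^*$ at later restart points), they reach a steady state --- and the steady-state direction $\lambda$ solving the thin flow equations for $(E,\emptyset)$ is the \emph{same} for every $\tau$, again because $\tau$ does not enter the thin flow system.

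The reason this reformulation helps is that $\tilde X$ has exactly the structure exploited in the proof of \Cref{thm:label-continuity}: it is piecewise constant; its cells of constancy are the sign cells of the linear functionals $\l_w-\l_v-\tau_e$, genuine hyperplanes in $\R^V\times\R^E$; and it is conically self-similar about any feasible $(\linit,\tau_0)$, since the hyperplanes through $(\linit,\tau_0)$ are precisely those of the \emph{tight} arcs, for which the $(\linit,\tau_0)$-contributions cancel, so the configuration at $(\linit,\tau_0)+\alpha d$ is independent of $\alpha>0$ along directions $d$ of the associated local instance. As in \Cref{sec:overview}, conic rescaling then converts \Cref{thm:longterm} into local continuity of the trajectory map: a trajectory of $\tilde X$ started at distance $\epsilon$ from $(\linit,\tau_0)$ reaches steady state by time $\approx\epsilon T$ and afterward stays within $O(\epsilon)$ of the central trajectory $\theta\mapsto(\linit+\theta\lambda,\tau_0)$, with $T$ the time-to-steady-state bound of \Cref{sec:longterm}; patching consecutive phases upgrades this to continuity of the full trajectory map $\tilde\Psi$ on $\tilde\Omega$, just as for $\Psi$. \Cref{thm:changetau} then follows by composing continuous maps: $\tau\mapsto(\mathrm{dist}_\tau(s,\cdot),\tau)$ is continuous (shortest-path distances are $1$-Lipschitz in the arc lengths) and lands in $\tilde\Omega$ (the empty-network labeling is feasible, with empty resetting set); $\tilde\Psi$ is continuous by the above; and restricting a trajectory to its $V$-coordinates is a bounded linear map. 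The composition is exactly $\tau\mapsto\l^{(\tau)}$.

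The hard part is making ``just as for $\Psi$'' honest, i.e.\ checking that the proof of \Cref{thm:label-continuity} survives the two changes. First, the conic-rescaling argument yields continuity only if the time $T$ to steady state and the associated overshoot constant are bounded \emph{locally uniformly} in the base point --- uniformly over the compact sphere of admissible directions $d$ at a fixed $(\linit,\tau_0)$, and as $(\linit,\tau_0)$ itself varies --- which forces one to verify that the explicit bounds of \Cref{sec:longterm} are continuous in, and do not blow up with, the instance data and the starting labeling, and to control directions steering toward the boundary of feasibility; this is precisely the delicate point already confronted in the proof of \Cref{thm:label-continuity}, now carrying the extra $\tau$-coordinates along. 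Second, one must check that no step of that proof secretly used that the field came \emph{literally} from a network rather than the mild generalization $\tilde X$; but the only place networks enter is the invocation of \Cref{thm:longterm}, which here is applied purely slicewise, since trajectories of $\tilde X$ never leave a slice $\{\tau=c\}$. With those verifications in place, the upgrade from \Cref{thm:label-continuity} is indeed ``minimal effort'', as advertised.
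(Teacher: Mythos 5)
Your proposal is correct in outline and rests on exactly the same key observation as the paper's proof: the thin-flow system \eqref{eq:tf-flow}--\eqref{eq:l'_v_tight:base} never mentions $\tau$, so for a fixed configuration the thin flow direction (and hence the steady-state direction $\lambda$) is $\tau$-independent, and $\tau$ enters only through which configuration holds, via the signs of the affine quantities $\l_w - \l_v - \tau_e$. Where you differ is the packaging. The paper keeps $\tau$ as a parameter and proves a variant local-continuity lemma (\Cref{lem:local_continuity_tau}) \emph{directly}: it shows that, fixing the local configuration, the potential $\Phi^{(i)}(0)$ defined for the instance with transit times $\tau^{(i)}$ converges to the optimal value $\OPT$ of the primal--dual LP for the nominal instance, that the positive lower bound $\eta$ on $\Phi'$ depends only on the configuration (not on $\tau$), and that the initial slacks converge to zero; \Cref{thm:ssquantitative} then yields $T_1^{(i)}, T_2^{(i)} \to 0$ with no conic rescaling. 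Your version lifts $\tau$ into the state, viewing $\tilde X(\l,\tau)=(X^{(\tau)}(\l),0)$ on $\R^V\times\R^E$ so that the switching surfaces become hyperplanes in $(\l,\tau)$, and then reruns the ball-compactness plus conic-rescaling route of \Cref{lem:local_continuity,lem:scaling} in the extended space (with the local instance defined by dropping non-tight arcs and freeing resetting ones, exactly as in the paper). This is a valid and somewhat more unified geometric picture --- and you correctly diagnose that the naive preloaded-queue reduction fails, and correctly compose with $\tau\mapsto(\mathrm{dist}_\tau(s,\cdot),\tau)$ for the empty-network start --- but the substantive verification burden is identical: one must check that the quantities entering the bound of \Cref{thm:ssquantitative} ($\OPT$, $\Phi(0)$, the slacks) are continuous in $\tau$ while $\eta$, $\Delta$, $\ldmax$ and $\lambda$ are $\tau$-free, which is precisely what \Cref{lem:local_continuity_tau} spells out. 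Two small technical points you gloss over: the unit ball around $(\linit,\tau_0)$ in the extended space contains slices with transit-time vectors at distance up to $1$ from $\tau_0$, so one should shrink the ball (or use the paper's sequential formulation) to stay within valid transit-time vectors (no zero-cost cycles), and lem:scaling must be re-verified in the extended space --- both routine, but worth stating.
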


To see how this follows from our main continuity results, consider a single phase (continuity for the entire trajectory then follows by pasting this together appropriately, as with our main continuity result).
The steady state direction $\lambda$ does not depend on $\tau$ directly (only through the configuration that defines the phase).
The optimal solution and objective value to the primal-dual LP does change, but the optimal objective value, and hence the value that the potential $\Phi$ takes upon reaching steady state, is a continuous function of $\tau$.
The theorem then follows easily from the monotonicity of $\Phi$.
%\todo{Might need changing. Leaving for now.}

Similarly, we can consider perturbing the capacities $\nu_e$ and/or the inflow rate $u_0$.
Again we get continuity, though necessarily of a slightly weaker form: we must restrict ourselves to a compact interval.
This is because the steady state direction $\lambda$ \emph{does} change as $\nu$ or $u_0$ changes\,---\,but only continuously.
This means that once the final steady state of the overall equilibrium is reached, the trajectories may slowly diverge.
\begin{theorem}\label{thm:changenu}
	Let $G=(V,E)$, be a given network, with fixed transit times $\tau_e$ but variable inflow rate $u_0$ and capacities $\nu_e$.
	Let $\l^{(\nu, u_0)} \in \Traj$ be the trajectory corresponding to setting capacities to $\nu$ and inflow rate to $u_0$, starting from an empty network.
	Then $(\nu, u_0) \mapsto \l^{(\nu, u_0)}(\theta)$ is continuous for any fixed $\theta \in \Rplus$.
\end{theorem}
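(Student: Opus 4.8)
The plan is to follow the argument sketched for \Cref{thm:changetau} almost verbatim; the only genuinely new feature is that the steady-state direction of a phase now depends (continuously) on $(\nu,u_0)$, and it is precisely this dependence that forces the statement to be about a fixed time $\theta$ rather than about closeness for all time. Fix a base instance $(\nu^0,u_0^0)$ with trajectory $\l^0 := \l^{(\nu^0,u_0^0)}$ from the empty network, and fix $\theta$. Exactly as in the proof of \Cref{thm:label-continuity}, I would reduce to a \emph{local} statement: it suffices to show that around each feasible labeling $\linit$ visited by $\l^0$, the trajectory depends continuously on $(\nu,u_0)$ over the single phase that ends when the associated local instance $(\Gsub,\Einf)$, $\Einf = E^*_{\linit}$, reaches steady state in the sense of \Cref{thm:longterm}; the pasting argument from \Cref{thm:label-continuity} (which does not require finitely many phases, and uses compactness of $\l^0([0,\theta])$ and the finitely-many-regions structure of the vector field $X$) then assembles these local statements into continuity of $(\nu,u_0)\mapsto \l^{(\nu,u_0)}(\theta)$.

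For a single phase I would first record that the thin-flow direction $\lambda = \lambda(\nu,u_0)$ for the relevant configuration is continuous in $(\nu,u_0)$. Given $(\nu^{(n)},u_0^{(n)})\to(\nu,u_0)$, the directions $\lambda^{(n)}$ are bounded by $\ldmax^{(n)}\to\ldmax$, and along any convergent subsequence — taken together with a convergent subsequence of accompanying static flows, which are bounded — the thin-flow conditions \eqref{eq:tf-flow}--\eqref{eq:l'_v_tight:base} pass to the limit, since each $\rho_e$ is continuous in all of its arguments as $\nu_e>0$. Hence the limit is a thin flow for $(\nu,u_0)$, so it equals $\lambda(\nu,u_0)$ by uniqueness of the direction, and therefore $\lambda^{(n)}\to\lambda$.

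Next I would observe that the primal--dual LP governing this phase — namely \eqref{eq:CCOpd} with $\nu_e$ replaced by $\hat\nu_e := \nu_e\lambda_w$ and $\tau_e$ by $\hat\tau_e := \tau_e/\lambda_w$, with flow value $u_0$ — has optimal value continuous in $(\nu,u_0)$: the flow $y$ with $(y,\lambda)$ a thin flow for this configuration is primal-feasible (one checks $y_e\le\hat\nu_e$ arc by arc), so the feasible region is a nonempty bounded polytope depending continuously on the data, $\hat\nu$ and $\hat\tau$ depend continuously on $(\nu,u_0)$ through $\lambda$ (using $\lambda_w>0$ from \Cref{sec:longterm}), and the optimum of a linear objective over such a region is continuous. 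Plugging both facts into the monotone-potential argument: for each $(\nu,u_0)$ the potential $\Phi^{(\nu,u_0)}$ for this phase is nondecreasing, bounded above by the (now continuous in $(\nu,u_0)$) optimal LP value, and increases with slope bounded away from $0$ uniformly over a neighborhood of $(\nu^0,u_0^0)$ until the queue-change rates become those prescribed by $\lambda$; hence the time at which the phase enters its linear regime is bounded uniformly over that neighborhood. Combined with continuity of $\lambda$ and, inductively, continuity of the labeling at the start of the phase, this keeps the trajectory within $O(\text{perturbation})$ of $\l^0$ on the phase and thereafter; propagating along $[0,\theta]$ gives $\l^{(\nu,u_0)}(\theta)\to\l^0(\theta)$. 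The same estimate cannot survive $\sup_{\theta\ge 0}$ because the global steady-state direction is $\lambda(\nu,u_0)\neq\lambda(\nu^0,u_0^0)$, so past the global steady-state time the two trajectories drift apart linearly in $\theta$ — which is exactly why the theorem is stated for a fixed $\theta$.

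The main obstacle is the one already faced for \Cref{thm:label-continuity}: controlling the trajectory over a single phase \emph{continuously in the data} even though a phase may contain infinitely many sub-phases, and even though (as noted after \Cref{thm:longterm}) the potential attaining its final value does not by itself force $\l'$ to become constant. I expect to reuse the phase-level machinery of \Cref{sec:longterm} essentially unchanged, the only additional input being the joint continuity in $(\nu,u_0)$ of the direction $\lambda$ and of the LP optimum established above; the sole conceptual novelty relative to \Cref{thm:changetau} is bookkeeping for the moving direction $\lambda$, and that is precisely what confines the conclusion to a fixed time.
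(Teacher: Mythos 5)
Your high-level plan---reduce to a local single-phase statement, paste over a compact interval, and note that the changing steady-state direction $\lambda^{(\nu,u_0)}$ confines the claim to a fixed $\theta$---matches the paper's. Your argument that $(\nu,u_0)\mapsto\lambda^{(\nu,u_0)}$ is continuous is also correct, but by a different route than the paper: you use compactness of $\{\lambda^{(n)}\}$ together with uniqueness of thin-flow directions, whereas \Cref{lem:lambda-continuous} argues via ordered partitions and continuity of matrix inversion on the closed sets $N[\mathcal{P}]$. Both work; your subsequence argument is arguably lighter.

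The genuine divergence, and the gap, is in the local lemma. The paper's \Cref{lem:local_continuity_nu} does \emph{not} run the potential-function machinery. The point it exploits is that $\Omega$ does not depend on $\nu$ and the initial labeling $\linit$ is literally the same for every $\nu$, so the configuration $(E'_{\linit},E^*_{\linit})$ is also $\nu$-independent; the $\alpha$-extension (now known to be the unique equilibrium) gives $\l^\nu(\theta)=\linit+\theta\lambda^\nu$ on an interval of uniformly positive length, and local continuity drops out of continuity of $\lambda^\nu$ with no further apparatus. You instead rerun the monotone-potential argument as in \Cref{lem:local_continuity_tau}, but a key step there was that $\Phi'$ depends on $(E',E^*)$ only and \emph{not} on $\tau$, so the lower bound $\eta$ was automatically uniform across the perturbation. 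Here $\Phi'$, the arc classes $E^>,E^<,E^=$, and hence $\eta^\nu$ all \emph{do} vary with $\nu$; your assertion that $\Phi'$ is ``bounded away from $0$ uniformly over a neighborhood'' of $(\nu^0,u_0^0)$ is exactly the step that requires justification, and it is the one that would convert ``$\OPT-\Phi(\text{start})$ is small'' into ``time to steady state is small''. You also overlook that with $\linit$ fixed, one has $\Phi^{\nu}(0)=\OPT^{\nu}$ for every $\nu$ (the starting labeling is steady-state in the local instance for \emph{all} capacities, by \Cref{thm:optimal_iff_steadystate}), which makes the potential argument degenerate on the first phase and points directly at the paper's shortcut.
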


\paragraph{Single-commodity instances in general.}
In \Cref{sec:single_com}, we extend our results to general single-commodity instances.
Here there may be multiple origins, each with a fixed inflow rate, and multiple destinations; users are however indifferent between destinations, and will choose whichever destination they can reach first.

Handling multiple destinations is quite straightforward; simply contract together the destinations. 
Handling multiple origins is much less straightforward. 
It has been closely studied in \cite{sering2018multiterminal} (see also \cite{sering2020diss}), where equilibria are shown to exist and are described by a thin flow variant. 
We observe that an equilibrium in this model can also be seen as the concatenation of a finite number of equilibria of single-origin single-destination instances. This enables us to transfer all results form the single-source-single-destination setting to the general single-commodity setting.

	\section{Convergence to steady-state}\label{sec:longterm}

We are given a generalized network $G=(V,E)$ with free arcs $\Einf \subseteq E$. 
\nnote{Removed the following, which should no longer be needed given the updates: (For ease of notation we refrain from  using $\hat{G}=(V, \hat{E})$ as notation for a generalized network in what follows; the definitions for active and resetting arcs for generalized networks given in \Cref{eq:gen-actres} apply.)}
Let $\lambda$ denote the thin flow direction for configuration $(E, \Einf)$; we will refer to this as the \emph{steady-state direction}.

\begin{definition}\label{def:ss}
	Given an equilibrium trajectory $\l$, and $T \geq 0$, we say that $\l$ \emph{has reached steady state by time $T$} if $\l'(\theta) = \lambda$ for every $\theta \geq T$ where $\l$ is differentiable.
\end{definition}

A bound on the time to reach steady state must depend on the starting point $\linit$ of the trajectory.
To see this, just consider a single arc from $s$ to $t$ with sufficient capacity that will be non-resetting in steady state. 
The initial conditions can define an arbitrary long queue on that arc, which might need arbitrarily long time to deplete. 
We will provide a bound on the time to reach steady state that depends on the distance between $\linit$ and the \emph{steady-state set} $\Iss$, which we define next, and which goes to 0 as this distance goes to 0.

\begin{definition}
	The \emph{steady-state set} $\Iss$ denotes the set of feasible labelings such that an equilibrium trajectory starting from that initial point is immediately in steady state.
\end{definition}
That is, a trajectory reaches steady state precisely when it enters $\Iss$.

%Recall that we measure distance using the $\ell_{\infty}$-norm.
\begin{theorem}
	\label{thm:longterm-strong}
	Let $G=(V,E)$ with free arcs  $\Einf \subseteq E$ be a generalized network.
	There exists a constant $C_T$ depending only on the generalized network such that for any starting point $\linit { \in \Omega}$, % within distance $r$ of $\Iss$, 
	{an} equilibrium trajectory $\l$ starting from $\linit$ reaches steady state by time $C_T \cdot d(\linit, \Iss)$.
\end{theorem}
This clearly strengthens \Cref{thm:longterm}; the remainder of this section will constitute a proof of \Cref{thm:longterm-strong}.

Let $(\l,x)$ be any equilibrium starting from $\linit$, and let $q$ be the corresponding queueing delays.

\nnote{Moved the nonzero inflow rate lemma to Section 2.}

\paragraph{A primal-dual pair that almost characterizes steady state.}
We will begin by introducing a primal-dual pair of linear programs that will be hugely important. 
They are motivated by the primal-dual pair given by \CCO{} for the case $\Einf = \emptyset$, $\lambda \equiv 1$.
They will \emph{essentially} characterize the set of initial queue lengths for which the equilibrium is immediately in steady state.
The ``essentially'' here involves a technicality that we will come to in due course.

It will be useful to categorize non-free arcs according to their status with respect to $\lambda$.
Let
\begin{align*}
	E^> &\coloneqq \Set{ vw \in E \setminus \Einf | \lambda_w > \lambda_v},\\
	E^< &\coloneqq  \Set{vw \in E \setminus \Einf | \lambda_w < \lambda_v} \text{ and}\\
	E^= &\coloneqq  \Set{vw \in E \setminus \Einf | \lambda_w = \lambda_v}.
\end{align*}
Define 
\begin{equation}\label{eq:hatnutau}
	\hat{\nu}_e := \nu_e \cdot \lambda_w \qquad \text{and} \qquad \hat{\tau}_e := \tau_e / \lambda_w \qquad \text{for all } e=vw \in E.
\end{equation}

Consider the following minimum cost flow LP. Although we require it to be supported on $E \setminus E^<$, we deliberately include the flow variables for $E^<$. 
\begin{equation}\tag{P}\label{eq:primal}
	\begin{aligned}
		\text{minimize} \quad&\sum_{e \in E} \hat{\tau}_e f_e\\
		\text{s.t.} \qquad f \;&\text{ is an $s$-$t$-flow of value $u_0$}\\
		f_e &\leq \hat{\nu}_e \phantom{0}\qquad\text{for all } e \in E^= \cup E^< \\
		f_e &= \hat{\nu}_e  \phantom{0}\qquad\text{for all } e \in \Einf \cup E^>\\
		f_e &= 0 \phantom{\hat{\nu}_e}\qquad\text{for all } e \in E^<\\
		f &\geq 0.
	\end{aligned}
\end{equation}
Its dual (after some minor massaging) is 
\begin{equation}\tag{D}\label{eq:dual}
	\begin{aligned}
		\text{maximize} \quad &u_0(d_t - d_s) -\sum_{e \in E} \hat{\nu}_e p_e\\
		\text{s.t.} \qquad d_w -d_v -p_e &\leq \hat{\tau}_e\phantom{0}\qquad\text{for all } e=vw \in E \setminus E^<\\
		p_e &\geq 0 \phantom{\hat{\tau}_e}\qquad\text{for all } e \in E^= \cup E^<\\
	\end{aligned}
\end{equation}

%Given an equilibrium trajectory $\l$, 
We define the \emph{slack} of an arc $e \in E$ at some time $\theta$ by % for the equilibrium trajectory \$l$ at some time $\theta$ by 
$s_e(\theta) \coloneqq [\tau_e + \l_v(\theta) - \l_w(\theta)]^+$ for $e \in E \setminus \Einf$, and $s_e(\theta) \coloneqq 0$ if $e \in \Einf$.
Note that active arcs have zero slack; it is a measure of how ``inactive'' an arc is.

Now make the following definitions, for any time $\theta$:
\begin{equation} \label{eq:nash_as_dual}
	\begin{aligned}
		d_v(\theta) &= \tfrac{\ell_v(\theta)}{\lambda_v}  \;\qquad\qquad\qquad\qquad\qquad\qquad {\text{for all }} v \in V\\
		p_e(\theta) &= \begin{cases} 
			%p_e &= \l_v(\theta)\left(\tfrac1{\lambda_w} - \tfrac1{\lambda_v}\right)\qquad &&\forall e=vw \in E^> \setminus E'_\theta\\
			\tfrac{\ell_w(\theta)}{\lambda_w} - \tfrac{\ell_v(\theta)}{\lambda_v} - \hat{\tau}_e + \tfrac{s_e(\theta)}{\lambda_w}\qquad &{\text{for all }} e=vw \in \Einf \cup E^>\\
			%p_e &= \Bigl[\tfrac{\ell_w(\theta)}{\lambda_w} - \tfrac{\ell_v(\theta)}{\lambda_v} - \hat{\tau}_e\Bigr]^+ \qquad&&\forall e=vw \in E^=\\
			\tfrac{q_e(\theta)}{\lambda_w} &{\text{for all }} e=vw \in E^= \cup E^< .
		\end{cases}
	\end{aligned} 
\end{equation}

\begin{lemma}\label{lem:nash_as_dual}
	%Let $\l$ be an equilibrium trajectory, with associated queue lengths $q_e$ and slacks $s_e$. 
	For any $\theta \geq 0$, $(d(\theta),p(\theta))$ is feasible for \eqref{eq:dual}.
\end{lemma}
\begin{proof}
	For the second dual constraint, we have that for any $e=vw \in E^= \cup E^<$, $p_e(\theta) = q_e(\theta)/\lambda_w \geq 0$.
	For the first dual constraint, consider any $e=vw \in E \setminus E^<$.
	If $e \in \Einf \cup E^>$, then 
	\[ d_w(\theta) - d_v(\theta) - p_e(\theta) = \hat{\tau}_e - s_e(\theta)/\lambda_w \leq \hat{\tau}_e. \]
	If instead $e \in E^=$, then 
	\[ d_w(\theta) - d_v(\theta) - p_e(\theta) = \tfrac1{\lambda_w}(\l_w(\theta) - \l_v(\theta) - q_e(\theta)) \leq \tfrac1{\lambda_w}\tau_e = \hat{\tau}_e. \]
\end{proof}

Let $\theta$ be a point of differentiability.
We can view $x'(\theta)$ as a primal solution to \eqref{eq:primal}; however, it will typically not be feasible.
If $(y,\lambda)$ is any solution to the thin flow equations for configuration $(E, \Einf)$, then it can easily be checked that $y$ \emph{is} feasible to the primal. 
We can ask the question: under what conditions is $x'(\theta)$ along with $(d(\theta), p(\theta))$ a feasible and \emph{optimal} primal-dual pair? 
We will give a sufficient condition in what follows; later it will turn out that this condition is in fact a precise characterization.

Call a node $v \in V$ \emph{active} at time $\theta$ if $v$ lies on some $s$-$t$-path in $E'_\theta$.
\begin{definition}
	We say that %the equilibrium trajectory 
	$\l$ has the \emph{weak steady-state property} (or just \emph{WSS property}) at a point of differentiability $\theta$ if % $\theta$ is a point of differentiability and 
	%\[ \l'_v(\theta) = \lambda_v \quad \text{for every } v \in \alivenodes{\theta}. \]
	\begin{inparaenum}[(i)] \item $\l'_v(\theta) = \lambda_v$ for every active node $v$, and \item $\l'_v(\theta) \geq \lambda_v$ for every inactive node $v$. \label{it:monotonicity}\end{inparaenum}
\end{definition}

The more crucial part of the definition is (i); (ii) will be convenient later for technical reasons.
If $\l$ has the WSS property for all $\theta' \geq \theta$, then $\l$ has reached steady state in the weaker sense alluded to in the introduction: all queues will grow at linear rate from time $\theta$ onwards. This follows as by \Cref{lem:labelsinc}, queueing arcs are always flow carrying. Moreover, for a flow carrying arc both end points are always active and thus $q_e'(\theta')= \l'_w(\theta') - \l'_v(\theta') = \lambda_w - \lambda_v$. 
For some intuition on the distinction between this and the stronger definition of reaching steady state given in \Cref{def:ss}, 
consider the example 
in \Cref{fig:wssnotss}.
This has the WSS property for all times.
Initially, the shortest path to $v$ (which is not active) is through $u$, and $\l'_v(\theta) = \l'_u(\theta) = 2$.
After the queue on $su$ has grown sufficiently large at some point $\theta'$ however, $sv$ becomes active, and $\l'_v(\theta')$ decreases to 1. 
At this point steady state has been reached.

\begin{figure}
	\centering
	\includegraphics[width=0.95\textwidth]{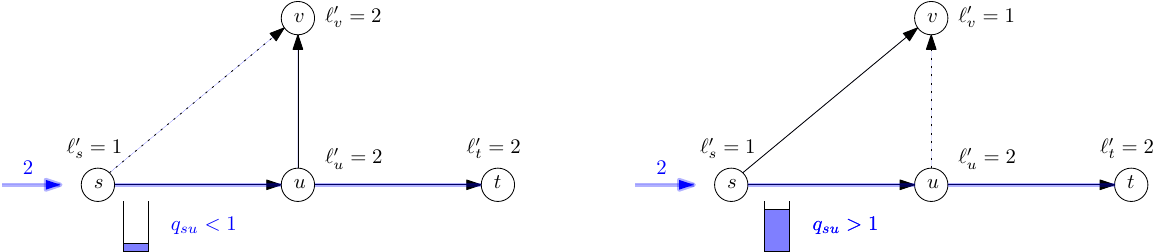}
	\caption{In the example the difference between the WSS property and steady state is depicted. Dotted arcs are inactive, solid arcs are active. All capacities are one and $3=\tau_{sv}>\tau_{su}+ \tau_{uv}=1+1$. 
		In the beginning, the label of $v$ is determined by $uv$, until $sv$ becomes active when the queue on $su$ reaches length 1.
		From this point, the label of $v$ is determined by $sv$ (and $uv$ becomes inactive). Since $v$ is not active the equilibrium already has the WSS property on the left, but is not yet in steady-state. Steady state is reached as soon as $sv$ becomes active.}
	\label{fig:wssnotss}
\end{figure}

\begin{lemma}
	\label{lem:WSS_properties}
	Suppose that $\l$ has the WSS property at a point of differentiability  $\theta$.
	Then the following holds:
	\begin{enumerate}[(i)]
		\item \label{it:arcs_have_good_type}
		$e \in E^*_\theta$\/ for all $e \in E^>$, and
		$x'_e(\theta) = 0$ for all $e \in E^<$.
		
		\item \label{it:WSS_and_opt} $x'(\theta)$ is an optimal solution to \eqref{eq:primal}, and $(d(\theta), p(\theta))$ is an optimal solution to \eqref{eq:dual}. %easible and optimal primal-dual pair The associated primal-dual pair is feasible and optimal.
	\end{enumerate}
\end{lemma}
\begin{proof}
	We fix $\theta$ and omit it if convenient, i.e., we write $x_e' $ instead of $x_e'(\theta)$, $\l'_v$ for $\l'_v(\theta)$ and so on.
	\begin{enumerate}[(i)]
		\item
		% \modified{Let $y$ be a flow so that $(y,\lambda)$ is a full solution to the thin flow equations for configuration $(E, \Einf)$.}
		Consider the cut $Q_z = \{ v \in V \mid  \lambda_v < z \}$ for any $z$.
		Note that all arcs that leave $Q_z$ are in $\Einf \cup E^>$ and all arcs that enter $Q_z$ are in $\Einf \cup E^<$.
		Thus, on all leaving arcs $e=vw$ as well as on the entering arcs $e=vw \in \Einf$ , we have $y_e = \nu_e \lambda_w$, while $y_e=0$ on $e\in E^<$. 
		Now, consider the flow $x'$ on arcs crossing the cut. On all arcs leaving $Q_z$ we have either $0=x_e' <y_e$ or
		$x'_e \leq \l'_w \nu_e  = \lambda_w \nu_e =y_e$. 
		On the other hand, on the arcs entering $Q_z$ we have $x_e' \geq 0 = y_e$ for $e \in E^<$ and $x'_e =\l'_w \nu_e = \lambda_w \nu_e =y_e$ for $e \in \Einf$, by the WSS property, using that every such arc is always resetting and hence lies on an active $s$-$t$-path.
		Since $x'$ and $y$ are both $s$-$t$-flows {of value $u_0$} and hence have the same net flow across $Q_z$, all inequalities must hold with equality. In particular $x_e' = 0 $ for every $e \in E^<$ crossing $Q_z$, and every arc $e \in E^>$ that crosses $Q_z$ is flow carrying in $x'$.
		In the latter case, $v$ and $w$ both certainly lie on an active $s$-$t$-path, and so $\l'_w > \l'_v$ by the WSS property, meaning that (since we are at a point of differentiability) $e \in E^*_\theta$.
		Thus, we showed the properties we want for all arcs crossing some $Q_z$.
		But every arc in $E^>$ and $E^<$ crosses $Q_z$ for some choice of $z$, and so the claim follows.
		\item 
		Dual feasibility has already been observed in \Cref{lem:nash_as_dual}.
		Primal feasibility follows straightforwardly from (\ref{it:arcs_have_good_type}):
		\begin{itemize}
			\item For arcs $e \in E^<$ we have $x_e'=0$ as desired. 
			\item For arcs $e=vw \in E^> \cup \Einf$ we have that $e \in E^*_\theta$ and so $x_e' = \l_w' \nu_e = \lambda_w \nu_e =\hat\nu_e$ (using that $x'_e > 0$ by \Cref{lem:labelsinc}). 
			\item 
			For arcs $e = vw \in E^=$, either $x_e'=0$, or if $x'_e > 0$ (so that $\l'_w = \lambda_w$), then $x_e' \leq \l_w' \nu_e = \lambda_w \nu_e =\hat\nu_e$. In either case, $0 \leq x_e '\leq \hat{\nu}_e$.
		\end{itemize}
		
		It remains to check that the complementary slackness conditions are satisfied.
		These are 
		\begin{subequations}
			\renewcommand{\theequation}{\theparentequation.\arabic{equation}}
			\begin{align}
				f_e = 0 \quad&\text{or}\quad p_e = {d_w - d_v - \hat{\tau}_e} && \text{for all } e \in E \setminus E^< \label{eq:cs1}\\
				f_e = {\hat{\nu}}_e \quad&\text{or}\quad p_e = 0 && \text{for all } e \in E^= \cup E^<. \label{eq:cs2}
			\end{align}
		\end{subequations}
		These again follow rather easily from (\ref{it:arcs_have_good_type}) and \eqref{eq:nash_as_dual}.
		\begin{itemize}
			\item \emph{Condition \eqref{eq:cs1}}:
			For $e\in \Einf$, we defined $s_e=0$ and thus $p_e = {d_w - d_v - \hat{\tau}_e}$.
			For $e \in E^> \cup E^=$ the second part of condition holds if the arc is active, and $x'_e = 0$ if it is inactive.
			
			\item \emph{Condition \eqref{eq:cs2}}:
			For $e \in E^=$, if $p_e >0$ this implies that $e \in E^*_\theta$ and thus {that it is} flow carrying (\Cref{lem:labelsinc}). This means $x_e' =\l'_w \nu_e =\lambda_w \nu_e =\hat\nu_e$.
			For $e \in E^<$ we always have $p_e=0$ by \eqref{it:arcs_have_good_type}.
		\end{itemize}
	\end{enumerate}
\end{proof}

\paragraph{A potential function.}
Let
\begin{equation} \label{eq:potential_function}
	\Phi(\theta) \coloneqq u_0(d_t(\theta) - d_s(\theta)) - \sum_{e=vw \in E} \hat{\nu}_e p_e(\theta).
\end{equation}
Note that if $\Einf = \emptyset$ and $\lambda \equiv 1$, then $\Phi(\theta) = u_0(\l_t(\theta) - \l_s(\theta)) -\sum_{e \in E} \nu_e q_e(\theta)$ exactly matches the potential used in \CCO.

Boundedness is the easier claim.
Since the primal \eqref{eq:primal} has a feasible solution $y$, and the dual is feasible as well, the primal-dual pair has a finite objective value, which we denote by $\OPT$.
\begin{lemma}\label{lem:bound}
	$\Phi(\theta) \leq \OPT$.
\end{lemma}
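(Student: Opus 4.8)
The plan is to show that $\Phi(\theta)$ equals the dual objective value of a \emph{feasible} solution to \eqref{eq:dual}, whence $\Phi(\theta) \leq \OPT$ by weak duality. The natural candidate is $d_v := \tilde{d}_v = \l_v(\theta)/\lambda_v$, with $p$ chosen as the pointwise-minimal feasible response: $p_e = d_w - d_v - \hat{\tau}_e$ for $e=vw \in E \setminus (E^= \cup E^<)$ and $p_e = [d_w - d_v - \hat{\tau}_e]^+$ for $e \in E^= \cup E^<$. First I would verify that $(d, p)$ is dual-feasible: the constraints $d_w - d_v - p_e \leq \hat{\tau}_e$ hold with equality on $E \setminus (E^= \cup E^<)$ by construction, and hold on $E^= \cup E^<$ because $p_e \geq d_w - d_v - \hat{\tau}_e$; nonnegativity of $p_e$ on $E^= \cup E^<$ is immediate from the $[\cdot]^+$. (For free arcs $e \in \Einf$, $\hat{\tau}_e$ is whatever it is; note $\tau_{ts}=0$, but feasibility only needs the stated inequality, which holds with equality.)

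Next I would compute the dual objective $-\sum_{e \in E} \hat{\nu}_e p_e$ for this solution and match it termwise with \eqref{eq:potential_function}. Split $E$ into the four groups $\Einf \cup E^>$, the resetting non-free-non-$E^>$ arcs, $E^<$, and $E^=$ (further split by activity). For $e=vw \in \Einf \cup E^>$: $p_e = \tfrac{\l_w(\theta)}{\lambda_w} - \tfrac{\l_v(\theta)}{\lambda_v} - \hat{\tau}_e$, contributing exactly the first sum of $\Phi$. For $e=vw \in E^<$: since $\lambda_w < \lambda_v$ and $\l_w(\theta) - \l_v(\theta) \leq q_e(\theta) + \tau_e$ but queues contribute in a way that makes $d_w - d_v - \hat{\tau}_e$ typically negative, I expect $p_e = 0$ (this needs the observation that $s \in S$-type level-set reasoning, or directly that arcs in $E^<$ are non-resetting along the steady direction — but here for an \emph{arbitrary} equilibrium one needs a genuine argument); more carefully, for arbitrary $\theta$ one uses $\hat{\tau}_e = \tau_e/\lambda_w$ and that $\lambda_w < \lambda_v$ forces $\l_w(\theta)/\lambda_w - \l_v(\theta)/\lambda_v \leq \hat\tau_e$ isn't automatic, so I would instead argue $p_e$ simply contributes $-\hat\nu_e p_e \leq 0$ and absorb it, OR show the relevant terms telescope. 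The cleanest route: for $e=vw \in E^= \cup E^<$, $-\hat\nu_e p_e = -\nu_e \lambda_w [\l_w(\theta)/\lambda_w - \l_v(\theta)/\lambda_v - \tau_e/\lambda_w]^+$; on $E^=$ ($\lambda_w=\lambda_v$) this is $-\nu_e[\l_w(\theta)-\l_v(\theta)-\tau_e]^+ = -\nu_e q_e(\theta)$ if active/resetting and $-\nu_e s_e$-type terms otherwise — matching the middle and last sums of $\Phi$ restricted to $E^=$.

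The main obstacle I anticipate is handling the $E^>$ arcs that are \emph{inactive} (the last sum $-\sum_{e \in E^> \setminus E'_\theta} \nu_e s_e(\theta)$) and reconciling them with the first sum, which runs over \emph{all} of $\Einf \cup E^>$ including inactive ones: for an inactive arc $e=vw \in E^>$ we have $s_e(\theta) = \tau_e + \l_v(\theta) - \l_w(\theta) > 0$, so $\l_w(\theta)/\lambda_w - \l_v(\theta)/\lambda_v - \hat\tau_e = \tfrac1{\lambda_w}(\l_w(\theta)-\l_v(\theta)-\tau_e) + \l_v(\theta)(\tfrac1{\lambda_w}-\tfrac1{\lambda_v}) = -\tfrac{s_e(\theta)}{\lambda_w} + \l_v(\theta)(\tfrac1{\lambda_w}-\tfrac1{\lambda_v})$, so $-\hat\nu_e p_e = \nu_e s_e(\theta) - \nu_e\lambda_w\l_v(\theta)(\tfrac1{\lambda_w}-\tfrac1{\lambda_v})$. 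This does \emph{not} obviously match $-\nu_e s_e(\theta)$, so either the intended proof chooses a different $d$ (perhaps $d_v = \l_v(\theta)/\lambda_v$ only for $v$ in some set, or uses that $\Phi$ is designed so that $\Phi(\theta)$ is \emph{bounded above} by the dual value of \emph{some} feasible point, not necessarily $\tilde d$), or there is a cleverer bookkeeping where the extra $\l_v$ terms cancel across a cut. Given the statement only claims $\Phi(\theta) \leq \OPT$ (an inequality, not equality), I would look for a feasible dual point whose objective dominates $\Phi(\theta)$ termwise after grouping — plausibly by \emph{dropping} the troublesome positive correction terms, i.e. showing each group of terms in $-\sum \hat\nu_e p_e$ is $\geq$ the corresponding group in $\Phi(\theta)$. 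Concretely I would prove: (a) the $\Einf \cup E^>$ active terms match exactly; (b) for $E^> \setminus E'_\theta$, $-\hat\nu_e p_e = \nu_e s_e(\theta) + (\text{nonneg correction})$ so it dominates the $+(-\nu_e s_e(\theta))$ appearing in $\Phi$ — wait, signs: $\Phi$ has $-\nu_e s_e(\theta)$ and the dual contributes something $\geq -\nu_e s_e(\theta)$ iff the correction $-\nu_e\lambda_w\l_v(\theta)(\tfrac1{\lambda_w}-\tfrac1{\lambda_v}) \geq -2\nu_e s_e(\theta)$, which is not obviously true. So the honest answer: the hard part is exactly this reconciliation, and I expect the real proof fixes it by a more careful choice — e.g. choosing $d$ so that the constraint is tight on inactive $E^>$ arcs too, which is impossible for a single $d$, hence the proof must instead bound $\Phi$ by combining the dual objective with a separate easy bound, or by noting these correction terms sum to zero over the whole graph because $\sum_e (\text{in/out at each }v)$ telescopes. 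I would budget most of my effort on making that telescoping/cancellation argument precise, using that $y$ (the thin flow circulation) satisfies $y_e = \hat\nu_e$ on $\Einf \cup E^>$ and flow conservation, so $\sum_{e=vw}\hat\nu_e(\cdots) $ rearranges into node sums that vanish.
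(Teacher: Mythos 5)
Your high-level plan is exactly right: plug in $d_v = \ell_v(\theta)/\lambda_v$, exhibit a compatible feasible $p$, and conclude by weak duality. But you went wrong in the choice of $p$ on $E^<$, and that error is precisely what trapped you in the apparent irreconcilability you spent the final paragraph worrying about. Look again at the dual: the constraint $d_w - d_v - p_e \leq \hat\tau_e$ is imposed only for $e \in E \setminus E^<$. For $e \in E^<$ the \emph{only} constraint on $p_e$ is $p_e \geq 0$, so the pointwise-minimal feasible choice is $p_e = 0$, not $[d_w - d_v - \hat\tau_e]^+$. For a resetting arc $e=vw \in E^< \cap E^*_\theta$ one has $d_w - d_v - \hat\tau_e = q_e(\theta)/\lambda_w + \ell_v(\theta)(1/\lambda_w - 1/\lambda_v) > q_e(\theta)/\lambda_w > 0$, so your formula yields $-\hat\nu_e p_e < -\nu_e q_e(\theta)$, i.e.\ it falls \emph{short} of the corresponding term in $\Phi$ — which is why you could not make the termwise comparison close. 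With the correct minimal $p$ (taking $p_e = 0$ on all of $E^<$), your ``dominance'' plan does go through: the $\Einf \cup E^>$ terms match exactly; the $E^=$ terms reduce to $-\sum_{e \in E^= \cap E^*_\theta} \nu_e q_e(\theta)$, which matches $\Phi$'s contribution on $E^=$; and the remaining terms of $\Phi$, namely $-\sum_{e \in E^< \cap E^*_\theta} \nu_e q_e(\theta)$ and $-\sum_{e \in E^> \setminus E'_\theta} \nu_e s_e(\theta)$, are both $\leq 0$ and have no counterpart in the dual objective. So the minimal-$p$ dual objective dominates $\Phi(\theta)$ termwise, and $\Phi(\theta) \leq \OPT$ follows.

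The paper's proof takes a slightly different (and arguably cleaner) route: rather than showing dominance, it deliberately chooses a \emph{non-minimal} $p$ so that the dual objective equals $\Phi(\theta)$ exactly. Concretely, for inactive arcs in $E^>$ it sets $p_e = (d_w - d_v - \hat\tau_e) + s_e(\theta)/\lambda_w$, and for resetting arcs in $E^= \cup E^<$ it sets $p_e = q_e(\theta)/\lambda_w$ (which is the same as minimal on $E^=$ but strictly larger than minimal on $E^<$). Both choices remain feasible (in particular $(d_w - d_v) - p_e = (\ell_w(\theta) - \ell_v(\theta))/\lambda_w \leq \hat\tau_e$ on inactive $E^>$ arcs, since such arcs have $\ell_w(\theta) < \ell_v(\theta) + \tau_e$). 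The exact equality is what the paper wants, because the later monotonicity and quantitative statements are phrased in terms of the gap $\OPT - \Phi(\theta)$. Your version, once the $E^<$ error is fixed, proves the lemma by a valid inequality argument but does not identify $\Phi$ as a dual objective value, so it would be a genuinely weaker building block for the rest of \Cref{sec:longterm}.
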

\begin{proof}
	Immediate, since $(d(\theta), p(\theta))$ is a feasible solution to the dual by \Cref{lem:nash_as_dual}, and $\Phi(\theta)$ is precisely the objective value of this dual solution. 
\end{proof}

\newcommand{\xx}{\dot{x}}

Now we come to monotonicity. 
\begin{lemma}\label{lem:phi-monotone}
	At every point of differentiability $\theta$, $\Phi'(\theta) \geq 0$, and if $\Phi'(\theta) = 0$, then $\l$ satisfies the WSS property at time $\theta$.
\end{lemma}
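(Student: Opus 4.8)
The plan is to compute $\Phi'(\theta)$ at a point of differentiability and organize the resulting expression as an integral over a family of cuts, so that the thin flow conditions force nonnegativity, mirroring the \CCO{} strategy but accounting for the three distinct arc types in the definition \eqref{eq:potential_function}. First I would differentiate term by term. Using $q_e(\theta) = \l_w(\theta) - \l_v(\theta) - \tau_e$ on active arcs and $q_e(\theta)=0$ otherwise, and using that $s_e$ is the slack on inactive arcs, the derivative $\Phi'(\theta)$ depends only on the configuration $(E'_\theta, E^*_\theta)$, so it suffices to work at a fixed $\theta$ with fixed active/resetting sets. Writing $\l' = \l'(\theta)$, the contribution of the first sum is $-\sum_{e=vw \in \Einf \cup E^>} \nu_e (\l'_w - \tfrac{\lambda_w}{\lambda_v}\l'_v)$, of the second is $-\sum_{e=vw \in E^*_\theta \setminus (\Einf \cup E^>)} \nu_e (\l'_w - \l'_v)$ (only resetting arcs outside $\Einf \cup E^>$ contribute, with $q'_e = \l'_w - \l'_v$), and of the third is $-\sum_{e=vw \in E^> \setminus E'_\theta} \nu_e (\l'_v - \l'_w)$ (on an inactive arc in $E^>$, $s_e = \tau_e + \l_v - \l_w > 0$ so $s'_e = \l'_v - \l'_w$; note $\hat\tau_e$ contributes nothing since $\lambda_w\hat\tau_e = \tau_e$ is constant). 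I would then merge the $E^>$ pieces: an arc in $E^>$ is either active (contributing to the first sum, and to the second if resetting — but by \Cref{lem:labelsinc} and the thin flow structure the relevant bookkeeping must be tracked carefully) or inactive (third sum), so after combining, each $e \in E^>$ contributes $-\nu_e$ times an expression that I would want to rewrite uniformly.

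\textbf{The cut integral.} The key device is to express each term $-\nu_e(\l'_w - c\,\l'_v)$ (with $c = \lambda_w/\lambda_v$ or $c=1$) as an integral $-\nu_e \int_0^\infty \big(\ind[\l'_w > z] - c\,\ind[\l'_v > z]\big)\diff z$ when the integrand is eventually zero, or more robustly as $\int$ over the level sets of $\l'$; then summing over arcs and swapping sum and integral produces, at each threshold $z$, a sum over the arcs crossing the cut $\{v : \l'_v > z\}$ (or a weighted version accounting for the $\lambda_w/\lambda_v$ factors on $\Einf \cup E^>$). Since $x'$ is a circulation supported on active arcs, and by the thin flow conditions $x'_e = \nu_e \l'_w$ on resetting arcs and $x'_e \le \nu_e\l'_w$, $x'_e \le \nu_e \max\{\l'_v, \l'_w\}$ otherwise, I would compare the ``$\nu$-weighted cut'' quantity with the ``$x'$-flow across the cut'' quantity, which vanishes because $x'$ is a circulation; the inequality $\Phi'(\theta) \ge 0$ then comes out of $x'_e \le (\text{capacity-type bound})$ on each arc leaving the cut and $x'_e \ge 0$ (together with $x'_e = \nu_e\l'_w$ on resetting arcs) on arcs entering it. The $\lambda$-rescaling is exactly what makes the $\Einf \cup E^>$ terms line up: on such an arc $\nu_e\l'_w$ versus $\nu_e\tfrac{\lambda_w}{\lambda_v}\l'_v$ is comparable to $x'_e$ versus $\tfrac{\lambda_w}{\lambda_v}x'_e$, and on a level set $Q = \{v : \l'_v > z\}$ with $\l'_v \ge \lambda_v$ throughout, one gets the needed direction.

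\textbf{Strict positivity.} For the second claim, I would argue the contrapositive at the level of the integral: if $\Phi'(\theta) = 0$, then the cut inequality must be tight at almost every threshold $z$, which forces $x'_e = 0$ on every arc leaving $S = \{v : \l'_v = \lambda_v\}$ (the thresholds just below $\lambda$-values are the binding ones), i.e. condition 2 of \Cref{lem:ssdiralt} holds, so $\l$ is moving in a steady-state direction at $\theta$; then by \Cref{thm:optimal_iff_steadystate}, $(x'(\theta), \tilde d)$ is primal-dual optimal, and since $\Phi(\theta)$ equals the dual objective value of the feasible solution \eqref{eq:Nash_flow_as_dual_optimal_solution}, which is then optimal, $\Phi(\theta) = \OPT$. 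Contrapositively, $\Phi(\theta) < \OPT$ implies $\Phi'(\theta) > 0$.

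\textbf{Main obstacle.} The hard part will be the correct handling of arcs in $E^>$ that are simultaneously active and resetting versus active and non-resetting versus inactive, and making the three sums in \eqref{eq:potential_function} combine cleanly into a single cut integral with the right $\lambda_w/\lambda_v$ weights — in particular verifying that on the cuts $\{v : \l'_v > z\}$ the $\lambda$-rescaled terms admit a sign-definite comparison with the (unweighted) circulation $x'$. This is precisely the place where, as the authors warn, ``a very careful consideration of different types of arcs is needed'', and where the naive generalization of \eqref{eq:ccopot} fails to be monotone; I expect this bookkeeping, rather than any single inequality, to be the crux. A secondary subtlety is justifying the interchange of the finite sum with the integral over $z$ and ensuring the integrands have compact support, which uses $\l'_v \le \ldmax$ and $\lambda_v > 0$ (from \Cref{lem:labelsinc}).
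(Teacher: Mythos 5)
Your high-level plan---differentiate term by term, rewrite as a cut integral, and compare against the circulation $x'$ via the thin flow conditions, with a contrapositive steady-state argument for strictness---does match the paper's strategy, including the recognition that the $E^>$ bookkeeping is the crux. But there is a genuine gap in \emph{which} cuts to use. You propose sweeping the level sets of $\l'$ itself, writing cuts as $\{v : \l'_v > z\}$ and hoping a $\lambda_w/\lambda_v$-weighted crossing count can be compared against $x'$. That does not close: the $\lambda_w/\lambda_v$ coefficients on $\Einf \cup E^>$ arcs stay glued to the indicators, and there is no sign-definite comparison with the unweighted circulation. The paper instead first proves the clean lower bound $\Phi' \geq -\sum_{e=vw \in \Epot} \hat\nu_e(\l'_w/\lambda_w - \l'_v/\lambda_v)$ with $\Epot := E^*_\theta \cup (E^> \cap E'_\theta)$ (absorbing the slack and $E^< \cap E^*_\theta$ queue terms by arc-by-arc inequalities, roughly as you anticipate), and then sweeps the \emph{rescaled} level sets $S_z := \{v : \l'_v/\lambda_v < z\}$. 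Because each summand is $\hat\nu_e$ times a difference of the single function $u \mapsto \l'_u/\lambda_u$ at the two endpoints, no weight attaches to the indicator, the per-$z$ cut quantity is exactly $\hat\nu(\delta^{\pm}(S_z)\cap\Epot)$, and on arcs in $\Epot \cap \delta^+(S_z)$ the thin flow conditions give $x'_e = \nu_e\l'_w = \hat\nu_e\cdot(\l'_w/\lambda_w)$, so that $z\,\hat\nu(\delta^-(S_z)\cap\Epot) \geq x'(\delta^-(S_z)\cap\Epot) \geq x'(\delta^+(S_z)\cap\Epot) \geq z\,\hat\nu(\delta^+(S_z)\cap\Epot)$. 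Integrating this chain over $z$ gives $\Phi'\geq 0$, and strictness reads off at the single threshold $z=1$ detecting $S=\{v:\l'_v=\lambda_v\}$; your remark about ``thresholds just below $\lambda$-values'' only becomes clean after this rescaling.

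A secondary flaw: you invoke ``$\l'_v \ge \lambda_v$ throughout'' on your level sets. That inequality is simply not available at a general $\theta$; it holds only once the equilibrium is already moving in a steady-state direction, and is proved in the paper only in the final-phases lemma, using \Cref{lem:ssdiralt} as a premise. You cannot use it while establishing that the potential rises toward $\OPT$; the rescaled-cut argument above does not need it.
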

\begin{proof}
	It will be convenient to work with the graph $H=(V, E'_\theta \cup \{ts\})$, where we define the rescaled capacity of $ts$ to be $\hat{\nu}_{ts} = u_0$. We will write $\delta^-(S)$ and $\delta^+(S)$ for the set of arcs entering and leaving a set of vertices $S$ in $H$, respectively.
	Let $\xx$ be the circulation on $H$ obtained by setting $\xx_{ts} = u_0$ and $\xx_e = x'_e(\theta)$ for all $e \in E'_\theta$.
	The arc $ts$ behaves similarly to arcs in $\Einf$, as we take it to always be active and resetting; define $s_{ts}(\theta)=0$ for all $\theta$.
	We will omit the parameter $\theta$ and just write $\Phi'$, $\l'_v$, $q'_e$ and $s'_e$, whenever it is clear.
	
	For any $z \geq 0$, let
	\[
	S_z := \Set{ v \in V | \frac{\l'_v}{\lambda_v} < z}.
	\]
	Also define $\Epot := E^*_\theta \cup (E^> \cap E'_\theta) \cup \{ts\}$. 
	\begin{claim}\label{clm:pot-cut}
		\[ \Phi' \geq - \int_0^\infty \hat{\nu}(\delta^+(S_z) \cap \Epot) - \hat{\nu}(\delta^-(S_z) \cap \Epot) \diff z. \]
		%        \[ \Phi'\geq - \sum_{e=vw \in E^*_\theta \cup (E^> \cap E'_\theta)} \!\!\!\!\hat{\nu}_e \left(\frac{\l'_w}{\lambda_w} - \frac{\l'_v}{\lambda_v}\right). \]
	\end{claim}
	\begin{nestedproof}
		Differentiating $\Phi$, we have
		\[
		\Phi' = -\sum_{e=vw \in \Einf \cup E^> \cup \{ts\}} \hat{\nu}_e \left(\tfrac{\l'_w}{\lambda_w} - \tfrac{\l'_v}{\lambda_v} + \tfrac{s'_e}{\lambda_w}\right) - \sum_{e=vw \in E^= \cup E^<} \nu_e q'_e. 
		\]
		We claim that $\Phi' \geq -  \sum_{e=vw \in \Epot} \hat{\nu}_e \cdot \left(\frac{\l'_w}{\lambda_w} - \frac{\l'_v}{\lambda_v}\right)$.
		This follows from comparing terms.
		{ For arcs in $E^< \setminus E^*_\theta$ and $E^= \setminus E^*_{\theta}$, the contributions to both terms are zero. For the arc $ts$ and arcs in  $\Einf$, $E^> \cap E'_\theta$ and $E^= \cap E^*_{\theta}$, the contributions to both terms are identical.}
		
		For $e=vw \in E^> \setminus E'_\theta$, we have 
		\[ -\hat{\nu}_e\left(\tfrac{\l'_w}{\lambda_w} - \tfrac{\l'_v}{\lambda_v}\right) - \nu_e s'_e = \hat{\nu}_e \l'_v\left(\tfrac1{\lambda_v} - \tfrac1{\lambda_w}\right) \geq 0. \]
		For arcs $e=vw \in E^< \cap E^*_\theta$, we have
		\[
		-\nu_e q'_e = -\hat{\nu}_e\left(\tfrac{\l'_w}{\lambda_w} - \tfrac{\l'_v}{\lambda_w} \right) \geq -\hat{\nu}_e \left(\tfrac{\l'_w}{\lambda_w} - \tfrac{\l'_v}{\lambda_v} \right).
		\]
		Finally, we observe that
		\begin{align*}
			\sum_{e=vw \in \Epot} \hat{\nu}_e \cdot \left(\frac{\l'_w}{\lambda_w} - \frac{\l'_v}{\lambda_v}\right) &=
			\int_0^\infty \sum_{\substack{e=vw \in \Epot:\\ \l'_v/\lambda_v < z \leq \l'_w/\lambda_w}} \kern-5mm \hat{\nu}_e 
			\;\;\;\;-\;\; \sum_{\substack{e=vw \in \Epot:\\ \l'_w/\lambda_w < z \leq \l'_v/\lambda_v}}\kern-5mm \hat{\nu}_e \;\;\diff z\\
			&= \int_0^\infty \hat{\nu}(\delta^+(S_z) \cap \Epot) - \hat{\nu}(\delta^-(S_z) \cap \Epot) \diff z. \qedhere
		\end{align*}
	\end{nestedproof}

	\begin{claim}\label{clm:crossingSz}
		Fix any $z$.
		\begin{enumerate}[(i)]
			\item 	\label{it:all_flow_counted}$\delta^-(S_z) \cap E'_\theta \subseteq \Epot$.        	
			% \nnote{This seems to work for all arcs in $E^*_\theta \cup \{ e=vw \in E'_\theta : \l'_w \geq \l'_v\}$; 
				%since we're at a point of differentiability, this just means for all arcs in $E'_\theta$.}
			\item If  $e=vw \in \delta^+(S_z) \cap \Epot$, then $\xx_e = \l'_w \nu_e$.
		\end{enumerate}
	\end{claim}
	\begin{nestedproof}
		\begin{enumerate}[(i)]
			\item Consider any arc $e=vw \in \delta^-(S_z) \cap E'_\theta$.
			Either $e \in E^*_\theta \subset \Epot$ or (using that $\theta$ is a point of differentiability) $e \in E'_\theta$ with $\ell'_w = \ell'_v$.
			In the latter case, we must have $\lambda_w > \lambda_v$ (since $\ell'_w / \lambda_w < z \leq \ell'_v / \lambda_v$).
			So $e \in \Epot$. %E^*_\theta \cap (E'_\theta \cap E^>) \cup \{ts\}$.
			\item 
			If $e \in E^*_\theta \cup \{ts\}$ then the claim trivially follows.
			Otherwise, $e \in E^> \cap \delta^+(S_z)$, i.e.,
			$\lambda_w> \lambda_v$ and $\frac{\l_w'}{\lambda_w} > \frac{\l_v'}{\lambda_v}$, 
			which implies that $\l'_w > \l'_v$. 
			Since $e \in E'_\theta$, the claim follows from the thin flow conditions. % we have $e \in E^*_\theta$ by the thin flow conditions and our assumption on $\theta$ (arc do not change arc sets at time $\theta$). Hence, $\xx_e = \ell'_w \cdot \nu_e$.
		\end{enumerate}
	\end{nestedproof}
	
	Since $\xx$ is a circulation and {by \eqref{it:all_flow_counted}}, it follows that the incoming flow on arcs in $\Epot$ is not smaller than the outgoing flow on arcs in $\Epot$, i.e., 
	\begin{equation}\label{eq:circbound}
		\xx(\delta^-(S_z) \cap \Epot) \geq \xx(\delta^+(S_z) \cap \Epot).
	\end{equation}
	
	For all active arcs, we certainly have $\xx_e \leq \l'_w \nu_e$.
	Together with the claim, \eqref{eq:circbound} yields the following sequence of inequalities.
	\begin{align*}
		z \cdot \hat{\nu}(\delta^-(S_z) \cap \Epot) &\geq \sum_{e=vw \in \delta^-(S_z) \cap \Epot} \frac{\l'_w}{\lambda_w} \cdot \hat{\nu}_e \\
		&= \sum_{e=vw \in \delta^-(S_z) \cap \Epot} \l'_w \cdot \nu_e\\
		&\geq \xx(\delta^-(S_z) \cap \Epot)\\
		&\geq \xx(\delta^+(S_z) \cap \Epot)\\
		&= \sum_{e=vw \in \delta^+(S_z) \cap \Epot} \l'_w \cdot \nu_e\\
		&= \sum_{e=vw \in \delta^+(S_z) \cap \Epot} \frac{\l'_w}{\lambda_w} \cdot \hat{\nu}_e\\
		&\geq z \cdot \hat{\nu}(\delta^+(S_z) \cap \Epot).
	\end{align*}
	By \Cref{clm:pot-cut}, $\Phi' \geq 0$.
	
	\medskip
	
	Finally, suppose that $\Phi'(\theta) = 0$. 
	Then for almost all values of $z$, all inequalities in the above sequence hold with equality.
	This in particular implies that for all $e=vw \in \Epot$, $\l'_w/\lambda_w = \l'_v/\lambda_v$. 
	For if not, we would have either that the first inequality is strict for all $z \in (\l'_w/\lambda_w, \l'_v/\lambda_v]$ (if $\l'_w / \lambda_w < \l'_v / \lambda_v$), or that the last inequality is strict for all $z \in [\l'_v/\lambda_v, \l'_w/\lambda_w)$ (if $\l'_w / \lambda_w > \l'_v / \lambda_v$).

	We claim that any active arc $e=vw \in E'_\theta$ satisfies $\l'_w / \lambda_w \geq \l'_v / \lambda_v$.
	For if not, then choosing any $z \in (\l'_w/\lambda_w, \l'_v/\lambda_v)$ and applying \Cref{clm:crossingSz}, $e \in \Epot$, which we have just shown is impossible. 
	Since $\l'_s = \lambda_s = 1$, and every node is reachable from $s$ via an active path, $\l'_v \geq\lambda_v$ for all $v$.
	
	Now consider any $s$-$t$-path $P$ in $E'_\theta$.
	Then any arc $e=vw$ in $P$ satisfies $\l'_w/\lambda_w \geq \l'_v / \lambda_v$, and also since $ts \in \Epot$, $\l'_t/\lambda_t = \l'_s/\lambda_s = 1$.
	So $\l'_v/\lambda_v = 1$ for all $v \in P$.
	Since this holds for any $s$-$t$-path in $E'_\theta$, $\l'_v = \lambda_v$ for all active nodes.
	Thus the WSS property is satisfied.
\end{proof}

\begin{lemma}
	\label{lem:wss_reached}
	There is a constant $C_1$, depending only on the network (more concrete on the graph and the capacities) but not on $\linit$, such that $\l$ has the WSS property for all $\theta \geq C_1(\OPT - \Phi(0))$.
\end{lemma}	
\begin{proof}
	By \Cref{lem:phi-monotone}, $\Phi$ is nondecreasing, and strictly increasing at time $\theta$ unless the WSS property is satisfied at time $\theta$.
	Since the WSS property implies optimality by \Cref{lem:WSS_properties}, we deduce that $\Phi$ must be strictly increasing until the moment that $\Phi(\theta) = \OPT$, at which point the WSS property is satisfied for all larger times. 
	
	To bound this time (and see that it is finite), {n}ote that $\Phi'(\theta)$ takes on only a finite set of values depending only on the network, since its value is fully determined by the configuration $(E'_\theta, E^*_\theta)$. More concretely, for every possible valid configuration of active and resetting edges the $\l'$ values are determined by the thin flow conditions and thus only depend on the capacities.
	So we can choose $C_1$ depending only on the network such that $1/C_1 \leq \Phi'(\theta)$ for all points of differentiability $\theta$ where $\Phi'(\theta) > 0$.
	So $\Phi(\theta) = \OPT$ for all $\theta \geq C_1(\OPT - \Phi(0))$, as desired.
\end{proof}
We remark that it follows from this that if $\Phi(\theta) = \OPT$, then $\l$ satisfies the WSS property.  So while we stated \Cref{lem:WSS_properties}~(\ref{it:WSS_and_opt}) as an implication, the WSS property is in fact a characterization of optimality.

\paragraph{The final phases.}
We have shown that for all $\theta$ large enough, $\l$ satisfies the WSS property, and in particular, $\l'_v(\theta) = \lambda_v$ for all active nodes $v$.
It remains to show that, possibly after some further time has passed, $\l'_v(\theta) = \lambda_v$ for \emph{all} nodes $v$.

\medskip
%\laura{Move this after the next lemma?}
{
We will need the following technical lemma.
\begin{lemma}
    \label{lem:lambda_sensible}
    \answer{Assume $\ell$ has the WSS property at time $\theta$. Then every}
    %Every 
    $v \in V$ is reachable in $(V, E^=)$ from an active node.
%    For every $v \in V$, there is an active node $z$ and a $z$-$v$-path $P$
\end{lemma}
\begin{proof}
    %\todo{I might have messed this up, since I was confused about the definition of active nodes. Is this the correct statement? Think it's fixed now, I adjusted the argument.}
    Let $S$ be the set of inactive nodes that cannot be reached in $(V,E^=)$ from an active node, and suppose for a contradiction that $S \neq \emptyset$. 
    There are no arcs in $E^=$ entering $S$ by definition.
    By \Cref{lem:WSS_properties} and \Cref{lem:labelsinc}, there can be no arcs of $E^>$ entering or leaving $S$ (since any such arcs have $x'_e(\theta) > 0$, and hence have active endpoints). \answer{Similarly, by \Cref{lem:labelsinc} no arc of $\Einf$ enters or leaves $S$.}    %Since all nodes in $S$ are inactive, there is no flow entering $S$, and so also no flow leaving $S$. 

   % If $(E, \Einf)$ is a valid configuration, meaning that $E$ is acyclic, then choose any $v \in S$ that is earliest in the acyclic ordering.
    %Then all arcs entering $v$ come from outside of $S$, and none of these arcs are in $E^= \cup E^>$. 
    %But this is impossible, by \eqref{eq:l'_v_min:base}. 

  %  If instead $\Einf = \emptyset$, we argue that $\lambda$ is not maximal. Let $y$ be a flow so that $(\lambda, y)$ is a full solution to the thin flow equations for configuration $(E, \emptyset)$.
    
     Let $y$ be a flow so that $(\lambda, y)$ is a full solution to the thin flow equations for configuration $(E, \Einf)$. 
     Recall that $\lambda$  is the thin flow direction and thus $(\lambda, y)$ is a maximal solution. 
     We will show that this leads to a contradiction. 
    Let $\bar{\lambda} = \lambda + \epsilon\chi(S)$, for $\epsilon > 0$ small enough that $\bar{\lambda}_z \geq \bar{\lambda}_v$ for all $zv \in \delta^-(S) \cup \delta^+(S)$.
    It is possible to choose such an $\epsilon$, since $S$ has no incoming arcs in $E^= \cup E^>\cup \Einf$ and no outgoing arcs in $E^>\cup \Einf$.
Since $y_e = 0$ on all arcs entering $S$, leaving $S$, or within $S$, it is easy to check that $\bar{\lambda}$ still satisfies \eqref{eq:tf-group}, a contradiction.
%    %\todo{Finish this explanation; think we need that $y_e = 0$ for all arcs leaving $S$.}
%    To see that \eqref{eq:tf-group} remains valid on the arcs leaving $S$, note that $y_e = 0$ an all arcs %\todo{Probably need $y$ defined otherwise unclear what flow is being referred to.} 
%    leaving $S$ and thus \eqref{eq:l'_v_tight:base} does not apply to any arc in $\delta^+(S)$. 
%    Consider some arc $uv \in \delta^+(S)$. Then there exists an arc in $E^= \cap \delta^-(v) \setminus \{uv\}$ as otherwise $v \in S$. This arc attains the minimum in   \eqref{eq:l'_v_min:base} and thus remains valid also for $\bar\lambda$. 
%     Overall, $\bar{\lambda}$ must still satisfy \eqref{eq:tf-group}. 
%     \laura{This holds as there is no flow in the inactive part overall, right? I added some short argument for the arcs leaving $S$. Enough?}
\end{proof}
}

%\begin{definition}
Let $(\gamma_v(\theta))_{v \in V}$ be the shortest path labels from $s$ in the network $(V, E \setminus E^<)$ with costs given by the slacks $s_e(\theta)$ (recall that $s_e(\theta) = [\tau_e + \l_v(\theta) - \l_w(\theta)]^+$ for $e=vw \in E \setminus \Einf$, and $s_e(\theta)=0$ for $e\in \Einf$).
%\end{definition}
\begin{lemma}\label{lem:gammazero}
	Consider any point of differentiability $\theta$ for which $\Phi(\theta) = \OPT$.
	Then for any $v$, $\gamma_v(\theta) = 0$ if and only if $\l'_v(\theta) = \lambda{_v}$.
\end{lemma}
\begin{proof}
	
Since $\Phi(\theta)=\OPT$, $\Phi'(\theta) = 0$, and so by \Cref{lem:phi-monotone} the WSS property is satisfied. 	
		
		Consider any active node $v \neq s$, and let $e=uv$ be any active arc entering $v$.
		Then $\l'_v(\theta) = \lambda_v$ and $\l'_u(\theta) = \lambda_u$ by the WSS property.
		Since $e$ is active and $\theta$ is a point of differentiability, either $\l'_v(\theta) = \l'_u(\theta)$, or $e \in E^*_\theta$ (meaning $x'_e(\theta) > 0)$ and so $e \notin E^<$ by \Cref{lem:WSS_properties}.
		Either way, $e  \in E \setminus E^<$. {Hence, there is an active $s$-$v$-path of arcs in $E \setminus E^<$, i.e., a path without slack.}
		It follows that $\gamma_v(\theta) = 0$ for all active nodes $v$.
		
		Consider now any inactive node $v$. 
		Let $P$ be a path of active arcs { to $v$} starting from an active node $u$, and with all other nodes of $P$ being inactive.
		If there are multiple choices for $P$, choose one with the fewest number of arcs in $E^<$.
		No arc of $P$ is resetting (since  (end) nodes of resetting arcs are always active).
		By \Cref{lem:WSS_properties}, this further implies that no arc of $P$ is in $E^> \cup \Einf$, so $\lambda_u \geq \lambda_v$.
		Since $\theta$ is a point of differentiability, it also follows that $\l'_u(\theta) = \l'_v(\theta)$.
		
		If $\gamma_v(\theta) = 0$, then $P$ contains only arcs of $E^=$ (since there must be an active path from $s$ to $v$ in $(V, E \setminus E^<)$, and hence such a path from the set of active nodes, and $P \subseteq (E^< \cup E^=$)\answer{)}. So $\l'_v(\theta) = \lambda_v$.
		Conversely, if $\l'_v(\theta) = \lambda_v$, so that $\lambda_u = \l'_u(\theta) = \l'_v(\theta) = \lambda_v$, then $P \subseteq E^=$, and $\gamma_v(\theta) = \gamma_u(\theta) = 0$.
\end{proof}

Let $\Delta \coloneqq 1 / \min \set{\abs{\lambda_v - \lambda_w} : \lambda_v \neq \lambda_w }$ (or $\Delta \coloneqq 0$ if the minimum is over an empty set). 
\begin{lemma} \label{lem:post_phase}
	Suppose that $\Phi(T_1)=\OPT$ and let $T_2 \coloneqq \Delta \cdot \max_{v \in V} \gamma_v(T_1)$.
	Then $\l$ reaches steady state by time $T_1 + T_2$.
\end{lemma}
\begin{proof}
	Let $\theta \geq T_1$ be a point of differentiability and let $S \coloneqq \set{v \in V : \l'_v(\theta) = \lambda_v}$.
	
	We show that for any $v \in V \setminus S$, $\gamma'_v(\theta) \leq -\Delta^{-1} < 0$ (or if $\Delta=0$, then $S=V$).
	This suffices: after time $T_2 = \Delta \cdot \max_{v \in V} \gamma_v(T_1)$ has passed beyond time $T_1$, the slack $\gamma_v(\theta)$ has decreased to $0$ for every $v \in V$, and hence all nodes are in $S$.
	
	Consider any $v$ with $\gamma_v(\theta) > 0$; note that $v \notin S$ by \Cref{lem:gammazero}.
        {By \Cref{lem:lambda_sensible}, $v$ is reachable from $S$ in $(V, E^=)$.
	So choose $P_1$ to be a $u$-$v$-path in $E^= \setminus E[S]$ with $u \in S$ chosen to have minimum possible slack, and moreover, 
    where this remains true for some small interval of time.}
	That is, $\sum_{e \in P_1} s_e(\xi) = \gamma_v(\xi)$ for all $\xi \in [\theta, \theta + \epsilon)$ for some small $\epsilon > 0$.
	This is possible given that $\theta$ is a point of differentiability; we can choose $\epsilon$ so that all slacks are linear in this interval.

	Let $P_2$ be a $w$-$v$ path in $E'_\theta \setminus E[S]$ with $w \in S$. 
	We have that $\lambda_v = \lambda_{u} = \l'_{u}$ and $\lambda_v<\lambda_{w} = \l'_{w}=\l'_v$.
	The inequality follows since $P_2$ must use at least one arc from $E^<$, or otherwise $P_2$ would be a no slack path in $E\setminus E^<$.
	This observation further implies that if $\Delta=0$, meaning $E^< = \emptyset$, then $S=V$.
	
	As there are no queues on the arcs on $P_1$ we have
	\[\gamma'_v(\theta) = \sum_{e \in P_1} s'_e(\theta) = \sum_{v'w' \in P_1} \l'_{v'}(\theta)- \l'_{w'}(\theta) =\l'_{u} -\l'_v = \lambda_v - \lambda_{w} \leq -\Delta^{-1}. 
	\]
	This completes the proof.
\end{proof}

\paragraph*{Putting it all together.}
Altogether, we can now prove \Cref{thm:longterm-strong}.

\begin{proof}[Proof of \Cref{thm:longterm-strong}]
	Let $R := d(\linit, \Iss)$.
        {We can view the potential $\Phi$ as a function of $l \in \Omega$, by using $l$ rather than $\l(\theta)$ everywhere in its definition. This function is Lipschitz continuous; denote the Lipschitz constant by $\Cpot$.}
        \nnote{Technically $\Phi$ is never defined as a function of $l$, only of $\theta$, so perhaps confusing. Adjusted.}
	Then $\OPT - \Phi(0) \leq \Cpot R$.
	Let $T_1 = C_1(\OPT - \Phi(0))$, with $C_1$ as guaranteed by \Cref{lem:wss_reached}, so that
	$\l$ has the weak steady-state property from time $T_1$ onwards. 
	Then $T_1 \leq C_1' R$, where $C_1' = C_1 \Cpot$. %, i.e., $q'(\theta) = \sigma$.
	
	By \Cref{lem:post_phase}, $\l$ reaches steady state by time $T_1+ T_2$, where $T_2 = \Delta \cdot \max_{v \in V}{\gamma_v(T_1)}$.
	To bound $T_2$, we observe that $\gamma_v$, viewed as a function of the labeling, is Lipschitz continuous with Lipschitz constant $2|V|$, since the slacks are Lipschitz continuous with Lipschitz constant $2$.
	Thus $\gamma_v(0) \leq 2R|V|$.
	Since $\|\l(T_1) - \l(0)\| \leq \kappa T_1$, this time by Lipschitz continuity of the trajectory, we deduce the very crude bound $\gamma_v(T_1) \leq 2|V|(R + \kappa T_1)$ for all $v$.
	
	Using the above bound for $T_1$, we obtain %the following bound on $T_2$:
	$T_2 \leq 2\Delta|V| R (1 + \kappa C_1')$, 
	and hence by choosing $C_T := C_1' + 2\Delta |V|(1 + \kappa C_1')$, $T_1 + T_2 \leq C_T \cdot R$.
\end{proof}

	\section{Uniqueness and continuity}\label{sec:unique}
In this section, we prove the uniqueness and continuity of equilibrium trajectories.
Our main tool will be the following lemma showing continuity for some small interval.
\begin{lemma}[Local continuity]
\label{lem:local_continuity}
Consider a generalized network $G=(V,E)$ with free arcs $\Einf \subseteq E$.
Fix a feasible labeling $\linit$, and let $\l^*$ be defined by $\l^*(\theta) \coloneqq \linit + \theta \cdot X(\linit)$.
Then there exists $\eta > 0$ such that the following holds.
For any sequence $\l^{(1)}, \l^{(2)}, \ldots$ of equilibrium trajectories where $\l^{(i)}(0) \to \linit$ as $i \to \infty$, 
$\l^{(i)}(\theta) \to \l^*(\theta)$ as $i \to \infty$ for all $\theta \in [0, \eta]$.
\end{lemma}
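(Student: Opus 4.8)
The plan is to deduce this from \Cref{thm:ssquantitative}, applied not to the original instance but to the \emph{local instance} at $\linit$ from \Cref{sec:overview} --- the network $\Gsub = (V,E'_{\linit})$ with free arcs $\Einf \coloneqq E^*_{\linit}$, feasible set $\Omegasub$, and piecewise-constant vector field $\Xloc$ --- and then transferring the conclusion back using the conic self-similarity of $\Xloc$ about $\linit$. First I would record three facts. (a) There is $\rho_0 > 0$ with $X = \Xloc$ on $B(\linit,\rho_0)$ and $\Omega \cap B(\linit,\rho_0) = \Omegasub \cap B(\linit,\rho_0)$: for $\hat\l$ near $\linit$, arcs inactive at $\linit$ stay inactive and arcs resetting at $\linit$ stay resetting, so $E'_{\hat\l} \subseteq E'_{\linit}$ and $\Einf \subseteq E^*_{\hat\l}$, which forces $(\hat E'_{\hat\l}, \hat E^*_{\hat\l}) = (E'_{\hat\l}, E^*_{\hat\l})$ and makes the two notions of feasibility agree. (b) $\lambda \coloneqq X(\linit) = \Xloc(\linit)$ is exactly the thin flow direction for configuration $(E'_{\linit}, \Einf)$ of the local instance (the thin flow system for a configuration does not depend on which arcs one labels free, only that free arcs are resetting --- and at $\linit$ the arcs of $\Einf$ already are), so \Cref{thm:ssquantitative} is available for the local instance with this $\lambda$. (c) By the conic symmetry from \Cref{sec:overview}, if $\hat\l$ is a local-equilibrium trajectory then so is $\theta \mapsto \linit + \alpha^{-1}(\hat\l(\alpha\theta) - \linit)$ for any $\alpha > 0$; moreover, exactly as for the original instance, local-equilibrium trajectories exist on all of $[0,\infty)$ starting from any point of $\Omegasub$ and stay in $\Omegasub$.

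\paragraph{A uniform steady-state time.}
Next I would invoke \Cref{thm:ssquantitative} for the local instance: every local-equilibrium trajectory $\hat\l$ with $\hat\l(0) = \hatlinit \in \Omegasub$ satisfies $\hat\l'(\theta) = \lambda$ for all $\theta \geq T(\hatlinit)$, where $T(\hatlinit) = T_1 + T_2$ is the explicit bound there. The ingredients $\OPT$, $\eta$, $\Delta$, $\kappa$, $\abs{E'_{\linit}}$ depend only on the local instance, whereas $\Phi$ and the slacks $s_e$ are evaluated at $\hatlinit$; writing $q_e = [\l_w - \l_v - \tau_e]^+$ and $s_e = [\tau_e + \l_v - \l_w]^+$ exhibits $\hatlinit \mapsto \Phi(\hatlinit)$ and $\hatlinit \mapsto s_e(\hatlinit)$ as restrictions of continuous functions on all of $\R^V$. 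Hence $T(\cdot)$ extends continuously to $\R^V$, so
\[ T^* \coloneqq \max\{\, T(\hatlinit) : \norm{\hatlinit - \linit} \leq 1 \,\} \;<\; \infty. \]
I would stress that this cannot be obtained from compactness of a sphere inside $\Omegasub$, since $\Omegasub$ is not closed; it is the explicit, labeling-continuous bound of \Cref{thm:ssquantitative}, rather than mere qualitative convergence to steady state, that makes $T^*$ finite.

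\paragraph{Rescaling and a short-time continuation.}
Set $C \coloneqq 2\kappa T^* + 1$ and $\epsilon \coloneqq \rho_0/(4\kappa)$; I claim this $\epsilon$ works. Fix $i$ with $\delta_i \coloneqq \norm{\l^{(i)}(0) - \linit}$ small enough that $C\delta_i < \rho_0/4$ (the case $\l^{(i)}(0) = \linit$ being handled by letting $\delta_i \downarrow 0$ in the estimate below). Let $\tau^{(i)}$ be the first time $\l^{(i)}$ leaves $B(\linit,\rho_0)$; on $[0,\tau^{(i)})$ it follows $\Xloc$ by (a), so the rescaled curve $\hat\l^{(i)}(\theta) \coloneqq \linit + \delta_i^{-1}(\l^{(i)}(\delta_i\theta) - \linit)$ is a local-equilibrium trajectory there, and I would extend it via (c) to a local-equilibrium trajectory on $[0,\infty)$ with $\hat\l^{(i)}(0) = \hatlinit^{(i)}$, $\norm{\hatlinit^{(i)} - \linit} \leq 1$. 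By the previous step $\hat\l^{(i)}{}'(\theta) = \lambda$ for $\theta \geq T^*$; un-rescaling yields a local-equilibrium trajectory $\tilde L^{(i)}$ that agrees with $\l^{(i)}$ on $[0,\tau^{(i)})$ and satisfies $\tilde L^{(i)}(\theta) = \tilde L^{(i)}(\delta_i T^*) + (\theta - \delta_i T^*)\lambda$ for $\theta \geq \delta_i T^*$. Two Lipschitz estimates --- using $\norm{\lambda}\leq\kappa$, $\kappa$-Lipschitzness, and $\norm{\tilde L^{(i)}(0) - \linit} = \delta_i$, one on $[0,\delta_i T^*]$ and one on $[\delta_i T^*,\infty)$ --- then give $\norm{\tilde L^{(i)}(\theta) - \l^*(\theta)} \leq C\delta_i$ for \emph{all} $\theta \geq 0$. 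Since consequently $\tilde L^{(i)}(\theta)$ stays within $C\delta_i + \kappa\theta < \rho_0$ of $\linit$ on $[0,\epsilon]$, a standard continuation argument forces $\tau^{(i)} \geq \epsilon$, so $\l^{(i)} = \tilde L^{(i)}$ on $[0,\epsilon]$ and $\norm{\l^{(i)}(\theta) - \l^*(\theta)} \leq C\delta_i \to 0$ uniformly on $[0,\epsilon]$ --- more than the stated pointwise convergence.

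\paragraph{Main obstacle.}
The delicate point is the uniform time $T^*$: because $\Omegasub$ is not compact one is forced to use the quantitative form of \Cref{thm:longterm}, namely \Cref{thm:ssquantitative}, together with the observation that its bound is a continuous function of the initial labeling and hence bounded on a closed ball. The remainder is routine: the conic rescaling is a direct chain-rule computation, and the continuation argument is bookkeeping needed only to keep the trajectory in the region where the local and original instances coincide.
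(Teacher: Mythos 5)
Your proof is correct and follows essentially the same route as the paper's: reduce to the local (conic) instance $\Gloc$ with free arcs $\Einf = E^*_{\linit}$, observe that the explicit bound of \Cref{thm:ssquantitative} is continuous in the initial labeling and hence admits a finite maximum $T^*$ over the closed unit ball around $\linit$, then use the conic rescaling (the paper's \Cref{lem:scaling}) together with $\ldmax$-Lipschitzness to conclude. You are somewhat more explicit than the paper about the continuation step verifying that $\l^{(i)}$ stays in the region where $X$ and $\Xloc$ agree on $[0,\epsilon]$ --- the paper's proof derives an estimate only for $\Gloc$-trajectories and leaves the transfer to the original network implicit --- but this is a matter of exposition, not a different argument.
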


\begin{proof}
	Since there exists {an $\eta>0$ and} $\delta > 0$ such that for any equilibrium trajectory $\l$ starting within $B_{\delta}(\linit)$, 
	$E^*_{\linit} \subseteq E^*_{\l(\theta)}$ and $E'_{\l(\theta)} \subseteq E'_{\linit}$ for all $\theta \in [0, \eta]$, we can focus on the local network $\hat G$, where only currently active arcs are present and currently resetting arcs become free arcs. 
        %Note that this includes all free arcs of the generalized network $G$ since they are always resetting.
	So set $\Eloc \coloneqq E'_{\linit}$ and $\hat{E}^\infty \coloneqq E^*_{\linit}$, and
	define the local network to be $\Gloc = (V, \Eloc)$ with free arcs $\hat{E}^\infty$.
	Observe that the set of feasible labelings of the new and old generalized network coincide on $B_{\delta}(\linit)$.	
	
	Thus, we can consider the local network $\Gloc$, where $C_T$ is as defined in \Cref{thm:longterm-strong} for $\Gloc$ and $\kappa$ is the Lipschitz constant of any equilibrium trajectory.
    Since $\linit$ is in the steady-state set of the local network, by applying \Cref{thm:longterm-strong}, we get 
    \[
        \nnorm{\l^* (\theta) - \l^{(i)}(\theta)} \leq 2\kappa \nnorm{\linit - \l^{(i)}(0)} C_T \quad  \text{for all } \theta \in [0,\eta] \text{ and } i.
    \] 
    Here, we exploited that once $\l^{(i)}$ reaches steady state, it moves in parallel to $\l^*$.
    Thus $\l^{(i)}(\theta) \to \ell^*(\theta)$ as $i \to \infty$, as desired.
    \end{proof}

\paragraph{Uniqueness.}
\answer{We are now ready to prove uniqueness of equilibrium trajectories. }
%The following theorem applies to generalized networks; \Cref{thm:uniqueness} is the special case $\Einf=\emptyset$. 
%\todo{If we update the other theorem to be for generalized networks, then get rid of this and just say that we prove \Cref{thm:uniqueness}.}
%
%\begin{theorem}
%	\label{thm:uniqueness_gen}
%	Given a generalized network $G=(V,E)$ with free arcs $\Einf \subseteq E$, for any feasible labeling $\linit$, there is a unique equilibrium trajectory $\ell$ with $\ell(0)=\linit$. 
%\end{theorem}
%
\begin{proof}[\answer{Proof of \Cref{thm:uniqueness}}]
Assume for contradiction that there are two distinct equilibrium trajectories $\l^{(1)}$ and $\l^{(2)}$ starting from $\linit$. 
Without loss of generality we assume that the trajectories diverge right from the start: otherwise, simply consider the moment in time when the two trajectories diverge, and treat this as the initial condition (shifting times appropriately).

{Consider the sequence of equilibrium trajectories $\l^{(1)}, \l^{(2)}, \l^{(1)}, \l^{(2)} , \ldots$ and apply \Cref{lem:local_continuity} to this sequence. Then there exists an $\eta > 0$ and a trajectory $\l^*$, so that $\l^{(1)}(\theta)=\l^{(2)}(\theta)=\l^*(\theta)$ for $\theta \in [0, \eta]$. This contradicts the assumption that the trajectories $\l^{(1)}$ and $\l^{(2)}$ diverge at time 0.}
%
%Let $\theta_1, \theta_2, \ldots$ be any sequence of strictly positive times converging to zero.
%For each $i$, we can consider $\theta \mapsto \l^{(1)}(\theta + \theta_i)$ as an equilibrium trajectory starting from $\l^{(1)}(\theta_i)$, and
%$\theta \mapsto \l^{(2)}(\theta + \theta_i)$ as an equilibrium trajectory starting from $\l^{(2)}(\theta_i)$.
%By \Cref{lem:local_continuity}, there exists an $\eta > 0$ so that $\theta \mapsto \l^{(1)}(\theta + \theta_i)$ converges to $\theta \mapsto \linit + (\theta + \theta_i)X(\linit)$ on the interval $[0, \eta]$.
%Taking a limit as $i \to \infty$, we find that $\l^{(1)}$ must be identical to $\theta \mapsto \linit + \theta X(\linit)$ on the interval $[0, \eta]$.
%But the identical argument applies to $\l^{(2)}$, contradicting our assumption that the trajectories $\l^{(1)}$ and $\l^{(2)}$ diverge at time 0.
\end{proof}

\paragraph{Continuity.}
\answer{We now prove continuity of equilibrium trajectories.}
\begin{proof}[\answer{Proof of \Cref{thm:label-continuity}}]

%Also in this section we state and prove a slightly more general version of  \Cref{thm:label-continuity}.

%begin{theorem}
%	\label{thm:label-continuity-generalized}
%Consider a generalized network  $\Gloc=(V, \Eloc)$ with free arcs $\Einf \subseteq \Eloc$ and set of feasible labelings $\Omegaloc$.
%Let $\hat\Psi: \Omegaloc \to \Traj $ be the map that takes $\linit \in \Omegaloc$ to the unique equilibrium trajectory $\ell$ satisfying $\ell(0)=\linit$. Then $\hat\Psi$ is a continuous map.
%\end{theorem}

%Note that proving the above directly proves \Cref{thm:label-continuity} by choosing $\Gloc=G$ and $\Einf =\emptyset$.
%\todo{There's some notational issue here. I think we want to say that we're in the same setting as \Cref{thm:label-continuity}, so that everything is defined. But currently we're using e..g $\Omegaloc$ instead of $\Omega$. Maybe we should move all of this including the lemma ``inside'' the proof of \Cref{thm:label-continuity}?}
%\laura{Indeed! Thanks for implementing this.}
%\answer{Before proving \Cref{thm:label-continuity},} we prove a weaker continuity statement. 
\answer{We begin by proving a weaker continuity statement.}
For every $\theta \in \Rplus$, define $\answer{\Psi_\theta \colon \Omega \to \R^V}$ by $\answer{\Psi}_\theta(\linit) = [\answer{\Psi}(\linit)](\theta)$; that is, we map an initial condition to the value of the resulting equilibrium trajectory at a fixed time $\theta$.
\begin{claim}\label{lem:continuity-fixedtheta}
    For any $\theta \in \Rplus$, $\answer{\Psi}_\theta$ is continuous.
\end{claim}
\begin{nestedproof}
Suppose that is not the case; then the following set is bounded:
\[
    M \coloneqq \Set{\vartheta \in \Rplus | \answer{\Psi}_{\theta'} \text{ is continuous for all } \theta' \in [0, \vartheta]}.
\]
Let $\xi \coloneqq \sup M$ (note that $M \neq \emptyset$ as $0 \in M$). 
As a first step we show that $\xi \in M$. 
Recall that all equilibrium trajectories are $\ldmax$-Lipschitz.
This proves that $\answer{\Psi}_\xi$ is continuous, because for every $\epsilon > 0$ we can find a $\delta > 0$ such that
%for all $\theta' \in [\xi - \frac{\epsilon}{3\ldmax}, \xi)$
{for $\theta' = \xi - \frac{\epsilon}{3\ldmax}$ }
and $\hatlinit \in B_{\delta}(\linit) \cap \answer{\Omega}$, it holds that
\[
    \nnorm{\answer{\Psi}_{\xi}(\linit) - \answer{\Psi}_{\xi}(\hatlinit)} \leq \nnorm{\l(\xi) - \l(\theta')} + \nnorm{\answer{\Psi}_{\theta'}(\linit) - \answer{\Psi}_{\theta'}(\hatlinit)}+ \nnorm{\hat\l(\theta') - \hat\l(\xi)} \leq \tfrac{\epsilon}{3} + \tfrac{\epsilon}{3} + \tfrac{\epsilon}{3} \leq \epsilon.
\]
Here, $\l$ and $\hat \l$ are the equilibrium trajectories starting with $\linit$ and $\hatlinit$, respectively. 

We can consider, for every equilibrium trajectory $\l$,
$\l(\xi)$ as the initial condition of the equilibrium trajectory $\theta' \mapsto \l(\theta' + \xi)$.
Applying \Cref{lem:local_continuity}, we obtain that for a small duration $[\xi, \xi + \epsilon]$ the equilibrium trajectory $\l$ depends continuously on the value of $\l(\xi)$.
But since $\answer{\Psi}_\xi$ is continuous, meaning that $\l(\xi)$ depends continuously on $\linit$, this shows that $\answer{\Psi}_{\xi + \epsilon}$ is also continuous, contradicting our choice of $\xi$.
\end{nestedproof}

%Now we are ready to prove \Cref{thm:label-continuity}.

Given an interval $I \subseteq \Rplus$, let $\answer{\Psi_I(l)}$ be the restriction of $\answer{\Psi(l)}$ to the time interval $I$, for any $\answer{l} \in \Omega$.
It is a basic fact that a sequence of Lipschitz continuous functions {with a common Lipschitz constant} that converge pointwise on a compact interval converge uniformly.
Thus $\answer{\Psi}_I$ is continuous for any compact interval $I$.
To extend to the non-compact interval $\Rplus$, we again make use of our steady state result.
By \Cref{thm:longterm-strong} there is a constant $T$ such that every equilibrium trajectory with starting point in $B_{\delta}(\linit) \cap \Omega$ for some fixed $\delta>0$ has reached steady state by time $T$ (note that $T$ depends on distance of $\linit$ to steady state and $\delta$; more concretely we can choose $T = C_T (d(\linit, I_{ss})+\delta)$). %\todo{...}
Thus for any $\answer{l} \in B_{\delta}(\linit) \cap \Omega$ and $\theta > T$, $\answer{\Psi}_\theta(\answer{l}) = \answer{\Psi}_T(\answer{l}) + (\theta - T)\cdot \lambda$,
where $\lambda$ is the steady-state direction, i.e., $\lambda$ is the thin flow direction for configuration $(E, \Einf)$. 
Thus 
$\sup_{\theta \geq 0}\|\answer{\Psi}_\theta(\answer{l}) - \answer{\Psi}_\theta(\linit)\| = \sup_{0 \leq \theta \leq T} \|\answer{\Psi}_\theta(\answer{l}) - \answer{\Psi}_\theta(\linit)\|, $
and continuity of $\answer{\Psi}$ follows.
\end{proof}

{\paragraph{Piecewise-constant inflow rates.}
The results on continuity \answer{ over a compact interval}  and uniqueness translate easily to the setting of piecewise-constant inflow rates. 
Simply apply the continuity and uniqueness results to each maximal interval for which the inflow is constant; since the composition of continuous functions is continuous, there are no difficulties.
More formally, local continuity as stated in \Cref{lem:local_continuity} still holds, where the inflow rate is taken to be the constant inflow over some small interval of time starting from a time at interest, and so \Cref{thm:uniqueness} as well as \Cref{lem:continuity-fixedtheta} (continuity over a compact interval) still follow.
%\nnote{This is perhaps glossing over how the argument goes if there is an accumulation point. We already have these arguments in our proofs, and doing it all formally again here is more than I feel like doing. Do we say anything? Point to some relevant part of our proof, in an informal wah?}
For a constant inflow, we were able to show continuity over the unbounded interval $\Rplus$; 
this relied on the convergence to steady-state, which certainly does not hold in general for a time-varying inflow.
So this result does not extend, unless the inflow is constant after some finite time.
%using the 
%In these proofs, we focus on a very short time interval. Thus on this interval, we can assume constant inflow even for piecewise constant inflow rates.
%The continuity result translates when restricted to compact intervals. Here, we just need a slight modification in the definition of the Lipschitz constant of equilibrium trajectories. 
%Remember that $\kappa\coloneqq u_0/\min_{e \in E} \nu_e$. 
%This definition has to be adapted such that we take the minimum over all of the piece-wise constant inflow rates in the nominator. Then the proof for continuity on compact intervals works analogous. 
%To extend the proof to unbounded intervals for constant inflow rates, we use the result on long-term behaviour which cannot hold true for piecewise constant inflow rates.
}

	\section{Continuity more generally}\label{sec:gen-continuity}
In this section, we show how continuity with respect to the travel times or the capacities can be fairly easily deduced from the continuity of equilibrium trajectories in the initial labeling.

\paragraph{Continuity with respect to $\tau$.}
We first show \Cref{thm:changetau}, that equilibrium trajectories are continuous with respect to the transit times of the instance.
So fix all aspects of the instance apart from the transit time vector~$\tau$.
We will use $\Omega^\tau$ to denote the set of feasible labelings for the instance corresponding to $\tau$, and similarly $X^\tau$ for the vector field.

An important first observation is that for a fixed configuration $(E', E^*)$, the thin flow equations have no dependence on $\tau$ whatsoever. 

First, we prove local continuity, in the sense of \Cref{lem:local_continuity}. %; that given some $\linit \in \Omega$, there is some positive $\epsilon$ so that the trajectory up until time $\epsilon$ is a continuous function of the initial conditions in a neighborhood of $\linit$.
\begin{lemma}
\label{lem:local_continuity_tau}
Fix $\tauinit \in \Rplus^E$ with no directed cycles of zero cost, as well as $\linit \in \Omega^\tauinit$.
Let $\l^*$ be defined by $\l^*(\theta) = \linit + \theta \cdot X^\tauinit(\linit)$.
Then there exists $\epsilon > 0$ such that the following holds.
Consider any sequence $\tau^{(1)}, \tau^{(2)}, \ldots$ of valid transit time vectors converging to $\tauinit$, 
and any corresponding sequence of equilibrium trajectories $\l^{(1)}, \l^{(2)}, \ldots$ with $\l^{(i)}$ being in the instance with transit time vector $\tau^{(i)}$ and $\l^{(i)}(0) \to \linit$ as $i \to \infty$.
Then $\l^{(i)}(\theta) \to \l^*(\theta)$ as $i \to \infty$ for all $\theta \in [0, \epsilon]$.
\end{lemma}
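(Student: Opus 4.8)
The plan is to follow the strategy of \Cref{lem:local_continuity}, with one essential difference: since the perturbed instances have genuinely different transit times, there is no single fixed local network and we cannot appeal to the conic self-similarity of \Cref{lem:scaling}. Instead I would apply the \emph{quantitative} convergence bound of \Cref{thm:ssquantitative} to a family of local instances, and argue that this bound tends to $0$. \emph{Step 1 (a common local network).} Set $\Eloc \coloneqq E'_{\linit}$ and $\Einfloc \coloneqq E^*_{\linit}$, the active and resetting arcs at $\linit$ in the instance $\tauinit$, and call an arc $e=vw$ \emph{non-tight} if $\linit_w \neq \linit_v + \tauinit_e$. Let $\epsilon \coloneqq \tfrac{1}{8\ldmax}\min\set{\abs{\tauinit_e + \linit_v - \linit_w} : e=vw \text{ non-tight}}$ (and $\epsilon \coloneqq 1$ if every arc is tight); this depends only on $(\linit,\tauinit)$. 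Since $\tau^{(i)} \to \tauinit$, $\l^{(i)}(0) \to \linit$, and all equilibrium trajectories are $\ldmax$-Lipschitz, for all large $i$ every strictly active arc of $(\linit,\tauinit)$ stays strictly active along $\l^{(i)}$ on $[0,\epsilon]$ and every strictly inactive one stays strictly inactive; hence $\Einfloc \subseteq E^*_{\l^{(i)}(\theta)}$ and $E'_{\l^{(i)}(\theta)} \subseteq \Eloc$ there, so on $[0,\epsilon]$ the configuration of $\l^{(i)}$ agrees with its configuration in the \emph{local instance} $\Gloc = (V,\Eloc)$ with free arcs $\Einfloc$ and transit times $\tau^{(i)}$ (one checks $\l^{(i)}$ stays feasible for this local instance). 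As the thin flow equations do not involve the transit times, $\l^{(i)}|_{[0,\epsilon]}$ is an initial segment of an equilibrium trajectory of this local instance; extend it, using that equilibrium trajectories exist from every feasible initial condition \CCL, to a full equilibrium trajectory $\bar\l^{(i)}$ of the local instance that agrees with $\l^{(i)}$ on $[0,\epsilon]$ and has $\bar\l^{(i)}(0) \to \linit$.

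\emph{Step 2 (the convergence bound vanishes).} Let $\lambda \coloneqq X^{\tauinit}(\linit)$, which is the thin flow direction for $(\Eloc,\Einfloc)$, and hence — the transit times being irrelevant — the steady-state direction of every local instance above. Apply \Cref{thm:ssquantitative} to $\bar\l^{(i)}$: there is $T^{(i)} = T_1^{(i)}+T_2^{(i)}$ with $(\bar\l^{(i)})'(\theta) = \lambda$ for all $\theta \geq T^{(i)}$. I claim $T^{(i)} \to 0$. In the formula for $T^{(i)}$ the quantities $\eta$, $\ldmax$, $\abs{\Eloc}$ and $\Delta$ depend only on the configuration data and on $\lambda$, hence are common to all these local instances; $\OPT^{(i)} \to \OPT^\circ$ by continuity of the optimal value of \eqref{eq:dual} in the coefficients $\hat\tau_e$ (the primal \eqref{eq:primal} always being feasible, as the thin flow is a primal solution); and $\Phi^{(i)}(0) \to \Phi^\circ(0)$ (the potential \eqref{eq:potential_function} for the local instance with transit times $\tau^{(i)}$, evaluated at $\l^{(i)}(0)$) together with $\max_e s_e^{(i)}(0) \to 0$, since the relevant formulas are continuous in $(\l(0),\tau)$ and the only configuration ``jumps'' at $\l^{(i)}(0)$ are formerly-tight arcs becoming strictly active or inactive, contributing terms $\nu_e q_e^{(i)}(0)$ or $\nu_e s_e^{(i)}(0)$ that vanish. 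Finally, $\OPT^\circ = \Phi^\circ(0)$: by \Cref{lem:bound}, $\Phi^\circ(0)$ is the dual objective of the assignment $d_v = \linit_v/\lambda_v$; feeding the thin flow $y$ for $(\Eloc,\Einfloc)$ into \eqref{eq:primal} and using that every arc of $\Eloc\setminus\Einfloc$ is tight (so $\hat\tau_e = (\linit_w-\linit_v)/\lambda_w$) together with $y$ being a circulation, a direct computation shows the primal value of $y$ equals $\Phi^\circ(0)$; weak duality then forces both to be optimal. Hence $T_1^\circ = T_2^\circ = 0$, so $T^{(i)} \to 0$ and in particular $T^{(i)} < \epsilon$ for all large $i$.

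\emph{Step 3 (conclusion).} For large $i$, $\bar\l^{(i)}$ and $\l^*$ are both $\ldmax$-Lipschitz and share the derivative $\lambda$ on $[T^{(i)},\infty)$, so $\bar\l^{(i)} - \l^*$ is constant there and
\[ \sup_{\theta\in[0,\epsilon]} \nnorm{\l^{(i)}(\theta) - \l^*(\theta)} \;=\; \sup_{\theta\in[0,\epsilon]} \nnorm{\bar\l^{(i)}(\theta) - \l^*(\theta)} \;\leq\; \nnorm{\l^{(i)}(0) - \linit} + 2\ldmax\, T^{(i)} \;\xrightarrow{\,i\to\infty\,}\; 0, \]
which gives $\l^{(i)}(\theta) \to \l^*(\theta)$ for all $\theta\in[0,\epsilon]$ (indeed uniformly). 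The hard part is Step 2, and within it the identity $\OPT^\circ = \Phi^\circ(0)$: this says that $\linit$ already seeds a steady state, so that the \Cref{thm:ssquantitative} bound at the limit instance is $0$; without it the transient windows of the $\l^{(i)}$ need not shrink and the argument collapses. The configuration ``jumps'' at $\l^{(i)}(0)$ are a harmless but pervasive bookkeeping nuisance that has to be tracked throughout.
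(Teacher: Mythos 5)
Your proof is correct and takes essentially the same approach as the paper: restrict to the local network at $\linit$, apply the quantitative bound of \Cref{thm:ssquantitative} to the trajectories in the local instances with transit times $\tau^{(i)}$, use that $\lambda$, $\eta$, $\Delta$, $\abs{\Eloc}$, $\ldmax$ are $\tau$-independent, and show $T_1^{(i)}, T_2^{(i)} \to 0$ by continuity of $\OPT$, $\Phi(0)$, and the initial slacks. You are more explicit than the paper in one useful place: the identity $\OPT^\circ = \Phinom(0)$ (which makes $T_1^{(i)} \to 0$ actually go through) is left implicit in the paper, whereas you verify it directly; it also follows more quickly from \Cref{thm:optimal_iff_steadystate}, since $\l^*$ moves in the steady-state direction at time $0$ and all arcs in $E^>$ are active there, so the dual assignment in \Cref{lem:bound} coincides with the pointwise-minimal one.
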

\begin{proof}
    As in the proof of \Cref{lem:local_continuity}, we pick $\epsilon$ small enough so that we can restrict our attention to the local network at $\linit$.
    More precisely, suppose $\epsilon > 0$ is small enough that for all $i$ sufficiently large, the active arcs of $\l^{(i)}(\theta)$ are in $E'_{\linit}$ for all $\theta \in [0, \epsilon]$, and all arcs of $E^*_{\linit}$ are resetting arcs of $\l^{(i)}(\theta)$.
    As before let $(\lambda, y)$ be a solution to the thin flow equations for the configuration $(E'_{\linit}, E^*_{\linit})$ considering transit times $\tauinit$. 
Note that the steady state direction $\lambda$ is independent of transit times.

Let $\Phi^{(i)}$ be the potential function as defined in \eqref{eq:potential_function} for $\l^{(i)}$ in the network $G^{\tau^{(i)}}$, 
and let $\Phinom$ be the potential for $\l^*$ in $G^{\tauinit}$. 
Let moreover $\OPT^{(i)}$ be the upper bound on $\Phi^{(i)}$ and $\OPT^\circ$ the upper bound on $\Phinom$. The dual feasibility set and thus the optimal value depend continuously on $\tau$ and thus $\OPT^{(i)} \to \OPT^\circ$ as $i \to \infty$.
Since the potential at time $0$ depends continuously on transit times and the initial values, we have $\Phi^{(i)}(0) \to \Phinom(0)$ as $i \to \infty$. 
%
%These upper bounds converge to the optimal primal-dual value of the original network.
%The positive lower bound on $\Phi'$ when not in a steady state only depends on the configuration $(E', E^*)$ and thus holds for all these potentials simultaneously.
%Therefore, by \Cref{thm:ssquantitative}, the time $T_1^{(i)}$ after which $\l^{(i)}$ moves in pre-steady state direction goes to $0$ as $i \to \infty$. 
By \Cref{lem:wss_reached}, the time $T_1^{(i)}$ after which $\l^{(i)}$ satisfies the weak steady state properties is upper bounded by $C_1(\OPT^{(i)} - \Phi^{(i)}(0))$, where $C_1$ depends only on the graph and the capacities and not $\tau$. 
As $\Phinom(0) = \OPT^{\circ}$, we obtain
%Thus, 
 $T_1^{(i)} \to 0$ as $i \to \infty$.

Next we show that also the duration after $T_1^{(i)}$ until steady state $T_2^{(i)}$ goes to $0$ as $i \to \infty$.
By \Cref{lem:post_phase}, we have $T_2^{(i)} \leq \Delta \max_{v \in V}  \gamma^{(i)}_v(T_1^{(i)})$. 
The definition of $\Delta$ depends only on the steady-state direction $\lambda$ and is thus independent of the transit times. 
Since $T_1^{(i)} \to 0$, $\l^{(i)}(T_1^{(i)}) \to \linit$; in particular, the distance between $\l^{(i)}(T_1^{(i)})$ and the steady-state set is going to $0$.
%To observe that $\gamma^{(i)}_v(T_1^{(i)}) \to 0$ as $i \to \infty$ and thus $T_2^{(i)} \to 0$, we use two things. 
By \Cref{lem:gammazero} and Lipschitz continuity of each $\gamma^{(i)}_v$ with respect to labels, it follows that $\gamma^{(i)}_v(T^{(i)}_1) \to 0$ for all $v$.
 %$T_1^{(i)}$ goes to 0 as $i \to \infty$. Finally, also the initial slack of all these trajectories converges to the slack of $\linit$, which is $0$ for all arcs $e \in E'_{\linit} \supseteq E \setminus E^<$.
 
Thus $T^{(i)} := T_1^{(i)}+ T_2^{(i)} \to 0$, and we conclude that $\|\l^*(\theta) - \l^{(i)}(\theta)\| \leq 2\ldmax T^{(i)} + \|\linit - \l^{(i)}(0)\| \to 0$ as $i \to \infty$, for any $\theta \in [0, \epsilon]$. % \qedhere \]
\end{proof}

Given this, the conversion of local continuity to full continuity over the whole trajectory is essentially identical to the argument in \Cref{sec:unique}.
The extension of local continuity to continuity over any compact interval is the same.
To extend this to the entire trajectory, we just again use that the steady-state direction, 
obtained as the solution to the thin flow equations with configuration $(E, \emptyset)$ is the same.
We can obtain a uniform bound on the time to reach this steady state for all choices of $\tau$ in some neighborhood, and so the claim follows.
%\nnote{Skipping a bit of detail, but I think that's better than repeating things.}

%need the following simple observations.
%\begin{itemize}
%    \item 
%        For any $\tau \in \Rplus^E$, the steady-state direction obtained as the solution to the thin flow equations with configuration $(E, \emptyset)$ is the same. 
%        This is clear---the thin flow equations do not depend explicitly on $\tau$, only implicitly via the configuration.
%    \item The potential $\Phi^\tau$, as well as the optimum $\OPT^\tau$ to \eqref{eq:primal}, depend continuously on $\tau$.
%\end{itemize}
%This means that we can still get a bound $T$ on the time to reach steady state for a neighborhood around a given $\tau$ and initial condition $\linit$, and so we can extend continuity to the entire trajectory.
%\leon{Some arguments appear twice now}

\paragraph{Continuity with respect to $\nu$.}
Next, we consider perturbing the capacities $\nu_e$ and/or the inflow rate $u_0$.
Perturbing the inflow rate can be thought of as perturbing an arc capacity, by inserting a dummy arc $s's$ with capacity $u_0$, setting $s'$ to be the new source, and choosing the inflow of the new instance to be large; perturbing the capacity of $s's$ is then functionally equivalent to perturbing the inflow of the original instance.
So we will consider only perturbations of $\nu$.

Define, for any $\nu \in \Rplusplus^E$, $G^\nu$ and $X^\nu$ for the instance and vector field corresponding to $\nu$.
This time, the set of feasible labelings $\Omega$ does not depend on $\nu$.

The new ingredient compared to perturbing $\tau$ is that the steady-state direction $\lambda^\nu$ 
\emph{does} depend on $\nu$.
However, it does so continuously.
\begin{lemma}\label{lem:lambda-continuous}
    Fix some valid configuration $(E', E^*)$, and let $\lambda^\nu$ be the thin flow direction for this configuration in $G^\nu$, for any capacity vector $\nu$. \nnote{Put back 'thin flow direction'.} %\todo{Removed ``thin flow direction'' which was never defined.}
    Then $\lambda^\nu$ depends continuously on $\nu$. % in $\Rplusplus^E$. 
    \nnote{in $\Rplus^E$ perhaps? Formally we asked for $\nu_e > 0$, but that's just because a 0-capacity edge is the same as a missing edge. There is mention of some set being closed below, that's the only reason I bring this up. For now I'm just removing the explicit mention of the domain of $\nu$.}
    \laura{oh indeed. good point. I would be ok with leaving this as it is, as it seems to be just a formality.  
    	%If you don't like this, we could locally say we allow $\nu_e=0$  just here as removing arcs. Or we even still get continuity, if $N[P]$ contains all limit points in the interior of possible capacity vectors.
    	}
\end{lemma}
\begin{proof}
%    \nnote{Not sure if ``ordered partition'' is a standard object. But I guess it's anyway clear what is meant?} \laura{I think it is clear}
    Consider an ordered partition $\mathcal{P} = (V_1, V_2, \ldots, V_k)$ of $V$, which we can view as an assignment $\pi: V \to \mathbb{N}$ %\laura{we use $\gamma$ above for the slack of a path changed it to $\pi$}
    that labels each node with the index of the part it lies in. Associated with this, define the following linear system:
    \begin{align*}
        \l'_s &= 1,\\
        \l'_w &= \l'_v \;\;\qquad\qquad \text{ for all } v,w \text{ where } \pi(w) = \pi(v),\\
        x'_e &= \nu_e\l'_w \;\;\quad\qquad \text{ for all } e=vw \text{ where } e \in E^* \text{ or } \pi(w) > \pi(v),\\
        x'_e &= 0 \quad\qquad\qquad \text{ for all } e=vw \text{ where } e \notin E^* \text{ and } \pi(w) < \pi(v),\\
        \sum_{e \in \delta^+(v)} x'_e - \sum_{e \in \delta^-(v)}x'_e &= \begin{cases} u_0 \;\;\qquad&\text{ if } v=s,\\
            0 &\text{ if } v \in V \setminus \{s,t\}.
        \end{cases}
                                                                 %&\text{ is an $s$-$t$-flow of value $u_0$ \todo{Change to just the flow conservation conditions; no nonnegativity}}
    \end{align*}
    This linear system may have multiple solutions. If there exists a solution $(\l', x')$ for which $0 \leq x'_e \leq \nu_e\l'_w$ for all $e=vw$, then it is easy to verify that this satisfies the thin flow conditions; essentially, we have guessed the ordering between label derivatives. 
    Say that $\mathcal{P}$ is a \emph{correct (ordered) partition} if this holds, and let $N[\mathcal{P}]$ denote the set of capacities for which the linear system for this partition is correct. 
    Note that $N[\mathcal{P}]$ is a closed set. \todo{Why?}
    Existence of thin flows implies that for every $\nu$, there is at least one ordered partition $\mathcal{P}$ for which $N[\mathcal{P}]$ contains  $\nu$ (there may be more than one; it is possible that for some partition $(V_1, \ldots, V_k)$, we obtain a solution with $\l'_v = \l'_w$ for $v \in V_j$, $w \in V_{j+1}$, in which case the partition obtained by merging $V_j$ and $V_{j+1}$ would also be correct).
    Uniqueness of thin flows implies that for any correct partition $\mathcal{P}$, and \emph{any} solution $(\l', x')$ to the corresponding system, $\l' = \lambda^\nu$.

    It follows that to show continuity of $\lambda^\nu$, it suffices to show, for a fixed ordered partition $\mathcal{P}$, continuity of the solution $\l'$ of the linear system with respect to $\nu$ within $N[\mathcal{P}]$.
    Substitute out the $x'$ variables, to reduce to a linear system $A^\nu\l' = b$.
    The entries of $A^\nu$ clearly depend continuously (indeed, linearly) on $\nu$. \todo{Clarity of why this is preserved after substituting?}% \laura{Also here I don't really get it. It is like the Markov chain model, thus we can ignore the flow within a component and just look at the bigger labels. But I don't see how we can substitue, since we do not know the flow within a component. Or do we only consider the flow between the components and then set the labels in a component...?}
   % \nnote{We don't care what the flow is inside the components; it doesn't have to satisfy any constraints, except for flow conservation. Even negative is fine. Basically this stuff is contracted away. But yeah, I'm skipping lots of details.}
    Further, since $\l'$ is uniquely determined in $N[\mathcal{P}]$, $A^\nu$ is nonsingular in this set.
    Hence $\nu \mapsto (A^\nu)^{-1}$ is also continuous in $N[\mathcal{P}]$.
\end{proof}

%We again start with a local continuity statement, that is almost identical to the one for $\tau$.
Since $\Omega$ does not depend on $\nu$, we simplify our life and maintain a fixed initial condition (of course, the result could be combined with the results on continuity with respect to initial conditions, if desired).
The local continuity statement now becomes much simpler, since we know that the trajectory will move in the thin flow direction for some configuration, and so becomes an immediate consequence of the previous lemma.
\begin{lemma}
\label{lem:local_continuity_nu}
Fix $\nuinit \in \Rplusplus^E$, as well as $\linit \in \Omega$.
Let $\l^*$ be defined by $\l^*(\theta) = \linit + \theta \cdot X^{\nuinit}(\linit)$.
Then there exists $\epsilon > 0$ such that the following holds.
%For any sequence $\nu^{(1)}, \nu^{(2)}, \ldots$ of capacity vectors converging to $\nuinit$, 
Defining $\l^{\nu}$ to be the equilibrium trajectory for $G^{\nu}$ with $\l^{\nu}(0) = \linit$, 
$\l^{\nu}(\theta) \to \l^*(\theta)$ as $\nu \to \nuinit$ for all $\theta \in [0, \epsilon]$.
\end{lemma}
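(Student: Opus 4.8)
The plan is to mirror the proofs of \Cref{lem:local_continuity} and \Cref{lem:local_continuity_tau}, but to exploit the fact that, with the steady-state direction now pinned down by \Cref{lem:lambda-continuous}, no quantitative estimate on the speed of convergence to steady state is needed: on an initial interval whose length is uniform in $\nu$, the trajectory $\l^\nu$ is \emph{exactly linear}, with slope the thin flow direction of a single, $\nu$-independent configuration.

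First I would fix $\epsilon$. Since every equilibrium trajectory is $\ldmax$-Lipschitz and $\l^\nu(0)=\linit$ for all $\nu$, the strict inequalities $\linit_w-\linit_v<\tau_e$ (for $e\notin E'_{\linit}$) and $\linit_w-\linit_v>\tau_e$ (for $e\in E^*_{\linit}$) persist along $\l^\nu$ on an interval $[0,\epsilon]$ whose length depends only on these gaps and on $\ldmax$, hence not on $\nu$. Thus for every $\nu$ and every $\theta\in[0,\epsilon]$ we have $E'_{\l^\nu(\theta)}\subseteq E'_{\linit}$ and $E^*_{\l^\nu(\theta)}\supseteq E^*_{\linit}$, so $\l^\nu$ restricted to $[0,\epsilon]$ is an equilibrium trajectory of the local network $\Gloc$ (with the capacities of $G^\nu$ and free arcs $\Einfloc=E^*_{\linit}$), and by \Cref{thm:uniqueness} it is the unique one starting from $\linit$; write $\tilde\l$ for its extension to $[0,\infty)$, the unique equilibrium trajectory of this local instance with $\tilde\l(0)=\linit$.

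Second, I would show $\tilde\l(\theta)=\linit+\theta\lambda^\nu$ for all $\theta\ge 0$, where $\lambda^\nu$ is the thin flow direction of the \emph{fixed} configuration $(E'_{\linit},E^*_{\linit})$ in $G^\nu$. For linearity: by the conic self-similarity of $\Xloc$, \Cref{lem:scaling} applied with base point $\linit$ itself (so $p=0$) shows that $\theta\mapsto\alpha\,\tilde\l(\theta/\alpha)+(1-\alpha)\linit$ is again an equilibrium trajectory of the local instance starting at $\linit$, hence equals $\tilde\l$; setting $g(\theta):=\tilde\l(\theta)-\linit$ this reads $g(\alpha\theta)=\alpha g(\theta)$ for all $\alpha,\theta>0$, which forces $g(\theta)=\theta\,g(1)$. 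To identify the slope $g(1)$ with $\lambda^\nu$: $\tilde\l$ moves with this constant velocity, while \Cref{thm:longterm}, applied to the local instance (valid since $(E'_{\linit},E^*_{\linit})$ is a valid configuration, $\linit$ is feasible for it, and $E^*_{\linit}\supseteq\Einfloc$ holds with equality, so the relevant thin flow direction is exactly $\lambda^\nu=X^\nu(\linit)$), says this velocity eventually equals $\lambda^\nu$ and hence equals it throughout. (Alternatively one can avoid \Cref{thm:longterm} here by checking directly that the thin flow for $(E'_{\linit},E^*_{\linit})$ restricts to a thin flow for the refined configuration entered upon leaving $\linit$ in direction $\lambda^\nu$: the arcs dropped from $E'$ carry no flow and do not attain the minimum in \eqref{eq:l'_v_min:base}, the arcs that become resetting carry flow exactly $\lambda^\nu_w\nu_e$, and the no-zero-cost-cycle assumption keeps the refined configuration valid.)

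Finally, since $(E'_{\linit},E^*_{\linit})$ is the configuration at $\linit$ itself, $\lambda^{\nuinit}=X^{\nuinit}(\linit)$, so \Cref{lem:lambda-continuous} applied to this fixed configuration gives $\lambda^\nu\to\lambda^{\nuinit}$ as $\nu\to\nuinit$; hence for all $\theta\in[0,\epsilon]$,
\[ \nnorm{\l^\nu(\theta)-\l^*(\theta)} = \theta\cdot\nnorm{\lambda^\nu-\lambda^{\nuinit}} \le \epsilon\cdot\nnorm{\lambda^\nu-\lambda^{\nuinit}}, \]
which tends to $0$ as $\nu\to\nuinit$, giving the claim (indeed uniformly over $[0,\epsilon]$). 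The hard part will be the second step: setting up the reduction to the conic local network on a $\nu$-uniform interval and pinning the constant slope to the thin flow direction of a single, $\nu$-independent configuration, so that \Cref{lem:lambda-continuous} can be applied as a black box; everything else is soft.
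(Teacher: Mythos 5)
Your proof is correct and reaches the conclusion by the same reduction as the paper: on a $\nu$-uniform initial interval the trajectory is exactly linear with slope $\lambda^\nu$, the thin flow direction for the \emph{fixed} configuration $(E'_{\linit},E^*_{\linit})$ in $G^\nu$, and the claim then follows from \Cref{lem:lambda-continuous}. Where you differ is in how the initial linearity is obtained: the paper simply cites the $\alpha$-extension procedure of \cite{koch2011nash,cominetti2015existence} together with the uniqueness theorem, whereas you re-derive it from the scaling invariance of \Cref{lem:scaling} (applied with $p=0$) plus uniqueness on the local instance, and then pin the constant slope to $\lambda^\nu$ via \Cref{thm:longterm} (or the direct check you sketch); this is a more self-contained, if somewhat longer, treatment of that sub-step, and it implicitly requires observing that the uniqueness and long-term results extend to local instances with nonempty $\Einf$, as the paper's machinery does. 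One minor imprecision worth fixing in Step 1: the Lipschitz constant $\ldmax = u_0/\min_e\nu_e$ \emph{does} depend on $\nu$, so asserting that $\epsilon$ ``depends only on $\ldmax$, hence not on $\nu$'' is a slip; the correct justification is that $\ldmax(\nu)$ is uniformly bounded for $\nu$ in a neighbourhood of $\nuinit$, which suffices since the statement concerns the limit $\nu\to\nuinit$.
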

\begin{proof}
    Let $\lambda^\nu$ be the solution to the thin flow equations with configuration $(E'_{\linit}, E^*_{\linit})$ in $G^\nu$, for any capacities $\nu$.
    By \Cref{lem:lambda-continuous}, $\lambda^\nu$ depends continuously on $\nu$.
    For any $\nu$, there exists some $\epsilon(\nu) > 0$ so that $\l^{\nu}(\theta) = \linit + \theta X^\nu(\linit)$ for $\theta \in [0, \epsilon(\nu)]$ (this is from considering the equilibrium constructed by the $\alpha$-extension procedure~\cite{koch2011nash,cominetti2015existence}, which we have shown is the only possible equilibrium). %\todo{explain why?} \laura{by uniqueness of the Nash trajectories and \cite{cominetti2021long}($\alpha>0$)?}
    Further, if we restrict our attention to some sufficiently small ball around $\nuinit$, we can choose $\epsilon(\nu) = \epsilon$ for some fixed $\epsilon > 0$ for all $\nu$ in this ball. \todo{Handwavy. (Not a priority though.)}
    The claim thus follows from the previous lemma.
\end{proof}
By Lipschitz continuity, we deduce (precisely as in \Cref{sec:unique}) that the equilibrium trajectory is continuous with respect to $\nu$ for any compact interval.
This time, however, we cannot go beyond this, since now the steady-state direction can depend on $\nu$, meaning that trajectories may slowly diverge as time goes on.
So we cannot get uniform convergence of the trajectories for all time.

	\section{Single-commodity instances with multiple origins and destinations}
\label{sec:single_com}

All results presented in the paper can be extended to a setting with a single commodity but multiple sources and sinks. Here, we are given a set of sources, each with a fixed inflow rate and agents enter the network via these sources and then travel to which ever destination they can reach fastest. 

First observe that this setting can be straightforwardly reduced to the setting with multiple sources but only one sink. 
Since the agents are indifferent regarding which sink they arrive at, we can simply contract all the sinks together into a single sink.
So we assume just a single sink for the remainder.

Unfortunately, we cannot handle multiple sources so easily.
To see why the situation is different here, consider a setting with two sources $s_1$ and $s_2$ both connected to some vertex $v$, which is then connected to the sink $t$. 
Let us assume that the travel time from $s_1$ to $v$ is much shorter than the travel time from $s_2$ to $v$. 
In this case the particles which enter the network via $s_2$ at time 0 interact with particles which enter the network via $s_1$ at a much later time. 
If we were to contract $s_1$ and $s_2$ together, no particles would depart from $s_2$ at time $0$, which is certainly incorrect.

Fortunately, the question of how to handle multiple source has already been discussed by Sering and Skutella~\cite{sering2018multiterminal}.
They showed that Nash flows over time exist in this setting, and show that their label derivatives satisfy a modified version of the thin flow equations almost everywhere.
{Their formalization defines labels $\l_v(\theta)$, but since there is no common source, it is not possible to identify $\theta$ as a departure time from a source, and indeed it can no longer be interpreted as a time at all. 
    Nonetheless, $\l_v(\theta)$ retains its interpretation as the earliest arrival time for a particle indexed by $\theta$, and they show that an equilibrium is described by the usual requirement that all flow is supported on $E'_\theta$ (which is defined as before, via \eqref{eq:actres} and $E'_\theta := E'_{\ell(\theta)}$).
    For each source $s$ and time $\xi$ there is a $\theta$ for which $\l_s(\theta) = \xi$, and so this ensures that from every source, all particles traverse quickest paths to the sink. 
    They also show that the label derivatives satisfy a variant of the thin flow equations.
}

We cannot quite use their results in a black box fashion, since we would like to simply apply our results for the single-source single-sink case, rather then redoing the proofs with slightly different thin flow equations. 
%\todo{Try to add a bit more about the definition of a Nash flow over time in this setting. Possibly can mention what the labels mean in this setting (moving the remark about the meaning of $\theta$ here, and saying that again in a Nahs flow over time, every particle leaving a source takes an active path to the sink.}

\newcommand{\actsources}[1]{\mathcal{S}_{#1}}
\newcommand{\sources}{\mathcal{S}}
\newcommand{\zeroset}[1]{\mathcal{Z}_{#1}}
\newcommand{\inactivity}{\beta}

Let $\sources = \{s_1, s_2, \ldots, s_m\}$ be the set of sources (and $t$ the sink, as usual), and let $u_i$ be the inflow rate at $s_i$.
It will be convenient to define the following network $\hat{G}$. % = (\hat{V}, \hat{E})$.
Its node set $V(\hat{G})$ consists of $V$ along with an additional ``super-source'' $s$.
Its arc set $E(\hat{G})$ consists of $E$, along with the arcs $E_s := \{ ss_i: s_i \in \sources\}$.
%The free arcs are simply $\hat{E}^\infty := \{ ss_i : i=1,2,\ldots, m\}$.
All arcs in $E$ maintain their transit times and capacities; 
arc $ss_i$ has capacity $u_i$ and transit time $0$ for each $s_i \in \sources$.
Let $u_0 \coloneqq \sum_{i=1}^m u_i$.

We will need to consider the details of the modified thin flow equations of \cite[Definition 3]{sering2018multiterminal}.
{We will consider maximal solutions to the following, which, as $E_{\theta}'$ is acyclic, can easily  be seen to be equivalent to  the modified thin flow equations of \cite[Definition 3]{sering2018multiterminal}.}
%The following is easily be seen to be a slightly rewritten version:
\begin{equation}\label{eq:modified-tf-ss}
	\begin{alignedat}{2}
		x' &\text{ is a static $s$-$t$-flow of value $u_0$ supported on $E'_\theta \cup E_s$} \\
		\l'_{s_i} &= x'_{ss_i} / u_i &&\text{ for all } s_i \in \sources\\ 
		\l'_w &%= \min_{e = vw \in E'} 
		\mathbin{{\leq}}
		\rho_e(\l'_v, x'_e)  && {\text{ for all } e = vw \in E_{\theta}'},\\
		\l'_w &= \rho_e(\l'_v, x'_e) &&\text{ for all } e = vw \in E'_\theta \text{ with } x'_e > 0.
	\end{alignedat}
\end{equation}
(Recall the definition of $\rho_e$ from \eqref{eq:tf-group}, and the characterization of $E'_\theta$ from \eqref{eq:actres}, which remains valid here).

%We remark that $\theta$ can no longer be interpreted as a time, since the first particles departing any source depart at time $0$, even though the smallest value of $\theta$ such that $x'_{ss_i}$ is positive can be nonzero and distinct for different sources.
%Rather, $\theta$ is simply an index into the set of agents; we refer the reader to \cite{sering2018multiterminal} for more clarity on the intuition.

Call a source $s_i$ \emph{active} at $\theta$ if there is a path in $(V, E'_\theta)$ from $s_i$ to $t$.
Let $\actsources{\theta}$ is the set of active sources at time $\theta$.
Now let $\zeroset{\theta}$ be the set of nodes reachable from the \emph{inactive} sources $\sources \setminus \actsources{\theta}$ along a path of active arcs.
Then it is easy to observe that in the modified thin flow equations, an inactive source $s_i$ must have $x'_{ss_i} = \l'_{s_i} = 0$, since there is no flow through $s_i$ supported on $E'_\theta${.}
It also then follows that all nodes in $v \in \zeroset{\theta}$ satisfy $\l'_v = 0$.
Let $E''_\theta = E'_\theta \cup \{ ss_i : s_i \in \actsources{\theta}\}$, and let $E^{**}_\theta := E^*_\theta \cup \{ ss_i: s_i \in \actsources{\theta}\}$.
Then the following is equivalent to \eqref{eq:modified-tf-ss}:
\begin{equation}\label{eq:modified-tf}
	\begin{alignedat}{2}
		x' &\text{ is a static $s$-$t$-flow of value $u_0$  supported on $E''_\theta$} \\
		\l'_w &%= \min_{e = vw \in E'} 
		\mathbin{{\leq}}\rho_e(\l'_v, x'_e) \quad && \hspace{-2em} {\text{ for all } e = vw \in E''_\theta } \text{ with } w \in V \setminus \zeroset{\theta}\\
		\l'_w &= \rho_e(\l'_v, x'_e) &&\hspace{-2em} \text{ for all } e = vw \in E''_\theta \text{ with } x'_e > 0 {\text{ and } w \in V \setminus \zeroset{\theta}}\\
		\l'_{v} &= 0 &&\hspace{-2em} \text{ for all } v \in \zeroset{\theta}.
	\end{alignedat}
\end{equation}
Here, we have extended $\rho_e$ to $e=ss_i$ for $s_i \in \actsources{\theta}$ in the obvious way $\rho_e(\l'_s, x'_e) = x'_e / u_i$, given that these are all considered resetting arcs.
Now define the generalized network $G_\theta$ by first removing $\zeroset{\theta}$ from $\hat{G}$, and then choosing the set of free arcs to be $E^\infty_\theta := \{ ss_i : s_i \in \actsources{\theta}\}$.
{Note that $(E(G_\theta), E^\infty_\theta)$ is a valid configuration, since no arc $ss_i$ lies on a directed cycle.}
\nnote{Adjusted this, and I removed any reference to the definition of a feasible labeling. I think this is OK?}\laura{I think yes, One thing still confueses me. First, I thought $\l_{s_i}(\theta)=0$, was a condition for a feasible labeling, but we deleted all inactive sources already at this point. (And for the sources which are present, we get that they are active as $ss_i$ is resetting and the definition of a valid configuration, thus guarantees an active path to $t$.) So what is the sentence: We need one further restriction: if $s_i$ is {an} inactive source at time $\theta$, then $\l_{s_i}(\theta) = 0$.... saying? }
\nnote{I agree it was weirdly phrased. I assume we just meant that we need to fix these to some reasonable value (and then because we set $\l'_v = 0$ in the dead part, these are then naturally fixed as well. I adjusted the text, OK?}
%directed cycles needs to be a valid configuration: this requirement provides a definition of a feasible labeling in this setting. \nnote{Confused by this sentence: $(E(G_\theta), E^\infty_\theta)$ is not $(E'_{\l(\theta)}, E^*_{\l(\theta)})$. What do we mean to say here? Also don't we want to claim that this IS a valid configuration? Which we can now do, since $E^\infty_\theta$ does not lie on any directed cycles.}
Then \eqref{eq:modified-tf} is identical to the usual thin flow equations \eqref{eq:tf-group} on $G_\theta$, along with $\l'_v = 0$ for $v \in \zeroset{\theta}$.
%$(E(G_\theta), E^\infty_\theta)$ needs to be a valid configuration: this requirement provides a definition of a feasible labeling in this setting. \nnote{Confused by this sentence: $(E(G_\theta), E^\infty_\theta)$ is not $(E'_{\l(\theta)}, E^*_{\l(\theta)})$. What do we mean to say here? Also don't we want to claim that this IS a valid configuration? Which we can now do, since $E^\infty_\theta$ does not lie on any directed cycles.}
{We also fix $\l_{s_i}(\theta)=0$ for all inactive sources at time $\theta$ (these labels are not meaningful); we include this as a requirement for a labeling to be feasible.}
%need one further restriction: if $s_i$ is {an} inactive source at time $\theta$, then $\l_{s_i}(\theta) = 0$.
As usual, we can consider equilibrium trajectories starting from any feasible labeling $\linit$.

\medskip
Given some equilibrium trajectory $\l$, remember that the slack of an arc is defined as $s_e(\theta)=[\l_v (\theta) + \tau_e- \l_w(\theta) ]^+$ for $e=vw \in E \setminus \Einf$ and $s_e(\theta)=0$ for $e \in \Einf$. 
For any source $s_i \in \sources$, let $\inactivity_{s_i}$ be the minimal cost of an $s_i$-$t$ path, where arc costs are defined by the slacks:
\[
\inactivity_{s_i}(\theta) = \min \Bigl\{ \sum_{e \in P} s_e(\theta) : P \text{ is an $s_i$-$t$-path}\Bigr\}.
\]
\begin{lemma}
	\label{lem:actsources_behave_well}
	There is a constant $\ldmin>0$, depending only on the network, such that the following holds:
	\begin{compactenum}[(i)]
		\item\label{it:inactivity_dec}
		For all points of differentiability $\theta$ and $s_i \in \sources \setminus \actsources{\theta}$, {we have} $\inactivity_{s_i}'(\theta)\leq -\ldmin$. 
		%Moreover,
		\item The set of active sources $\actsources{\theta}$ is increasing in $\theta$, and
		%  \item 
		$\actsources{\theta} = \sources$ for all $\theta \geq\sum_{e \in E}\tau_e/ \ldmin$.
	\end{compactenum}
\end{lemma}
\begin{proof}
	\begin{enumerate}[(i)]
		\item
		Fix $\theta$ and $s_i$. 
		Let $P$ be any $s_i$-$t$-path of minimum slack at time $\theta$.
		Let $\tilde{P}$ be the maximal prefix of $P$ such that no arc of $\tilde{P}$ has a queue.
		The last node $u$ of $\tilde{P}$ must be connected to $t$ via a path of active arcs. %since $(E'_\theta, E^*_\theta)$ is a valid configuration. 
		%\Cref{todo}.
		So for any $e \in P \setminus \tilde{P}$, $s'_e(\theta) = 0$.
		Let $\tilde\inactivity := \sum_{e \in \tilde{P}} s'_e(\theta) = \sum_{e \in P} s'_e(\theta)$.
		Clearly $\inactivity_{s_i}'(\theta) \leq \tilde\inactivity$, so it suffices to upper bound $\tilde\inactivity$.
		
		For any $e=vw \in \tilde{P}$, since $e$ has no queue, we have $s'_e(\theta) = \l_v'(\theta) - \l_w'(\theta)$ (recall that $\theta$ is a point of differentiability).
		Define $\ldmin$ to be the minimum possible value of $\l'_v$ for any node $v$ and any thin flow solution $\l'$ in $G_\theta$, over all possible valid configurations (this is strictly positive by \Cref{lem:labelsinc} and since no node of $\zeroset{\theta}$ is in $G_{\theta}$). 
		Then 
		\[ \tilde\inactivity = \sum_{e =vw \in \tilde P} (\l'_v(\theta) - \l'_w(\theta)) = \l'_{s_i}(\theta) - \l'_{u}(\theta) = -\l'_u (\theta) \leq - \ldmin.
		\]
		
		\item 
		Note that a source $s_i$ is active when $\inactivity_{s_i}(\theta)=0$.  It follows directly that the set of active sources $\actsources{\theta}$ only increases in $\theta$, since $\inactivity_{s_i}$ is  continuous{, nonnegative and monotone}.
		
			Next, observe that for every inactive source $s_i$ the maximal slack on a path to the sink is upper bounded by $\sum_{e \in E} \tau_e$.
			Let $P$ be an $s_i$-$t$-path of minimum slack at time $0$ and let $\tilde{P}$ be a maximal prefix of $P$ such that no arc of $\tilde{P}$ has a queue. Then (similarly to the above)
			\[ 
			\inactivity_{s_i}(0) = \sum_{e \in P} s_e(0) = \sum_{e=vw \in \tilde{P}} (\tau_e + \l_v(0) - \l_w(0)) = \l_{s_i}(0) - \l_u(0) + \sum_{e \in \tilde{P}} \tau_e,
			\]
			where $u$ is the last node of $\tilde{P}$. 
			Since $\l_{s_i}(0) = 0$, it follows that $\beta_{s_i}(0) \leq \sum_{e \in \tilde{P}} \tau_e$. 
			Applying (\ref{it:inactivity_dec}), the claim follows.
	\end{enumerate}
\end{proof}

Stronger versions of the following theorem, analogous to \Cref{thm:longterm-strong}, are certainly possible; we give the most straightforward and concise version here.
\begin{theorem}
	In the general single-commodity setting every equilibrium trajectory reaches steady state. %within time $T$ depending only on the instance and the initial labeling.
\end{theorem}
\begin{proof}
	By \Cref{lem:actsources_behave_well}, we have that for all $ \theta \geq \sum_{e \in E} \tau_e/\ldmin$ all sources are active and remain active. Thus $\zeroset{\theta}= \emptyset$, and so after some initial period, the trajectory is described as an equilibrium trajectory of a single, unchanging single-source single-sink instance.
	\Cref{thm:longterm} can thus be directly applied.
\end{proof}

As in the single-source-single-sink setting we show a local form of continuity. 

\begin{lemma}
	\label{lem:local_continuity_general}
	Let $\l$ be the equilibrium trajectory starting from some initial feasible labeling $\linit$, and let $\l^{(1)}, \l^{(2)}, \ldots$ be a sequence of equilibrium trajectories  
	where $\l^{(i)}(0) \to \linit$ as $i \to \infty$.
	Then there exists $\eta > 0$ such that 
	$\l^{(i)}(\theta) \to \l(\theta)$ as $i \to \infty$ for all $\theta \in [0, \eta]$.
\end{lemma}
\begin{proof}
	In the following, $\actsources{\theta}$, $\zeroset{\theta}$ etc.\ refer to the trajectory $\l$.
	We choose $\eta$ small enough that during the interval $[0, \eta]$, no new source will become active, and the set $\zeroset{\theta}$ does not change. 
	To guarantee this, consider the minimal slack $\tilde{s}$ (with respect to $\l$) of an arc leaving $\zeroset{0}$, and choose $\eta = \tilde{s}/(4\kappa)$. 
	Then by Lipschitz continuity, any trajectory starting within distance $\tilde{s}/4$ of $\linit$ will remain within distance $\tilde{s}/2$ of $\linit$ throughout $[0, \eta]$.
	Since $\l^{(i)}(0) \to \linit$, for any $i$ large enough we have that $\l^{(i)}(\theta)$ remains within distance $\tilde{s}/2$ of $\linit$, and hence that all arcs leaving $\zeroset{0}$ still have slack. 
	This implies in particular that the set of active sources for $\ell^{(i)}$ is a subset of the active sources of $\ell$, for all $i$ large enough.
	
	There may be some sources that are active for $\l$ but inactive for $\l^{(i)}$ at time $0$.
	We first argue that all such sources will become active very quickly.
	For suppose $s_i$ is such a source. 
	We have that $\inactivity^{(i)}_{s_i}(0) \leq \abs{V} (\l^{(i)}(0) - \linit ) $, where $\inactivity^{(i)}$ is defined with respect to $\ell^{(i)}$.
	By \Cref{lem:actsources_behave_well}, this implies that $s_i$ becomes active by time $T^{(i)} := (\l^{(i)}(0) - \linit )  \abs{V}/\ldmin$.
	
	Thus, from time $T^{(i)}$ both $\l^{(i)}$ and $\l$ share the same set of active sources and the same set of nodes reachable from active sources, as long as $i$ is sufficiently large.
	So $\l$ and each $\l^{(i)}$ (for $i$ large enough) can all be viewed as equilibrium trajectories in the same single-source single-sink instance; for $\l$ this is true for the interval $[0, \eta]$, and for $\l^{(i)}$ on the interval $[T^{(i)}, \eta]$.
	Further, $T^{(i)} \to 0$, meaning that $\l^{(i)}(T^{(i)}) \to \l(T^{(i)})$. 
	Applying \Cref{thm:label-continuity} to this common instance, we deduce that $\l^{(i)}$ converges to $\l$ on the interval $[0,\eta)$.
	
	\nnote{Maybe this is skipping a bit too much in places.}
\end{proof}

Local continuity implies both uniqueness and global continuity.
The argument is the same as in the single-source-single-sink setting (see \Cref{sec:unique}, in particular proof of \Cref{thm:uniqueness,thm:label-continuity}), and we do not repeat it here.

\begin{corollary}
	In the general single-commodity setting, there is a unique equilibrium trajectory for any given initial labeling, and this trajectory depends continuously on the initial labeling.
\end{corollary}

	\section{Conclusion}\label{sec:conclusion}

In this paper we have demonstrated continuity in several different forms. 
As discussed in the introduction, continuity is a crucial property for the model to have if its behavior is to have any relevance to real-world traffic situations. 
However, it is also crucial to achieve stronger forms of continuity and stability.

Consider a variant of the fluid queueing model where each agent is not infinitesimally small, but controls a ``packet'' of small but finite size. Is it true that the equilibrium of this packet model converges to the equilibrium of the fluid queueing model, as the packet size goes to $0$?

While intuitive, such a result was not known.
It is clearly important: a negative answer would bring into question the relevance of the study of equilibria in the fluid queuing model.
Empirical evidence has been provided~\cite{ZiemkeEtAl2020FlowsOverTimeAsLimitOfMATSim}, and rigorous results have been obtained for the static model~\cite{cominetti2020convergence}.
As some first steps, Sering, Vargas Koch and Ziemke~\cite{sering2021convergence} showed convergence of the underlying packet model to the fluid queueing model, but in a weak sense that does not consider equilibrium behavior.

    %Similar questions can be asked about other types of ``small'' variations; what if agents do not all take a quickest route, but only approximately shortest routes; does the behaviour remain close to that of the exact equilibrium?
Our continuity result can be viewed as a necessary step to answering this question, but it does not suffice.
We can show continuity with respect to a bounded number of ``bumps'' of the equilibrium, but equilibria of a packet model version of an instance deviate from the equilibrium of the fluid queueing model in a much more substantial way.

Fortunately, in followup work~\cite{OSV23}, we have been able to demonstrate a convergence result of this form.
Our result is quite general, and also shows (for example) that $\epsilon$-approximate equilibria converge to exact dynamic equilibria as $\epsilon \to 0$, and that small time-varying perturbations to transit times (see \cite{pham2020dynamic}) do not impact the equilibrium behavior very much.
The results of this paper (continuity as well as long-term behavior) are absolutely crucial ingredients in the new result.

%Related questions are about other types of ``small'' variations; what if agents do not all take a quickest route, but only approximately shortest routes; does the behavior remain close to that of the exact equilibrium? Or what if the transit times change constantly during an evolution (as described in \cite{pham2020dynamic}) but always remain close to some fixed value? 
%\nnote{Removed a paragraph, and fed in a sentence into the above paragraph instead.}

\medskip

An important question that our results touch on concerns whether equilibria have a \emph{finite} number of phases.
Our results on long-term behavior imply that after finite time, there are no further phases.
They also rule out points of accumulation ``from the right'': for any given $\theta$, there exists an $\epsilon > 0$ so that the time interval $(\theta, \theta + \epsilon)$ lies within a single phase.
What our results do not rule out is the possibility of accumulation points ``from the left'':
that is, a moment $\theta$ such that there are an infinite number of phases in $(\theta-\epsilon, \theta)$ for any positive $\epsilon$.
More geometrically, we rule out the outward-spiraling situation of \Cref{fig:spiraling}, but not an inward-spiraling trajectory.
It remains an open question whether our methods can be strengthened to answer this question.

\medskip

%While we have restricted our attention to the setting of constant inflow, our uniqueness and continuity results carry over, essentially immediately, to the setting of piecewise-constant time-varying inflows.
%Additional technical work would be required to handle arbitrary inflow functions.
%\todo{We need to expand this (and possibly some of it should go much earlier).}

    As already discussed, our results on uniqueness and continuity carry over to piecewise-constant time-varying inflows.
We expect that the claims hold for arbitrary (appropriately well-behaved) inflow functions, but we do not know how to extend the arguments of this paper to show this, and we leave this as an open question.

\medskip

Our results show that for any given instance, there is a function $\delta(\epsilon)$ that goes to zero as $\epsilon$ goes to zero such that if two equilibrium trajectories start within distance $\epsilon$, then they remain within distance $\delta(\epsilon)$.
A weakness is that the explicit dependence of $\delta$ on $\epsilon$ is not very well controlled (and we do not describe it explicitly). 
It seems plausible that the correct dependence should even be \emph{linear}. 
We leave this as a direction for future work.

Finally, a major shortcoming of our results is that they apply only to single-commodity instances.
Our understanding of the multi-commodity version of the model is much less developed in general; existence of equilibria is known~\CCL), and \cite{sering2018multiterminal,sering2020diss} discuss some structural aspects. {Interestingly, in \cite{iryo2011multiple} it was shown that equilibria need not be unique anymore in multi-commodity instances.}
{It remains open whether a similar approach as taken here to show long term behaviour can} be developed in the multi-commodity setting. 
Progress here would be very exciting.
%\todo{Anything else to cite?}

	\ifdefined\journal
	\paragraph{Acknowledgments.}
        We thank the anonymous referees for their detailed and constructive feedback, which led to a number of improvements.

        (Further acknowledgements omitted until after peer review).
	\else
	\paragraph{Acknowledgments.}
	We are grateful to Jos\'e Correa, Andr\'es Cristi, Dario Frascaria, Marcus Kaiser, Tim Oosterwijk and Philipp Warode for many interesting discussions on various topics related to equilibria in flows over time.
	The first author thanks Dario Frascaria in addition for useful discussions some years prior to this work on the topic of long-term behavior without a bounded inflow assumption.
	 Finally, we thank the anonymous referees for their detailed and constructive feedback, which led to a number of improvements.
	\fi
	
	\bibliographystyle{alpha}
	\bibliography{literature}

\newcommand{\etalchar}[1]{$^{#1}$}
\begin{thebibliography}{CSSSM22}

\bibitem[BFA15]{bhaskar2015stackelberg}
U.~Bhaskar, L.~Fleischer, and E.~Anshelevich.
\newblock A {S}tackelberg strategy for routing flow over time.
\newblock {\em Games and Economic Behavior}, 92:232--247, 2015.

\bibitem[CCCW17]{caoatomic}
Z.~Cao, B.~Chen, X.~Chen, and C.~Wang.
\newblock A network game of dynamic traffic.
\newblock In {\em Proceedings of the 18th ACM Conference on Economics and
  Computation (EC)}, pages 695--696, 2017.

\bibitem[CCL11]{cominetti2011existence}
R.~Cominetti, J.~Correa, and O.~Larr{\'e}.
\newblock Existence and uniqueness of equilibria for flows over time.
\newblock In {\em Proceedings of the 28th International Colloquium on Automata,
  Languages and Programming (ICALP)}, pages 552--563, 2011.

\bibitem[CCL15]{cominetti2015existence}
R.~Cominetti, J.~Correa, and O.~Larr{\'e}.
\newblock Dynamic equilibria in fluid queueing networks.
\newblock {\em Operations Research}, 63(1):21--34, 2015.

\bibitem[CCO19]{correa2019price}
J.~Correa, A.~Cristi, and T.~Oosterwijk.
\newblock On the price of anarchy for flows over time.
\newblock In {\em Proceedings of the 20th ACM Conference on Economics and
  Computation (EC)}, pages 559--577, 2019.

\bibitem[CCO21]{cominetti2021long}
R.~Cominetti, J.~Correa, and N.~Olver.
\newblock Long-term behavior of dynamic equilibria in fluid queuing networks.
\newblock {\em Operations Research}, 70(1), 2021.

\bibitem[CSSSM22]{cominetti2020convergence}
R.~Cominetti, M.~Scarsini, M.~Schr{\"o}der, and N.~Stier-Moses.
\newblock Approximation and convergence of large atomic congestion games.
\newblock {\em Mathematics of Operations Research}, 48(2):784--811, 2022.

\bibitem[FF56]{ford_fulkerson_1956}
D.~R. Fulkerson and L.~R. Ford.
\newblock Maximal flow through a network.
\newblock {\em Canadian Journal of Mathematics}, 8:399–404, 1956.

\bibitem[FH19]{friesz2019mathematical}
T.~L. Friesz and K.~Han.
\newblock The mathematical foundations of dynamic user equilibrium.
\newblock {\em Transportation Research Part B: Methodological}, 126:309--328,
  2019.

\bibitem[Fra21]{DarioThesis}
D.~Frascaria.
\newblock {\em Dynamic traffic equilibria with route and departure time
  choice}.
\newblock PhD thesis, Vrije Universiteit Amsterdam, 2021.

\bibitem[GHKM22]{Graf_Harks_Kollias_Markl_2022}
L.~Graf, T.~Harks, K.~Kollias, and M.~Markl.
\newblock Machine-learned prediction equilibrium for dynamic traffic
  assignment.
\newblock In {\em Proceedings of the {AAAI} Conference on Artificial
  Intelligence}, pages 5059--5067, 2022.

\bibitem[GHS20]{graf2020dynamic}
L.~Graf, T.~Harks, and L.~Sering.
\newblock Dynamic flows with adaptive route choice.
\newblock {\em Mathematical Programming}, 183(1):309--335, 2020.

\bibitem[HFY13]{HFY13}
Ke~Han, Terry~L. Friesz, and Tao Yao.
\newblock Existence of simultaneous route and departure choice dynamic user
  equilibrium.
\newblock {\em Transportation Research Part B: Methodological}, 53:17--30,
  2013.

\bibitem[Iry11]{iryo2011multiple}
Takamasa Iryo.
\newblock Multiple equilibria in a dynamic traffic network.
\newblock {\em Transportation Research Part B: Methodological}, 45(6):867--879,
  2011.

\bibitem[Iry13]{Iryo13}
Takamasa Iryo.
\newblock Properties of dynamic user equilibrium solution: existence,
  uniqueness, stability, and robust solution methodology.
\newblock {\em Transportmetrica B: Transport Dynamics}, 1(1):52--67, 2013.

\bibitem[IS18]{iryo2018uniqueness}
Takamasa Iryo and Michael~J. Smith.
\newblock On the uniqueness of equilibrated dynamic traffic flow patterns in
  unidirectional networks.
\newblock {\em Transportation Research Part B: Methodological}, 117:757--773,
  2018.

\bibitem[Ism17]{Ismaili2017RoutingGamesOverTimeFIFO}
A.~Ismaili.
\newblock Routing games over time with {FIFO} policy.
\newblock In {\em Proceedings of the 14th International Conference on Web and
  Internet Economics (WINE)}, pages 266--280, 2017.

\bibitem[KA93]{kuwahara1993dynamic}
Masao Kuwahara and Takashi Akamatsu.
\newblock Dynamic equilibrium assignment with queues for a one-to-many od
  pattern.
\newblock In {\em Proceedings of the 12th International Symposium on
  Transportation and Traffic Theory}, pages 185--204. Elsevior Berkeley, 1993.

\bibitem[Kai20]{kaiser2020computation}
M.~Kaiser.
\newblock Computation of dynamic equilibria in series-parallel networks, 2020.

\bibitem[KM15]{KulkarniM15}
J.~Kulkarni and V.~S. Mirrokni.
\newblock Robust price of anarchy bounds via {LP} and fenchel duality.
\newblock In {\em Proceedings of the 26th Annual {ACM-SIAM} Symposium on
  Discrete Algorithms (SODA)}, pages 1030--1049, 2015.

\bibitem[Koc12]{koch2012phd}
R.~Koch.
\newblock {\em Routing Games over Time}.
\newblock PhD thesis, Technische Universität Berlin, 2012.

\bibitem[KS11]{koch2011nash}
R.~Koch and M.~Skutella.
\newblock Nash equilibria and the price of anarchy for flows over time.
\newblock {\em Theory of Computing Systems}, 49(1):71--97, 2011.

\bibitem[LW55]{lighthill1955kinematic}
M.~J. Lighthill and G.~B. Whitham.
\newblock On kinematic waves {II}. {A} theory of traffic flow on long crowded
  roads.
\newblock {\em Proceedings of the Royal Society of London. Series A.
  Mathematical and Physical Sciences}, 229(1178):317--345, 1955.

\bibitem[Mou06]{M06}
Richard Mounce.
\newblock Convergence in a continuous dynamic queueing model for traffic
  networks.
\newblock {\em Transportation Research Part B: Methodological}, 40(9):779--791,
  2006.

\bibitem[MW10]{MW10}
Fr{\'e}d{\'e}ric Meunier and Nicolas Wagner.
\newblock Equilibrium results for dynamic congestion games.
\newblock {\em Transportation science}, 44(4):524--536, 2010.

\bibitem[OSVK23]{OSV23}
N.~Olver, L.~Sering, and L.~Vargas~Koch.
\newblock Convergence of approximate and packet routing equilibria to {Nash}
  flows over time.
\newblock In {\em Proceedings of the 64th Annual Symposium on Foundations of
  Computer Science (FOCS)}, pages 123--133, 2023.

\bibitem[PS20]{pham2020dynamic}
H.~M. Pham and L.~Sering.
\newblock Dynamic equilibria in time-varying networks.
\newblock In {\em Proceedings of the 13th International Symposium on
  Algorithmic Game Theory (SAGT)}, pages 130--145, 2020.

\bibitem[Ric56]{richards1956shock}
P.~I. Richards.
\newblock Shock waves on the highway.
\newblock {\em Operations Research}, 4(1):42--51, 1956.

\bibitem[Rou05]{RoughgardenBook}
T.~Roughgarden.
\newblock {\em Selfish routing and the price of anarchy}.
\newblock {MIT} Press, 2005.

\bibitem[Ser20]{sering2020diss}
L.~Sering.
\newblock {\em Nash flows over time}.
\newblock PhD thesis, Technische Universit{\"a}t Berlin, 2020.

\bibitem[SM07]{MounceSmith}
Michael~John Smith and Richard Mounce.
\newblock Uniqueness of equilibrium in steady state and dynamic traffic
  networks.
\newblock In {\em Transportation and Traffic Theory 2007: papers selected for
  presentation at ISTTT17, a peer reviewed series since 1959}, pages 281--299.
  Elsevier, 2007.

\bibitem[SS18]{sering2018multiterminal}
L.~Sering and M.~Skutella.
\newblock Multi-source multi-sink {N}ash flows over time.
\newblock In {\em Proceedings of the 18th Workshop on Algorithmic Approaches
  for Transportation Modelling, Optimization, and Systems (ATMOS)}, volume~65,
  pages 12:1--12:20, 2018.

\bibitem[SST18]{scarsini2018dynamic}
M.~Scarsini, M.~Schr{\"o}der, and T.~Tomala.
\newblock Dynamic atomic congestion games with seasonal flows.
\newblock {\em Operations Research}, 66(2):327--339, 2018.

\bibitem[SVK19]{sering2019spillback}
L.~Sering and L.~Vargas~Koch.
\newblock Nash flows over time with spillback.
\newblock In {\em Proceedings of the 30th Annual {ACM-SIAM} Symposium on
  Discrete Algorithms (SODA)}, pages 935--945, 2019.

\bibitem[SVKZ21]{sering2021convergence}
L.~Sering, L.~Vargas~Koch, and T.~Ziemke.
\newblock Convergence of a packet routing model to flows over time.
\newblock In {\em Proceedings of the 22nd ACM Conference on Economics and
  Computation (EC)}, pages 797--816, 2021.

\bibitem[TVK21]{tauer2021fifo}
B.~Tauer and L.~Vargas~Koch.
\newblock {FIFO} and randomized competitive packet routing games.
\newblock In {\em Proceedings of the 19th International Workshop on
  Approximation and Online Algorithms (WAOA)}, pages 165--187, 2021.

\bibitem[Vic69]{vickrey1969congestion}
W.~S. Vickrey.
\newblock Congestion theory and transport investment.
\newblock {\em The American Economic Review}, 59(2):251--260, 1969.

\bibitem[ZM00]{ZM00}
Daoli Zhu and Patrice Marcotte.
\newblock On the existence of solutions to the dynamic user equilibrium
  problem.
\newblock {\em Transportation Science}, 34(4):402--414, 2000.
\newblock Publisher: INFORMS.

\bibitem[ZSV{\etalchar{+}}21]{ZiemkeEtAl2020FlowsOverTimeAsLimitOfMATSim}
T.~Ziemke, L.~Sering, L.~{Vargas Koch}, M.~Zimmer, K.~Nagel, and M.~Skutella.
\newblock Flows over time as continuous limit of packet-based network
  simulations.
\newblock {\em Transportation Research Procedia}, 52:123--130, 2021.

\end{thebibliography}
	%\printbibliography
\end{document}